\numberwithin{equation}{section}
\theoremstyle{plain}
\newtheorem{proposition}{Proposition}[section]
\newtheorem{corollary}[proposition]{Corollary}
\newtheorem{lemma}[proposition]{Lemma}
\newtheorem{theorem}[proposition]{Theorem}
\theoremstyle{definition}
\newtheorem{definition}[proposition]{Definition}
\newtheorem{example}[proposition]{Example}
\newtheorem{remark}[proposition]{Remark}
\pgfplotsset{compat=newest,ticks=none}
\DeclareMathOperator{\sgn}{sgn}
\newcommand{\R}{\mathbf{R}}
\newcommand{\C}{\mathbf{C}}
\newcommand{\D}{\mathcal{D}}
\newcommand{\hi}{\widehat{\Phi}}
\newcommand{\abs}[1]{\left\lvert #1 \right\rvert}
\newcommand{\avg}[1]{\bigl\langle #1 \bigr\rangle}
\newcommand{\hoc}[1]{#1^+}
\newcommand{\jump}[1]{\bigl[ #1 \bigr]}
\newcommand{\floor}[1]{\left\lfloor #1 \right\rfloor}
\numberwithin{equation}{section}
\theoremstyle{plain}
\tikzset{
  source/.style={circle,draw=black!100,fill=black!50,inner sep = 0,minimum size=2mm},
  sink/.style={circle,draw=black!100,fill=white,inner sep = 0,minimum size=2mm}
}
\begin{document}

\title[Peakon Problem for the modified Camassa-Holm equation]{Lax integrability and the peakon problem for the modified Camassa-Holm equation}

\author{Xiangke Chang}
\address{LSEC, Institute of Computational Mathematics and Scientific Engineering Computing, AMSS, Chinese Academy of Sciences, P.O.Box 2719, Beijing 100190, PR China, and the Department of Mathematics and Statistics, University of Saskatchewan, 106 Wiggins Road, Saskatoon, Saskatchewan, S7N 5E6, Canada.}
\email{changxk@lsec.cc.ac.cn}
\thanks{The first author was supported in part by the Natural  Sciences and Engineering Research Council of Canada (NSERC), the Department of Mathematics and Statistics of the University of Saskatchewan, PIMS postdoctoral fellowship and 
the LSEC,Institute of Computational Mathematics and Scientific Engineering Computing, AMSS, CAS}
  \author{Jacek Szmigielski}
  \address{Department of Mathematics and Statistics, University of Saskatchewan, 106 Wiggins Road, Saskatoon, Saskatchewan, S7N 5E6, Canada.}
\email{szmigiel@math.usask.ca}
\thanks{The second author was supported in part by NSERC \#163953.}


\maketitle

\begin{abstract}
\textit{Peakons} are special weak solutions of a class of nonlinear 
partial differential equations modelling non-linear phenomena such as the breakdown of regularity and the onset of shocks. We show that the natural concept of weak solutions in the case of the modified Camassa-Holm equation studied in this paper is dictated by  the distributional compatibility of its 
Lax pair and, as a result, it differs from the one proposed and used in the literature 
based on the concept of weak solutions used for equations of the Burgers type.  
Subsequently, we give a complete construction of peakon solutions satisfying 
the modified Camassa-Holm equation in the sense of distributions; our approach is based on 
solving certain inverse boundary value problem the solution of which hinges on 
a combination of classical techniques of analysis involving 
Stieltjes' continued fractions and multi-point 
  Pad\'{e} approximations. We propose sufficient conditions 
  needed to ensure the global existence of peakon solutions and 
  analyze the large time asymptotic behaviour whose special features
  include a formation of pairs of peakons which share 
  asymptotic speeds, as well as \textit{Toda-like} sorting property.

\end{abstract}
\noindent \textbf{Keywords:}
 Weak solutions, peakons, Weyl function, inverse problem, continued fractions, \\Pad{\'e} approximation.

\noindent \textbf{MSC2000 classification:}
35D30, 
35Q51, 
34K29, 
37J35, 
35Q53, 
34B05, 
41A21. 





\tableofcontents

\section{Introduction}

\vspace{0.5 cm}  
The nonlinear partial differential equation
\begin{equation}\label{eq:m1CH}
m_t+\left((u^2-u_x^2) m\right)_x=0,  \qquad 
m=u-u_{xx},
\end{equation}
is an intriguing modification of the Camassa-Holm equation (CH) \cite{CH}: 
\begin{equation} \label{eq:CH}
m_t+u m_x +2u_x m=0, \qquad  m=u-u_{xx}, 
\end{equation} 
for the shallow water waves.  
Originally, equation \eqref{eq:m1CH} appeared in the papers of Fokas \cite{fokas1995korteweg}, Fuchssteiner \cite{fuchssteiner1996some},  Olver and Rosenau\cite{olver1996tri} and was, later, rediscovered by Qiao  
 \cite{qiao2006new,qiao2007new}.  

  We note that the derivation of this equation in \cite{olver1996tri} followed from the general method of tri-Hamiltonian duality applied to the bi-Hamiltonian representation of the modified Korteweg-de Vries equation (see also 
  \cite{kang-liu-olver-qu} for a recent generalization of this idea). Since the CH equation can be obtained from the Korteweg-de Vries equation
by the same tri-Hamiltonian duality, it is therefore natural to refer to equation \eqref{eq:m1CH} as the modified CH equation (mCH), in full agreement with other authors \cite{gui2013wave,liu2014orbital}, even though the name FORQ to denote 
\eqref{eq:m1CH} is sometimes used as well (e.g. \cite{Himonas4peak}, \cite{himonas2014cauchy}).  

We are interested in the class of non-smooth solutions of \eqref{eq:m1CH} given by the \textit{peakon ansatz} \cite{CH,qiao2012integrable,gui2013wave}, that is, we assume 
\begin{equation} \label{eq:peakonansatz}
u=\sum_{j=1}^n m_j (t)e^{-\abs{x-x_j(t)}}, \, 
\end{equation} 
where all coefficients $m_j(t)$ are taken to be positive, 
and hence 
\begin{equation*}
~m=u-u_{xx}=2\sum_{j=1}^n m_j \delta_{x_j}
\end{equation*} 
is a positive discrete measure.   The relevance of this 
ansatz proved to be supported by the fact that these special solutions 
seem to capture main attributes of solutions of this class of equations: the breakdown of regularity which can be interpreted as collisions of peakons, and the nature of 
long time asymptotics which can be loosely described as peakons becoming 
free particles in the asymptotic region \cite{bss-moment}.  For the CH equation peakons do not 
exhibit any asymptotic cooperative behaviour, while for other equations, for example 
for the Geng-Xue equation \cite{geng-xue:cubic-nonlinearity} or the Novikov equation \cite{novikov:generalizations-of-CH},  one observes 
pairing or even more elaborate patterns of clustering of peakons in the asymptotic 
region \cite{kardell-PhD, kardell-lundmark}.  At the same time, for still not entirely clear reasons, the 
peakon dynamics has an ever growing number of connections with 
classical analysis.  This was observed for the first time 
in the CH case \cite{bss-stieltjes} where the dynamics of peakons was shown to be 
related, in fact, solved, in terms of the classical theory of Stieltjes continued fractions - the connection that goes through the fundamental theory of the inhomogeneous string of M.G. Krein \cite{dym-mckean-gaussian} - eventually leading to sharp estimates on the patterns 
of the breakdown of regularity \cite{bss-moment, McKean-Asianbreakdown, McKean-breakdown}.  

Although it does not seem possible in a short introduction to 
give justice to the enormous literature on the CH equation we would like to 
mention a few works related to the issues raised in the present paper.  Thus 
\cite{constantin-escher} discusses the concept of weak solutions 
for the CH equation that set the stage for numerous studies of 
related equations as well as it gave the first general results regarding the wave breaking for CH.  In 
\cite{constantin-strauss} the authors discuss the issue of stability of CH peakons, 
while the stability of multipeakons is discussed in \cite{molinet-n-peakons, molinet-multipeakons}.  There is also a considerable literature on 
the use of peakons in designing numerical schemes (see e.g. \cite{holden-numer-peakons})  and the issue of continuing solutions past the 
breakdown of regularity (collisions of peakons) \cite{holden-dissip-peakons}.  

Meanwhile the literature on the peakon ansatz has grown considerably since its discovery  in \cite{CH}.
In the following years the peakon ansatz was successfully applied to another, well studied by now,  equation, namely 
the Degasperis-Procesi equation \cite{dp}
\begin{equation} \label{eq:DP}
m_t+u m_x +3u_x m=0, \qquad  m=u-u_{xx}, 
\end{equation} 
which despite its superficial similarity to the CH equation \eqref{eq:CH} has in addition 
shock solutions \cite{coclite-karlsen-DPwellposedness, coclite-karlsen-DPuniqueness, lundmark-shockpeakons}, while its peakon sector leads to new questions 
regarding Nikishin systems \cite{ns}  studied in approximation theory 
\cite{ls-cubicstring, bertola-gekhtman-szmigielski:cauchy}.  For 
potential applicability to water wave theory the reader is 
invited to consult \cite{Constantin-Lannes}; for a discussion of weak solutions see \cite{liu-escher}; \cite{liu1, liu2} present important results regarding stability, and finally \cite{sz1, sz2} deal with collisions of peakons and the onset of shocks in the form of shockpeakons\cite{lundmark-shockpeakons}.  

Another feature of peakon sectors of Lax integrable 
peakon equations is the \textit{omnipresence} of \textit{total positivity}\cite{gantmacher-krein, karlin-TP}.  In its simplest 
form, namely speaking of matrices, a totally positive matrix is a matrix whose minors, of all sizes, are positive.  This concept is then generalized to 
kernels of linear integral equations.  
Total positivity appears  in all peakon 
problems known to us, although admittedly we cannot yet explain from first principles the underlying reasons for the presence of such a strong form of 
positivity; however we remark that 
peakons are in a nutshell disguised \textit{oscillatory systems} in the sense 
of Gantmacher and Krein \cite{gantmacher-krein}.

What is germane to this paper is that the peakon problem at hand is 
coming from studying a distibutional Lax pair which forces us to view \eqref{eq:m1CH} as a distribution equation, requiring in particular that we 
define the product $u_x^2 m$.  With this in mind we show in Appendix \ref{lax_mch} that the 
choice consistent with Lax integrability is to take $u_x^2 m$ to mean $\langle u_x^2 \rangle m$, 
where $\langle f \rangle$ denotes the average function (the arithmetic average 
of the right hand and left hand limits). Subsequently, equation \eqref{eq:m1CH} 
reduces to the system of ODEs: 
\begin{equation}\label{eq:xmODE}
\dot m_j=0, \qquad \dot x_j=u(x_j)^2-\avg{u_x^2 }(x_j),  
\end{equation}
or, more explicitly, assuming the ordering condition $x_1< x_2<\cdots<x_n$,
\begin{equation}\label{mCH_ode}
\dot{m}_j=0, \qquad 
\dot{x}_j=2\sum_{\substack{1\leq k\leq n,\\k\neq j}}m_jm_ke^{-|x_j-x_k|}+4\sum_{1\leq i<j<k\leq n}m_im_ke^{-|x_i-x_k|}.  
\end{equation}
In broad terms we can say that our general interest in \eqref{eq:m1CH} is to understand how integrability manifests itself in the non-smooth sector of solutions, in particular how it determines the properties of, initially, ill-defined operations, 
which acquire well-defined meaning thanks to the condition of Lax integrability.  

We note that the system given by \eqref{mCH_ode} is not the same as the one proposed in \cite{gui2013wave}; 
the difference being precisely in the definition of the singular product $u_x^2 m$. 
We clarify the details of the difference in the remark below.  
\begin{remark}\label{re:olver}
  In \cite{gui2013wave}, Gui, Liu, Olver  and Qu showed that the mCH equation admits weak n-peakon solutions with $x_j,m_j$ satisfying
  \begin{equation}\label{mch_ode_GLQ}
   \dot{m}_j=0, \qquad 
  \dot{x}_j=\frac{2}{3}m_j^2+2\sum_{\substack{1\leq k\leq n,\\k\neq j}}m_jm_ke^{-|x_j-x_k|}+4\sum_{1\leq i<j<k\leq n}m_im_ke^{-|x_i-x_k|},  
    \end{equation}
 (these equations also appear as a special case in \cite{qiao2012integrable}); we note that these equations differ from \eqref{mCH_ode} by the constant term $\frac{2}{3}m_j^2$. For identical $m_j$ this term can be absorbed by redefining $x_j$ but in general 
 this cannot be done without violating the invariance of $|x_i-x_k|$.  
 It is not difficult to verify that, following the definition of weak solutions 
 adopted in \cite{gui2013wave}, 
 the singular product $u_x^2m$ appearing in \eqref{eq:m1CH} equals to 
 \begin{equation*}
 \big(\frac{\avg{u_x^2}+2\avg{u_x}^2}{3}\big) m, 
 \end{equation*} 
 which is an abbreviated way of saying that the value of the multiplier 
 of $\delta_{x_j}$ equals $\frac{\avg{u_x^2}+2\avg{u_x}^2}{3}(x_j)$.  This 
 is markedly different than what the Lax integrability implies for the multiplier, namely $\avg{u_x^2}(x_j)$.  
 Indeed, in our case, as we shall prove in Appendix \ref{lax_mch}, 
 \eqref{mCH_ode}  can be derived from the compatibility condition of a distribution  Lax pair, which in turn leads to explicit solutions of these equations by the 
 inverse spectral method, following a successful solution to the appropriate inverse problem.  Finally, in view of the comments above our solution is 
 also a solution to the special case of the peakon problem in \cite{gui2013wave}
 for which all the masses $m_j$ are assumed equal.

 For other work related to \eqref{eq:m1CH} done recently the reader is invited to consult \cite{liu-liu-qu, liu2014orbital, bies-gorka-reyes}.  
\end{remark} 
 \begin{remark} 
 Another important feature that sets apart our definition of peakons is that 
 the Sobolev $H^1$ norm of $u$  defined by \eqref{eq:peakonansatz} for peakons satisfying 
 equations \eqref{mCH_ode}  is preserved.  In other words
 \begin{equation*}
\frac{d}{dt}||u||_{H^1}=0. 
\end{equation*}
Even though this point is fully explained in the followup shorter paper 
\cite{chang-szmigielski-short1mCH}, we nevertheless compute $||u||^2_{H^1}$ in Corollary \ref{cor:sH1norm} 
 in terms of spectral variables, obtaining trace-like identity 
akin to the one known from the CH theory \cite{McKean-Fred}.  We stress 
that the time preservation of the $H^1$ norm is of considerable importance 
if one recalls that  one of the Hamiltonians 
defining the theory of \eqref{eq:m1CH} is 
$\mathcal{H}_1=||u||^2_{H^1}$ (see, for example, \cite{kang-liu-olver-qu}).  
\end{remark}

In the present paper, we shall formulate and apply an inverse spectral method to solve the peakon ODEs 
\eqref{mCH_ode} and hence \eqref{eq:m1CH} under the following assumptions: 
\begin{enumerate}
\item all $m_k$ are positive, 
\item the initial positions are assumed to be ordered as $x_1(0)<x_2(0)<\cdots<x_n(0)$. 
\end{enumerate} 

\begin{remark}
  If $m_k$ are negative, the corresponding problem may be solved by the transformation $m_k\rightarrow -m_k$. This results in pure antipeakon solutions.
\end{remark}
In the remainder of this introduction we outline the content of 
individual sections, highlighting the main results.  
Thus in Section \ref{sec:Lax} we reformulate the Lax pair in a way suitable 
for further analysis; in particular, for the peakon ansatz we obtain 
a difference equation and we solve explicitly the affiliated initial 
value problem.  This section uses in an essential way the result from Appendix 
\ref{lax_mch} about the admissible ways of defining the distributional 
Lax pair.  

In Section \ref{sec:FSM} we give a full characterization of 
the spectrum of the boundary value problem from Section \ref{sec:Lax}.  
We prove that the spectrum is positive 
and simple, and in this sense the mCH peakons confirm the 
``experimental" fact that all known integrable peakon equations have a substantial amount of positivity built in. 
The spectral data, which involves not only the eigenvalues but also 
some positive constants known in scattering theory as the \textit{norming 
constants}, are elegantly encoded in the Weyl function $W(z)$ of the boundary 
value problem and the main theorem, namely Theorem \ref{thm:W}, which 
states that $W(z)$ is a shifted Stieltjes transform is proven in its 
entirety therein.   

In Section \ref{sec:IP} we solve the inverse boundary value problem 
which in a nutshell amounts to reconstructing the measures $g, h$ appearing 
in the original formulation of the boundary value problem \eqref{eq:xLaxBVP}
from the spectral data encoded in the Weyl function $W(z)$.  We subsequently give 
two constructions of the inverse map: one is based on recurrence relations and 
Stieltjes' method of continued fractions, the other method is 
explicit and it involves certain Cauchy-Jacobi interpolation problem
which is shown in Theorem \ref{thm:detsolISP} to admit an explicit solution in terms of Cauchy-Stieltjes-Vandermonde matrices introduced in 
Definition \ref{def:CSV}.   

In Sections \ref{sec:evenpeakons} and \ref{sec:oddpeakons} we analyze 
the actual peakon solutions $u$ constructed out of 
the peakon ansatz \eqref{eq:peakonansatz} and the determintal solution of 
the inverse problem studied in Section \ref{sec:IP}.  This material is covered in two 
sections because there are some subtle differences in the character of solutions 
depending on whether the total number of peakons, $n$, is even or odd.  
In either case we present and prove sufficient conditions for the 
global existence of peakon solutions.  This is done in Theorems 
\ref{thm:global} and \ref{thm:global(odd)}.  We also give 
large time asymptotic formulas for peakons, showing that 
in both cases the peakons form asymptotic pairs.

\section{The Lax formalism: the boundary value problem} \label{sec:Lax}
The Lax 
pair for \eqref{eq:m1CH} reads \cite{qiao2006new}: 
\begin{equation} \label{eq:xtLax}
\Psi_x=\frac12U \Psi, \quad  \Psi _t =\frac12 V \Psi, \quad  \Psi=\begin{bmatrix} \Psi_1\\\Psi_2 \end{bmatrix} 
\end{equation} 
with 
\begin{equation*} 
U=\begin{bmatrix} -1 &\lambda m\\ -\lambda m& 1 \end{bmatrix}, \qquad  
V=\begin{bmatrix} 4\lambda^{-2} + Q & -2\lambda^{-1} (u-u_x)-\lambda m Q\\
2\lambda^{-1}(u+u_x)+\lambda m Q & -Q \end{bmatrix}, \quad Q=u^2-u_x^2, \quad \lambda \in \C. 
\end{equation*} 
Performing the gauge transformation $\Phi=\textrm{diag}(\frac{e^{\frac x2}}{\lambda}, e^{-\frac x2}) \Psi$ results in a simpler $x$-equation
\begin{equation}\label{eq:xLax}
\Phi_x=\begin{bmatrix}0 & h\\
-z g& 0 \end{bmatrix} \Phi, \qquad    g=\sum_{j=1} ^ng_j \delta_{x_j}, \qquad h=\sum_{j=1} ^nh_j \delta_{x_j}, 
\end{equation} 
where $g_j=m_j e^{-x_j}, \, h_j =m_j e^{x_j}, \, z=\lambda^2 $. For future use note that $g_jh_j=m_j^2$.

We will be interested in solving \eqref{eq:xLax} 
subject to boundary conditions $\Phi_1(-\infty)=0, \, \Phi_2(+\infty)=0$.  
To make the boundary value problem 
\begin{equation}\label{eq:xLaxBVP}
\Phi_x=\begin{bmatrix}0 &h\\
-z g& 0 \end{bmatrix} \Phi, \qquad \Phi_1(-\infty)=\Phi_2(+\infty)=0, 
\end{equation}
well posed we need to define the multiplication of the measures $h$ and $g$ by $\Phi$.  
Guided by the results of Appendix \ref{lax_mch} 
we require that $\Phi$ be left continuous and we define the terms $\Phi_{a}\delta_{x_j}=\Phi_a(x_j)\delta_{x_j}, a=1,2$.  This choice makes the Lax pair well defined as a distributional Lax pair and, as it is shown in the Appendix \ref{lax_mch},  
the compatibility condition of the $x$ and $t$ components of the Lax pair indeed implies \eqref{eq:xmODE}.

The solution $\Phi$ is a piecewise constant function which, for convenience, we can normalize by setting $\Phi_2(-\infty)=1$.  The distributional boundary value problem \eqref{eq:xLaxBVP} is in our special case of the discrete measure $m$ equivalent to a finite difference equation. 
\begin{lemma}\label{lem:forwardR}
Let $q_k=\Phi_1(x_{k}+), \, ~p_k=~\Phi_2(x_{k}+),$ then the difference form of the 
boundary value problem reads: 
\begin{equation} \label{dstring}
\begin{gathered}
\begin{aligned}
     q_{k}-q_{k-1}&=h_kp_{k-1}, & 1\leq k\leq n, \\
     p_{k}-p_{k-1}&=-z g_kq_{k-1},& 1\leq k\leq n,\\
      q_0=0, \quad  p_0=1&, \quad p_{n}=0.  &  
  \end{aligned}
\end{gathered}
\end{equation} 
\end{lemma}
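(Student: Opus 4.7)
The plan is to integrate the distributional system \eqref{eq:xLaxBVP} across the atoms of $g$ and $h$ and exploit the fact that between consecutive points $x_{k-1}$ and $x_k$ both measures vanish, so the right-hand side is zero and $\Phi$ is constant on each such interval. Thus on $(x_{k-1},x_k)$ one has $\Phi_1(x_k-)=\Phi_1(x_{k-1}+)=q_{k-1}$ and $\Phi_2(x_k-)=p_{k-1}$. On the unbounded intervals the same argument, combined with the boundary data $\Phi_1(-\infty)=0$, $\Phi_2(-\infty)=1$, and $\Phi_2(+\infty)=0$, yields $q_0=0$, $p_0=1$, and $p_n=0$.

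Next I would compute the jump at each $x_k$. Integrating \eqref{eq:xLaxBVP} over an arbitrarily small interval around $x_k$ and invoking the convention established in Appendix \ref{lax_mch} that $\Phi$ is left continuous with $\Phi_a\delta_{x_k}=\Phi_a(x_k)\delta_{x_k}=\Phi_a(x_k-)\delta_{x_k}$ for $a=1,2$, one obtains
\begin{equation*}
\Phi_1(x_k+)-\Phi_1(x_k-)=h_k\,\Phi_2(x_k-), \qquad \Phi_2(x_k+)-\Phi_2(x_k-)=-z g_k\,\Phi_1(x_k-).
\end{equation*}
Substituting the piecewise-constant identifications from the previous step turns these jump relations into the forward recurrences $q_k-q_{k-1}=h_k p_{k-1}$ and $p_k-p_{k-1}=-z g_k q_{k-1}$, which together with the already derived boundary values $q_0=0$, $p_0=1$, $p_n=0$ give exactly \eqref{dstring}.

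The only genuinely subtle point, and the one that deserves emphasis rather than any hard calculation, is the role of the left-continuity convention. Because the multiplier of $\delta_{x_k}$ must be a single scalar and not an ambiguous value of a function with a jump, one has to commit to one of the one-sided limits (or to an average). The choice forced by the distributional Lax formulation in Appendix \ref{lax_mch} is the left limit, and this is precisely what makes the recurrence triangular in the form \eqref{dstring}: each new pair $(q_k,p_k)$ is determined explicitly from the previous pair $(q_{k-1},p_{k-1})$. Any other convention (for example, the symmetric average) would intertwine $(q_k,p_k)$ with $(q_{k-1},p_{k-1})$ on both sides, yielding a different — and, in view of the appendix, inconsistent — difference system. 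This is where I would take the most care; the rest is just bookkeeping on the piecewise-constant solution.
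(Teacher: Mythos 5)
Your proof is correct and is precisely the argument the paper leaves implicit: Lemma \ref{lem:forwardR} is stated there without proof, being regarded as an immediate consequence of the piecewise constancy of $\Phi$ between the atoms of $g,h$, the jump relations at each $x_k$ under the left-continuity convention, and the boundary data. One caveat on your closing remark: Appendix \ref{lax_mch} (Corollary \ref{cor:invreg}) actually admits \emph{two} invariant regularizations, the left limit and the right limit — the latter is used in Section \ref{sec:IP} to define the right boundary value problem \eqref{eq:xLaxBVP+} — so what the appendix forces is only the exclusion of the symmetric average, not the choice of left over right; the left limit is simply the paper's choice for this particular boundary value problem.
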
 
An easy proof by induction leads to the following corollary.  
\begin{corollary} \label{cor:pq-degrees} 

$q_{k}(z)$  is a polynomial of degree $\lfloor\frac{k-1}{2}\rfloor$ in $z$, and $p_{k}(z)$ is a polynomial of degree $\lfloor\frac{k}{2}\rfloor$, respectively. 

\end{corollary}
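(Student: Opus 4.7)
The plan is to induct on $k$. The base cases are immediate from the initial conditions $q_0 = 0,\, p_0 = 1$ (with $p_0$ of degree $\floor{0/2} = 0$), and one step of the recurrence gives $q_1 = h_1,\, p_1 = 1$, both of degree $0 = \floor{0/2} = \floor{1/2}$. For the inductive step, the relation $q_k = q_{k-1} + h_k p_{k-1}$ combined with the inductive hypothesis yields
\begin{equation*}
\deg q_k \leq \max\bigl(\floor{(k-2)/2},\,\floor{(k-1)/2}\bigr) = \floor{(k-1)/2},
\end{equation*}
and similarly $p_k = p_{k-1} - z g_k q_{k-1}$ gives $\deg p_k \leq \max(\floor{(k-1)/2},\,1 + \floor{(k-2)/2}) = \floor{k/2}$, invoking the elementary identity $1 + \floor{(k-2)/2} = \floor{k/2}$.

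To upgrade these upper bounds to exact degrees, I would track the leading coefficients, say $\alpha_k$ of $q_k$ and $\beta_k$ of $p_k$. A parity split on the recurrence yields $\alpha_{2j} = \alpha_{2j-1} + h_{2j}\beta_{2j-1},\, \beta_{2j} = -g_{2j}\alpha_{2j-1}$ when $k = 2j$, and $\alpha_{2j+1} = h_{2j+1}\beta_{2j},\, \beta_{2j+1} = \beta_{2j} - g_{2j+1}\alpha_{2j}$ when $k = 2j+1$. Since the signs of these coefficients alternate with $\floor{k/2}$, the natural move is to introduce the sign-normalized quantities $\tilde\alpha_k = (-1)^{\floor{(k-1)/2}}\alpha_k$ and $\tilde\beta_k = (-1)^{\floor{k/2}}\beta_k$ and to verify that they satisfy recurrences whose coefficients involve only positive monomials in the $g_j, h_j$. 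Combined with the standing assumption $m_j > 0$, which forces $g_j, h_j > 0$, this propagates strict positivity of $\tilde\alpha_k, \tilde\beta_k$ from the base case to all $k \geq 1$, so $\alpha_k$ and $\beta_k$ do not vanish and the stated degrees are exact.

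There is no real obstacle here; the only thing to be careful about is bookkeeping, i.e.\ keeping the four parity subcases straight and confirming that the chosen sign convention meshes with the parity shift inherent in the recurrences. Beyond that, the corollary is a one-line consequence of the difference system \eqref{dstring} and the positivity of the weights.
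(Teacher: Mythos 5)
Your proposal is correct and takes essentially the same route as the paper, which simply observes that the corollary follows by an easy induction on the difference system \eqref{dstring}; your argument is that induction written out in full. The sign-normalized leading coefficients $\tilde\alpha_k$, $\tilde\beta_k$ are exactly the bookkeeping needed to upgrade the degree bounds to exact degrees, and the positivity pattern you propagate is consistent with what appears a posteriori in the explicit formulas of Corollary \ref{cor:solqkpk}.
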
 

\begin{remark} Note that the difference form of the boundary value problem admits a simple matrix presentation, namely a $2\times 2$ matrix encoding of \eqref{dstring}
\begin{equation}\label{transition}
\begin{bmatrix}
  q_{k}\\
  p_{k}
\end{bmatrix}
=T_k\begin{bmatrix}
  q_{k-1}\\
  p_{k-1}
\end{bmatrix}, \qquad\qquad
T_k=\begin{bmatrix}
  1& h_k\\
  -z g_k&1
\end{bmatrix}.
\end{equation}
We point out that the transition matrix $T_k$ is different from the difference equation for the inhomogeneous 
string boundary value problem  $D^2 v =- zg v, \, v(0)=v(1)=0$ discussed in \cite{ls-cubicstring} (Appendix A) for which 
$T_k=\begin{bmatrix}
  1& l_{k-1}\\
  -z g_k&1-zg_kl_{k-1} 
\end{bmatrix}, $ 
thus an element of the group $SL_2(\C)$.  
\end{remark}

We can obtain more precise information about polynomials $p_k, q_k$ by studying directly the solutions to the initial value problem 
\begin{equation}\label{eq:xLaxIVP}
\Phi_x=\begin{bmatrix}0 &h\\
-z g& 0 \end{bmatrix} \Phi, \qquad \Phi_1(-\infty)=0, \quad \Phi_2(-\infty)=1,  
\end{equation}
with the same rule regarding the multiplication of discrete measures $g,h$
by piecewise smooth, left-continuous, functions $f$ as specified above.  
With this proviso expressions like $$ \int_{-\infty}^x f(\xi)\, g(\xi)  d\xi\stackrel{def}{=}
\int\limits_{\xi <x}f(\xi) \, g(\xi) d\xi $$ uniquely define 
piecewise constant functions which we choose to be left continuous.  The same applies to 
iterated integrals over the regions $\{\xi_1<\xi_2<\dots \xi_k <x\}$.   
For example 
\begin{equation*} 
    \int \limits_{\xi_1<\xi_2<x}  f(\xi_1) h(\xi_1)d\xi_1\, g(\xi_2)d\xi_2
\end{equation*} 
 is well defined.  With this notation in place we obtain the following 
 characterization of $\Phi_1(x)$ and $\Phi_2(x)$.  
 \begin{lemma} \label{lem:seriessol-}  Let us set 
 \begin{equation*} 
             \Phi_1(x)=\sum\limits_{0\leq k} \Phi_1^{(k)}(x) z^k, \quad \qquad \Phi_2(x)=            \sum\limits_{0\leq k} \Phi_2^{(k)}(x) z^k.  
 \end{equation*}
Then 
\begin{equation*} 
    \Phi_1^{(0)}(x)=\int\limits_{\eta_0 <x}h(\eta_0) d\eta_0, \quad \qquad \Phi_2^{(0)}(x)=1
\end{equation*} 
~for $k=0$, otherwise
\begin{subequations} 
\begin{align}
\Phi_1^{(k)}(x)&=(-1)^k \int\limits_{\eta_0<\xi_1<\eta_1<\dots<\xi_k<\eta_k<x}
\big[\prod_{p=1}^k h(\eta_p)g(\xi_p)\big]   h(\eta_0)\, \,  d\eta_0 d\xi_1\dots d\eta_k,  \,  &\\
\Phi_2^{(k)}(x)&=(-1)^k \int\limits_{\xi_1<\eta_1<\dots<\xi_k<\eta_k<x}
\big[\prod_{p=1}^k g(\eta_p)h(\xi_p)\big] \,\,  d\xi_1\dots d\eta_k.   \, &
\end{align} 
\end{subequations}
If the points of the support of  the discrete measure $g$ (and $h$) are ordered $x_1<x_2<\dots<x_n$ 
then 
 \begin{subequations} 
\begin{align}
\Phi_1^{(k)}(x)&=(-1)^k \sum_{\substack{j_0<i_1<j_1<\dots< i_k< j_k\\x_{j_k}<x}}
\, \big[\prod_{p=1}^k h_{j_p}g_{i_p}\big] \,\,  h_{j_0} ,  \,  &\\
\Phi_2^{(k)}(x)&=(-1)^k\sum_{\substack{i_1< j_1<\dots< i_k< j_k\\x_{j_k}<x}}
\, \big[\prod_{p=1}^k g_{j_p}h_{i_p}\big] \,\,   .   \, &
\end{align} 
\end{subequations} 
\end{lemma}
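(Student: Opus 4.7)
The plan is to convert the initial value problem \eqref{eq:xLaxIVP} into a pair of coupled integral equations, iterate them, and read off the coefficient of each power of $z$.

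First, using the convention that $\Phi$ is left-continuous and the rule $f\,g = f(x_j)\delta_{x_j}$ (and similarly for $h$), I would integrate \eqref{eq:xLaxIVP} from $-\infty$ to $x$ and use the initial data to write
\begin{equation*}
\Phi_1(x)=\int_{\eta<x} h(\eta)\Phi_2(\eta)\,d\eta,\qquad
\Phi_2(x)=1 - z\int_{\xi<x} g(\xi)\Phi_1(\xi)\,d\xi,
\end{equation*}
where both right-hand sides are manifestly left-continuous. Substituting formal power series $\Phi_a(x)=\sum_{k\ge 0}\Phi_a^{(k)}(x)z^k$ and comparing coefficients gives the recursion
\begin{equation*}
\Phi_2^{(0)}(x)=1,\qquad \Phi_1^{(k)}(x)=\int_{\eta<x}h(\eta)\Phi_2^{(k)}(\eta)\,d\eta,\qquad
\Phi_2^{(k)}(x)=-\int_{\xi<x}g(\xi)\Phi_1^{(k-1)}(\xi)\,d\xi
\end{equation*}
for $k\ge 1$.

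Next I would prove the two displayed iterated integral formulas simultaneously by induction on $k$. The base case $k=0$ is immediate from $\Phi_2^{(0)}=1$ and one integration. For the inductive step, substituting the formula for $\Phi_1^{(k-1)}$ into the defining integral for $\Phi_2^{(k)}$ simply prepends one more pair of variables $\xi_k<\eta_k$ (with $\eta_k<x$) to the simplex, and the alternating sign $(-1)^k$ accumulates from the factor $-1$ in the second equation of the system. Similarly, substituting the resulting $\Phi_2^{(k)}$ into the integral for $\Phi_1^{(k)}$ prepends $\eta_0$ at the bottom of the chain. The strict inequalities in the simplex reflect the left-continuous convention: the value of the integrand at an atom $x_j$ never sees the jump contributed by $\delta_{x_j}$ itself.

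Finally, for the discrete measure case I would specialize to $g=\sum_i g_i\delta_{x_i}$ and $h=\sum_j h_j\delta_{x_j}$ with $x_1<\dots<x_n$. The integrals over ordered simplices reduce to sums over strictly increasing tuples of indices, because each factor $h(\eta_p)d\eta_p$ or $g(\xi_p)d\xi_p$ contributes a term $h_{j_p}$ or $g_{i_p}$ when the corresponding variable coincides with some atom, and the strict inequalities force the indices to be strictly increasing. The condition $x_{j_k}<x$ is precisely what remains of the outer bound. The main thing to be careful about is the alternation between $h$'s and $g$'s: $\Phi_1^{(k)}$ starts with an $h_{j_0}$ factor (reflecting the fact that the outermost integration in $\Phi_1$ is against $h$) while $\Phi_2^{(k)}$ starts from $\Phi_2^{(0)}=1$ and therefore has matched pairs $(\xi_p,\eta_p)$ only. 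The only delicate point—and the one I would dwell on—is verifying that the strict-inequality convention is preserved at every substitution; once that is in hand, both statements follow by straightforward bookkeeping.
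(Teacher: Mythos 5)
Your proposal is correct and follows essentially the same route as the paper: converting \eqref{eq:xLaxIVP} into the coupled Volterra-type integral equations, iterating (your coefficient-of-$z^k$ recursion and induction is just a formalized version of the paper's ``elementary iterations''), and then evaluating the simplex integrals as sums over strictly increasing index tuples in the discrete case. The only slight imprecision is the bookkeeping of where the new integration variables sit in the chain (the new variable from each substitution is the outermost, largest one, and the pairing structure reorganizes under relabeling), but this does not affect the validity of the argument.
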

\begin{proof}
First we observe that solving equation \eqref{eq:xLaxIVP} is 
equivalent to solving the system of integral equations: 
\begin{equation*} 
    \Phi_1(x)=\int\limits_{\xi<x} \Phi_2(\xi)h(\xi) d\xi, \qquad 
    \Phi_2(x)=1-z \int\limits_{\xi<x} \Phi_1(\xi)g(\xi) d\xi, 
\end{equation*} 
with in turn implies
\begin{align*} 
    \Phi_1(x)&=\int\limits_{\eta_0 <x}h(\eta_0) d\eta_0-z \int\limits_{\xi_1<\eta_1<x} h(\eta_1) g(\xi_1)\Phi_1(\xi_1)  d\xi_1 d\eta_1, \\
    \Phi_2(x)&=1-z \int\limits_{\xi_1<\eta_1<x} g(\eta_1) h(\xi_1) \Phi_2(\xi_1)d\xi_1 d\eta_1.  
\end{align*} 
Elementary iterations yield the final result in the integral form.  Finally, 
once the ordering conditions is in place, the evaluation of integrals as sums 
follows.  
\end{proof} 
We will now introduce a multi-index notation, later used to facilitate writing 
explicit formulas for peakon solutions, but also helpful in capturing the 
properties of solutions to \eqref{dstring}.  A similar notation 
turned out to be very helpful in stating and proving the Canada Day Theorem 
in \cite{gls} (see also \cite{hls}).  

The formulas in Lemma \ref{lem:seriessol-} involve a choice of $j$-element index sets $I$ and~$J$
from the set $[k] = \{ 1,2,\dots,k \}$.
We will use the notation
$\binom{[k]}{j}$ for the set of all $j$-element subsets of $[k]$, listed in increasing order; for example $I\in \binom{[k]}{j}$ means that 
$I=\{i_1, i_2,\dots, i_j\}$ for some increasing sequence $i_1 < i_2 < \dots < i_j\leq ~k$. 
Furthermore, given the multi-index $I$ we will abbreviate $g_I=g_{i_1}g_{i_2}\dots g_{i_j}$ etc.
 
\begin{definition}\label{def:bigIndi} Let $I,J \in \binom{[k]}{j}$, or $I\in \binom{[k]}{j+1},J \in \binom{[k]}{j}$.  
\mbox{}

Then  $I, J$ are said to be \emph{interlacing} if 
\begin{equation*}
  \label{eq:interlacing}
    i_{1} <j_{1} < i_{2} < j_{2} < \dotsb < i_{j} <j_{j}
\end{equation*}
or, 
\begin{equation*}
    i_{1} <j_{1} < i_{2} < j_{2} < \dotsb < i_{j} <j_{j}<i_{j+1}, 
\end{equation*}
in the latter case.  
We abbreviate this condition as $I < J$ in either case, and, furthermore, 
use the same notation, that is $I<J$,   for $I\in \binom{[k]}{1}, J \in \binom{[k]}{0}$.
\end{definition} 
\begin{remark} 
We point out that the multi-indices $I, J$ satisfying $I<J$ 
are called in \cite{gls} \emph{strictly interlacing}.   
\end{remark} 
\begin{corollary}\label{cor:solqkpk}
Let $q_k=\Phi_1(x_{k}+), \, ~p_k=~\Phi_2(x_{k}+),$ then the difference form of the 
initial value problem \eqref{eq:xLaxIVP} reads: 
\begin{equation} \label{dstringIVP}
\begin{gathered}
\begin{aligned}
     q_{k}-q_{k-1}&=h_kp_{k-1}, & 1\leq k\leq n, \\
     p_{k}-p_{k-1}&=-z g_kq_{k-1},& 1\leq k\leq n,\\
      q_0=0, &\quad  p_0=1 & 
  \end{aligned}
\end{gathered}
\end{equation} 
whose unique solution (restoring the dependence on $z$ in $q_k$ and $p_k$) is given by: 
\begin{subequations}
\begin{align}
q_k(z)&=
\sum_{j=0}^{\lfloor\frac{k-1}{2}\rfloor}\Big(\sum_{\substack{I\in \binom{[k]}{j+1}, J\in \binom{[k]}{j}\\ I<J}}
\, h_Ig_J\Big) (-z)^j,  \\
p_{k}(z)&=1+\sum_{j=1}^{\lfloor\frac{k}{2}\rfloor}\Big(\sum_{\substack{I,J \in \binom{[k]}{j}\\ I<J}} h_I g_J
\, \Big)(-z)^j.  
\end{align}
\end{subequations} 
\end{corollary}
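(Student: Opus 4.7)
The proof naturally splits into two parts: establishing the discrete recurrence \eqref{dstringIVP} with initial data $q_0=0,\ p_0=1$, and then extracting the closed-form expressions for $q_k(z)$ and $p_k(z)$ from the series representation of Lemma \ref{lem:seriessol-}.

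For the first part, my plan is to integrate the distributional equation \eqref{eq:xLaxIVP} across each mass point $x_k$. Because $\Phi$ is taken to be left continuous and the product $f\delta_{x_k}$ is defined as $f(x_k)\delta_{x_k}$ (with $f(x_k)=f(x_k-)$), the jumps at $x_k$ evaluate to $\Phi_1(x_k+)-\Phi_1(x_k-)=h_k\Phi_2(x_k-)$ and $\Phi_2(x_k+)-\Phi_2(x_k-)=-zg_k\Phi_1(x_k-)$. Piecewise constancy on the open gaps gives $\Phi_a(x_k-)=\Phi_a(x_{k-1}+)$, and this is precisely the recurrence in \eqref{dstringIVP}; the initial values $q_0=0,\ p_0=1$ come from the normalization of \eqref{eq:xLaxIVP} evaluated to the left of $x_1$. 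Uniqueness is then an obvious induction on $k$.

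For the closed-form expressions, I would specialize the series of Lemma \ref{lem:seriessol-} at $x=x_k+$. There is a clash of notation to negotiate: the letter $k$ indexing the power of $z$ in Lemma \ref{lem:seriessol-} becomes $j$ in Corollary \ref{cor:solqkpk}, while $k$ in the corollary is the mass index which, through the constraint $x_{j_k}<x$, forces every chosen site to lie in $[k]$. After this restriction a direct relabeling identifies the $h$-indices in the lemma (called $j_0,\dots,j_j$ there) with the increasing set $I$ of Definition \ref{def:bigIndi}, and the $g$-indices (called $i_1,\dots,i_j$ there) with $J$. The chain $j_0<i_1<j_1<\dots<i_j<j_j$ arising for $\Phi_1$ matches the strict interlacing $I<J$ with $|I|=|J|+1$, while the chain $i_1<j_1<\dots<i_j<j_j$ for $\Phi_2$ matches $I<J$ with $|I|=|J|$. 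The prefactor $(-1)^j$ combines with $z^j$ to produce $(-z)^j$, and the range of $j$ follows from the elementary fact that a strictly interlacing pair of sizes $(j+1,j)$ or $(j,j)$ inside $[k]$ requires $k\ge 2j+1$ or $k\ge 2j$, recovering the upper bounds $\lfloor(k-1)/2\rfloor$ and $\lfloor k/2\rfloor$ in full agreement with Corollary \ref{cor:pq-degrees}.

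The only real obstacle is pedantic bookkeeping: the double use of the letter $k$ and the swapped roles of the symbols $i$ and $j$ between Lemma \ref{lem:seriessol-} and Definition \ref{def:bigIndi}. An alternative and equally viable route is a direct induction on $k$ using \eqref{dstringIVP}: the inductive step splits each strictly interlacing family over $[k]$ according to whether the largest index equals $k$ or not, mirroring the two terms $q_{k-1}$ and $h_kp_{k-1}$ in the first recurrence (and similarly for $p_k$). I would choose the route through Lemma \ref{lem:seriessol-} because the combinatorial structure is already encoded in the iterated integrals, so no further splitting argument is needed.
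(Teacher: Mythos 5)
Your proposal is correct and is essentially the paper's own (implicit) argument: the recurrence is obtained exactly as in Lemma \ref{lem:forwardR} by reading off the jump conditions of the left-continuous distributional equation \eqref{eq:xLaxIVP}, and the closed forms are precisely the series of Lemma \ref{lem:seriessol-} evaluated at $x=x_k+$ and rewritten in the interlacing multi-index notation of Definition \ref{def:bigIndi}, with the degree bounds $\lfloor (k-1)/2\rfloor$ and $\lfloor k/2\rfloor$ coming from the cardinality count you give. Your handling of the notational relabelings (the double use of $k$ and the swapped roles of $i$ and $j$) and of the constraint $x_{j_k}<x_k+$ forcing all indices into $[k]$ is exactly what the paper intends.
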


Our next goal is to study the spectrum of the boundary value problem 
\eqref{eq:xLaxBVP}. 
\begin{definition} 
A complex number z is an \textit{eigenvalue}  of the boundary value problem 
\eqref{eq:xLaxBVP} if there exists a solution $\{q_k, p_k\}$ to \eqref{dstringIVP} for 
which $p_{n}(z)=0$.   The set of all eigenvalues is the \textit{spectrum}  of the boundary value problem \eqref{eq:xLaxBVP}.  
\end{definition} 

The relevance of the spectrum of \eqref{eq:xLaxBVP} is captured in the following lemma 
which follows from examining the $t$ part of the Lax pair \eqref{eq:xtLax} in the region $x>x_{n}$.  
\begin{lemma}\label{lem:t-evolution of qp} 
Let $\{q_k, p_k\}$ satisfy the system of difference equations \eqref{dstringIVP}.  Then 
\begin{equation}\label{eq:tderqp}
  \dot q_{n}=\frac{2}{z}q_{n}-\frac{2L}{z}\,p_{n}, \qquad \dot p_{n}=0,
\end{equation}
where $L=\sum_{j=1}^{n}h_j$. Thus $p_{n}(z)$ is independent of time and, in particular, its zeros, i.e. the spectrum, are time invariant.
\end{lemma}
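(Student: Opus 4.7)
The plan is to read off the evolution of $(q_n,p_n)$ from the $t$-component of the Lax pair \eqref{eq:xtLax}, after transporting it through the time-independent gauge $\Phi=\operatorname{diag}(e^{x/2}/\lambda,e^{-x/2})\Psi$ and restricting to the half-line $x>x_n$. The key geometric observation is that $\Phi$ is piecewise constant in $x$, so on the interval $x>x_n(t)$ it takes the value $(q_n(t),p_n(t))$ uniformly; therefore $\partial_t\Phi(x,t)$ at any fixed $x>x_n(t_0)$ equals $(\dot q_n,\dot p_n)$ on a neighborhood of $t_0$, because $x_n(\cdot)$ is continuous.

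First I would set $D=\operatorname{diag}(e^{x/2}/\lambda,e^{-x/2})$ and note that, because $D$ is $t$-independent, the $t$-equation for $\Phi$ is $\Phi_t=\tfrac12 DVD^{-1}\Phi$. The whole idea is that $V$ simplifies drastically for $x>x_n$. Indeed, in that region the discrete measure $m$ vanishes, so from the peakon ansatz \eqref{eq:peakonansatz} one has $u=e^{-x}\sum_j m_j e^{x_j}$, which immediately gives $u_x=-u$, and hence $Q=u^2-u_x^2=0$ and $u+u_x=0$. Plugging these into $V$ collapses it to the strictly upper triangular form with only the $(1,1)$ and $(1,2)$ entries surviving, namely $4\lambda^{-2}$ and $-4\lambda^{-1}u$.

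Next I would conjugate by $D$ and use $z=\lambda^2$ together with the crucial identity
\begin{equation*}
ue^x=\sum_{j=1}^n m_j e^{x_j}=\sum_{j=1}^n h_j=L, \qquad x>x_n,
\end{equation*}
to obtain
\begin{equation*}
\tfrac12 DVD^{-1}\Big|_{x>x_n}=\begin{bmatrix} 2/z & -2L/z\\ 0 & 0\end{bmatrix}.
\end{equation*}
Reading off the two component equations of $\Phi_t=\tfrac12 DVD^{-1}\Phi$ and using the identification $\Phi_1\equiv q_n$, $\Phi_2\equiv p_n$ on $x>x_n$ yields exactly \eqref{eq:tderqp}. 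In particular $\dot p_n=0$, so the polynomial $p_n(z)$ and its zero set (the spectrum) are conserved in time.

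The only genuinely delicate point is justifying that $\partial_t\Phi$ at a fixed $x$ to the right of all peakons really agrees with $\tfrac{d}{dt}q_n$ and $\tfrac{d}{dt}p_n$ as defined in Lemma \ref{lem:forwardR}; this is handled by the continuity argument above and by the fact that on the outside region the Lax equations carry no distributional singularities, so the ordinary smooth computation applies without any need to invoke the averaging conventions from Appendix \ref{lax_mch}. The algebraic simplification of $V$ past the last peakon is routine, so I anticipate no real obstacle beyond this bookkeeping.
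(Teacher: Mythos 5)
Your proposal is correct and is exactly the argument the paper intends: the paper's entire justification for this lemma is the remark that it "follows from examining the $t$ part of the Lax pair \eqref{eq:xtLax} in the region $x>x_n$," and your computation—using $m=0$, $u_x=-u$, hence $Q=0$ past the last peakon, conjugating $V$ by the time-independent gauge $D$, and identifying $ue^x=L$ so that $\tfrac12 DVD^{-1}=\left[\begin{smallmatrix} 2/z & -2L/z\\ 0&0\end{smallmatrix}\right]$—fills in precisely those details, including the continuity bookkeeping identifying $\partial_t\Phi$ with $(\dot q_n,\dot p_n)$. (Only a cosmetic quibble: the collapsed $V$ is upper triangular, not \emph{strictly} upper triangular, since its $(1,1)$ entry $4\lambda^{-2}$ survives.)
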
 
Since Corollary \ref{cor:solqkpk} gives an explicit form 
of $p_n(z)$ we can easily identify the constants of motion implied by 
isospectrality of the boundary value problem \eqref{eq:xLaxBVP}.  
\begin{lemma} \label{lem:constants} 
The quantities 
\begin{equation*} 
M_j=\sum_{\substack{I,J \in \binom{[n]}{j}\\ I<J}} h_I g_J, \qquad 1\leq j\leq \lfloor\frac{n}{2}\rfloor
\end{equation*}
form a set of $\lfloor\frac{n}{2}\rfloor$  constants 
of motion for the system \eqref{mCH_ode}.  
\end{lemma}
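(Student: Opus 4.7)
The proof will be essentially a matter of assembling two ingredients already in place. The plan is to read off $p_n(z)$ as a polynomial in $z$ whose coefficients are, up to sign, precisely the quantities $M_j$, and then invoke the fact that $p_n(z)$ is time-independent.

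First I would recall from Corollary \ref{cor:solqkpk} that
\begin{equation*}
p_n(z) = 1 + \sum_{j=1}^{\lfloor n/2\rfloor} \Big(\sum_{\substack{I,J\in\binom{[n]}{j}\\ I<J}} h_I g_J\Big)(-z)^j = 1 + \sum_{j=1}^{\lfloor n/2\rfloor} (-1)^j M_j\, z^j,
\end{equation*}
so that the $M_j$ are (up to the signs $(-1)^j$) nothing other than the coefficients of the polynomial $p_n(z)$ in the spectral parameter $z$.

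Next I would invoke Lemma \ref{lem:t-evolution of qp}, which tells us that $\dot p_n(z)=0$ under the time evolution dictated by \eqref{mCH_ode}. Since $z\in\C$ is an auxiliary spectral parameter, independent of $t$, the identity $\frac{d}{dt}p_n(z)\equiv 0$ in $z$ forces each coefficient in the polynomial expansion in powers of $z$ to be constant in time. Equating coefficients of $z^j$ on both sides of $\frac{d}{dt}p_n(z)=0$ yields $\dot M_j = 0$ for every $1\leq j\leq \lfloor n/2\rfloor$, which is the claim.

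There is no genuine obstacle here: the substantive work has been done in establishing the explicit formula of Corollary \ref{cor:solqkpk} and the isospectrality statement of Lemma \ref{lem:t-evolution of qp}, which itself rests on analyzing the $t$-part of the Lax pair for $x>x_n$. The only thing worth emphasizing in the write-up is that the $\lfloor n/2\rfloor$ quantities $M_j$ are indeed functionally independent constants of motion — this follows because they appear as coefficients of distinct powers of $z$ in a polynomial determined by the dynamical variables $\{x_j,m_j\}$, so no linear relation among them is imposed by construction.
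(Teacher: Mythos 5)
Your proof is correct and is precisely the paper's argument: $p_n(z)$ is time-invariant by Lemma \ref{lem:t-evolution of qp}, and Corollary \ref{cor:solqkpk} identifies its coefficients as $(-1)^j M_j$, so each $M_j$ is conserved. (Your closing remark about functional independence is not needed for the lemma as stated, and the fact that the $M_j$ appear as coefficients of distinct powers of $z$ would not by itself establish such independence as functions of $(x_j,m_j)$; the paper makes no claim of this kind.)
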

\begin{example}
  Let us consider the case $n=4$.  Then the constants of motion, written in the original variables $(m_j, x_j)$, with positions $x_j$ satisfying $x_1<x_2<x_3<x_4$, are
  \begin{align*}
    M_1&=m_1m_2e^{x_1-x_2}+m_1m_3e^{x_1-x_3}+m_1m_4e^{x_1-x_4}+m_2m_3e^{x_2-x_3}+m_2m_4e^{x_2-x_4}+m_3m_4e^{x_3-x_4},\\
    M_2&=m_1m_2m_3m_4e^{x_1-x_2+x_3-x_4}.
  \end{align*}
\end{example}

\section{Forward map: spectrum and spectral data} \label{sec:FSM}
We will characterize the spectrum of the boundary value problem \eqref{eq:xLaxBVP}, or equivalently, \eqref{dstring}
by ~associating it with 
the \textit{Weyl function} 
\begin{equation}\label{eq:defWeyl}
W(z)=\frac{q_{n}(z)}{p_{n}(z)}.  
\end{equation} 
The remainder of this section is devoted to the proof of the following 
theorem characterizing our boundary value problem in terms of $W(z)$.  
\begin{theorem} \label{thm:W} 
$W(z)$ is a (shifted) Stieltjes transform of a positive, discrete measure $d\mu$ with support 
inside $\R_+$.  More precisely: 
\begin{equation}
W(z)=c+\int \frac{d\mu(x)}{x-z}, \qquad d\mu=\sum_{i=1}^{\lfloor \frac n2 \rfloor} 
b_j \delta_{\zeta _j}, \qquad 0<\zeta_1<\dots< \zeta_{\lfloor \frac n2 \rfloor}, \qquad 0<b_j,   \quad 1\leq j\leq \floor{\frac n2}, 
\end{equation}
~ where 
$c>0$ when $n$ is odd 
and $c=0$ when $n$ even.  

\end{theorem}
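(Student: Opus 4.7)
The plan is to prove Theorem \ref{thm:W} in three stages, handling the structural properties through the explicit formulas of Corollary \ref{cor:solqkpk} and Lemma \ref{lem:constants}, and reducing the hard positivity claims to the classical Stieltjes class via a continued fraction expansion.

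First, I would pin down the value of $c$ using the degree count of Corollary \ref{cor:pq-degrees}: $\deg p_n = \lfloor n/2\rfloor$ and $\deg q_n = \lfloor (n-1)/2\rfloor$. When $n$ is even the two degrees differ by one, so $W(z) \to 0$ at infinity and $c = 0$. When $n$ is odd the degrees agree and $W(\infty)$ equals the ratio of the two leading coefficients; by Corollary \ref{cor:solqkpk} both of these are sums of products $h_I g_J$ over strictly interlacing index pairs, hence both are strictly positive, and $c>0$ follows. Next, I would rule out non-positive real poles. From $p_n(z) = 1 + \sum_{j\geq 1}(-1)^j M_j z^j$ with $M_j > 0$ (Lemma \ref{lem:constants}), the substitution $z = -t$, $t \geq 0$, yields $p_n(-t) = 1 + \sum_j M_j t^j > 0$, so no zero of $p_n$ lies on $(-\infty, 0]$.

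The heart of the argument is to show that every zero of $p_n$ is real and simple with a strictly positive residue of $W$ attached to it. My approach is to exhibit a finite continued fraction expansion of $W(z)$ that places it in the classical Stieltjes class. Iterating the Möbius transformation $R_k = (R_{k-1}+h_k)/(1-zg_k R_{k-1})$, $R_0=0$, on the partial Weyl ratios $R_k = q_k/p_k$ produces a finite continued fraction for $W = R_n$, which I would rearrange into an $S$-fraction with strictly positive partial quotients, using positivity of the $h_k$ and $g_k$; classical theorems of Stieltjes then deliver the claimed partial-fraction representation. An equivalent route, better adapted to the underlying total positivity, would be to recast \eqref{dstring} as a generalized eigenvalue problem $A\mathbf{v} = zB\mathbf{v}$ in which, after a discrete gauge transformation absorbing the positive weights $h_k$ and $g_k$, $A$ is symmetric and $B$ symmetric positive definite, from which reality, positivity and simplicity of the spectrum are automatic, with positivity of the residues following by a Cauchy-Binet computation in the eigenbasis.

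The main obstacle will lie in converting the plain coefficient positivity of $p_n$ and $q_n$ into one of these sharper forms of positivity. A subtlety, already flagged in the remark after Lemma \ref{lem:forwardR}, is that $\det T_k = 1+zm_k^2$ so $T_k \notin SL_2(\mathbf{C})$, meaning the problem is \emph{not} isomorphic to a classical discrete Krein string and off-the-shelf results for such strings must be carefully adapted. The correct positivity input should ultimately be drawn from the strictly interlacing index structure of Corollary \ref{cor:solqkpk} and Definition \ref{def:bigIndi}, which is the concrete incarnation of the Gantmacher-Krein total positivity advertised in the introduction.
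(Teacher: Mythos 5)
Your stages 1 and 2 are correct but cover only the easy parts of the theorem: the degree count from Corollary \ref{cor:pq-degrees} determines what $c$ must be \emph{once} the representation is known to exist, and the positivity $p_n(-t)=1+\sum_j M_jt^j>0$ excludes zeros of $p_n$ on $(-\infty,0]$ but says nothing about complex zeros; reality and simplicity of the spectrum and positivity of the residues --- the actual content of the theorem --- are deferred to your stage 3, and that is where the gap lies. The M\"obius iteration $R_k=(R_{k-1}+h_k)/(1-zg_kR_{k-1})$ is the correct starting point, but the claim that it can be ``rearranged into an $S$-fraction with strictly positive partial quotients, using positivity of the $h_k$ and $g_k$'' is precisely the step that fails naively: since $\det T_k=1+zm_k^2$, the $k$-th M\"obius step is \emph{not} a composition of the elementary Stieltjes moves $F\mapsto a+F$ and $F\mapsto 1/(a(-z)+F)$ with $a>0$, and the partial quotients of the true $S$-fraction of $W$ are not simple positive expressions in the $h_k,g_k$; by Theorem \ref{thm:Stieltjes} their positivity is \emph{equivalent} to the assertion being proved, so invoking it without a construction is circular. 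What the paper actually does (Lemmas \ref{lem:wdstring} and \ref{lem:spectralwks}) is factor each M\"obius step through the half-step Weyl function $w_{2k-1}=q_{k-1}/p_k$: the step $1/w_{2k}=1/w_{2k+1}+zg_{k+1}$ is a legitimate continued-fraction move, but the remaining affine step $w_{2k+2}=(1+zm_{k+1}^2)w_{2k+1}+h_{k+1}$ is not, and it must be handled on the level of measures: it multiplies $d\mu^{(2k+1)}$ by $(1+xm_{k+1}^2)$ and creates the shift $h_{k+1}\bigl(1-g_{k+1}\int d\mu^{(2k+1)}\bigr)$, whose nonnegativity is \emph{not} automatic. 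It holds only because $\int d\mu^{(2k+1)}$ equals $1/g_{k+1}$ when the shift of $w_{2k}$ is positive, and equals $1/(g_{k+1}+a_1)<1/g_{k+1}$ when that shift is zero; this forced alternation of the shifts $c_{2k}$ (positive for $k$ odd, zero for $k$ even, starting from $c_2=h_1>0$) is simultaneously what closes the induction and what proves your parity claim for $c$ unconditionally. None of this is supplied, or even sketched, by ``use positivity of $h_k$ and $g_k$''.

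Your fallback route is also not viable as stated: writing \eqref{dstring} as $A\mathbf{v}=zB\mathbf{v}$ and asserting that a discrete gauge transformation makes $A$ symmetric and $B$ positive definite, ``from which reality, positivity and simplicity of the spectrum are automatic'', begs the question. No such symmetrization is exhibited, and the obstruction you yourself flag ($T_k\notin SL_2(\C)$) is exactly why none is available off the shelf; in this family of peakon problems positivity of the spectrum is obtained from Stieltjes-type or total-positivity arguments, not from self-adjointness. So both of your proposed routes stop exactly at the point where the real work of the paper's proof begins.
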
 
The next corollary describes the properties of the spectrum.  
\begin{corollary} \label{cor:spectrum} 
\mbox{}
\begin{enumerate} 
\item The spectrum of the boundary value problem \eqref{eq:xLax} is positive and simple. 
\item  
$W(z)=c+\sum_{j=1}^{\lfloor \frac n2 \rfloor} \frac{b_j}{\zeta_j-z}$, where 
 all residues satisfy $b_j>0$ and $c\geq0$.  
 \end{enumerate} 
 
 \end{corollary}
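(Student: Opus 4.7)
The plan is to deduce the corollary directly from Theorem \ref{thm:W}, which we take as given (its proof being the content of the rest of the section). Both statements should follow essentially by reading off the Stieltjes representation, using only the degree information for $p_n$ from Corollary \ref{cor:pq-degrees}.

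First I would address part (2). Since Theorem \ref{thm:W} provides the representation $W(z)=c+\int\frac{d\mu(x)}{x-z}$ with $d\mu=\sum_{j=1}^{\lfloor n/2\rfloor} b_j\,\delta_{\zeta_j}$, the integral collapses to $\sum_{j}\frac{b_j}{\zeta_j-z}$, and the theorem explicitly gives $b_j>0$, together with $c>0$ when $n$ is odd and $c=0$ when $n$ is even; in both cases $c\geq 0$, which is exactly part (2).

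For part (1), I would combine the representation with the degree count. By definition the spectrum equals the set $\{z:p_n(z)=0\}$; by Corollary \ref{cor:pq-degrees}, $\deg p_n=\lfloor n/2\rfloor$. The partial fraction decomposition from part (2) exhibits $\lfloor n/2\rfloor$ distinct positive simple poles $\zeta_1<\dots<\zeta_{\lfloor n/2\rfloor}$ of $W=q_n/p_n$, and each $\zeta_j$ must then be a zero of $p_n$. A degree count shows these exhaust all zeros of $p_n$. Finally, if some $\zeta_j$ were a zero of $p_n$ of multiplicity $\geq 2$, then either it would also be a zero of $q_n$ (contradicting the fact that it is a pole of $W$), or it would be a pole of $W$ of order $\geq 2$ (contradicting the simple-pole structure of the Stieltjes representation). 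Hence each zero is simple, and the spectrum is positive and simple.

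Since every nontrivial content of the corollary has been packaged into Theorem \ref{thm:W}, the only mild subtlety here is ruling out cancellations between $q_n$ and $p_n$; the degree count together with the known number of distinct poles of $W$ resolves this immediately. The genuine work, namely establishing positivity of the measure $d\mu$ and the sign of the constant $c$, lies entirely in the proof of Theorem \ref{thm:W} and not in this corollary.
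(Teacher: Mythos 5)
Your proof is correct and follows essentially the same route as the paper, which states this corollary without a separate proof as an immediate consequence of Theorem \ref{thm:W}: part (2) is read off from the Stieltjes representation, and part (1) follows because the spectrum is by definition the zero set of $p_n$, whose degree $\lfloor n/2\rfloor$ (Corollary \ref{cor:pq-degrees}) is exhausted by the $\lfloor n/2\rfloor$ distinct positive poles $\zeta_j$ of $W=q_n/p_n$. Your explicit degree count ruling out cancellation between $q_n$ and $p_n$ (the final multiplicity check is then redundant, since distinct roots filling the full degree are automatically simple) merely makes precise what the paper leaves implicit.
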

 The strategy of the proof of Theorem \ref{thm:W} is to show that 
 $W$ has a continued fraction expansion of Stieltjes's type, the term 
 explained below.  
 We start by reformulating 
 the recurrence relation \eqref{dstringIVP}.  
 \begin{lemma}\label{lem:wdstring}
 Let $\{q_k, p_k\}$ be the solution to \eqref{dstringIVP} and let $w_{2k}=\frac{q_{k}}{p_{k}}, w_{2k-1}=\frac{q_{k-1}}{p_{k}}$.  
 Then 
 \begin{subequations} 
 \begin{align}
    w_1=0, \qquad w_{2k}&=(1+zm_k^2)w_{2k-1}+h_k, \quad &&1\leq k\leq  n \label{eq:recweven}\\
    \qquad \frac{1}{w_{2k}}&=\frac{1}{w_{2k+1}}+zg_{k+1},  \quad &&1\leq k\leq n-1 \label{eq:recwodd}
\end{align}
\end{subequations}
  \end{lemma}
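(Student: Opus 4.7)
The plan is to verify the three identities as direct algebraic consequences of the recurrence \eqref{dstringIVP} together with the relation $g_kh_k=m_k^2$ recorded right after \eqref{eq:xLax}. Before starting, I note that the ratios defining $w_{2k-1}$ and $w_{2k}$ are well defined in $\C(z)$: the recurrence \eqref{dstringIVP} gives $p_k(0)=1$ and $q_k(0)=\sum_{j=1}^k h_j>0$ by an easy induction, so for $k\ge 1$ neither $p_k$ nor $q_k$ vanishes as a polynomial in $z$. Hence all the identities can be read in the field of rational functions.

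The base case $w_1=q_0/p_1=0$ is immediate from $q_0=0$.

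For \eqref{eq:recweven}, the idea is to split
\[
w_{2k}=\frac{q_k}{p_k}=\frac{q_{k-1}+h_kp_{k-1}}{p_k}=\frac{q_{k-1}}{p_k}+h_k\,\frac{p_{k-1}}{p_k}
\]
using the first line of \eqref{dstringIVP}, and then to eliminate $p_{k-1}/p_k$ by dividing the second line, rewritten as $p_{k-1}=p_k+zg_kq_{k-1}$, through by $p_k$; this gives $p_{k-1}/p_k=1+zg_kw_{2k-1}$. Collecting terms yields $w_{2k}=w_{2k-1}(1+zg_kh_k)+h_k$, and replacing $g_kh_k$ by $m_k^2$ produces exactly \eqref{eq:recweven}.

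For \eqref{eq:recwodd}, I would pass to reciprocals and divide $p_{k+1}=p_k-zg_{k+1}q_k$ by $q_k$, which gives $1/w_{2k+1}=p_{k+1}/q_k=p_k/q_k-zg_{k+1}=1/w_{2k}-zg_{k+1}$, i.e.\ \eqref{eq:recwodd}. There is no real obstacle here; each identity is a one-line rewriting of \eqref{dstringIVP} in the Weyl-like variables $w_j$, and the only subtlety is the preliminary remark ensuring that $p_k$ and $q_k$ do not vanish identically so that the ratios are legitimate elements of $\C(z)$.
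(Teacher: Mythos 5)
Your proof is correct and follows essentially the same route as the paper: both arguments obtain \eqref{eq:recweven} by dividing the first line of \eqref{dstringIVP} by $p_k$ and eliminating $p_{k-1}$ via the second line together with $g_kh_k=m_k^2$, and both obtain \eqref{eq:recwodd} by dividing the second line by $q_k$. Your preliminary observation that $p_k(0)=1$ and $q_k(0)=\sum_{j=1}^k h_j>0$, so the ratios are legitimate elements of $\C(z)$, is a small but worthwhile addition that the paper leaves implicit.
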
 
  \begin{proof}
  The first line follows readily by rewriting the 
  first line of \eqref{dstringIVP} 
  as
  \begin{equation*}
\frac{q_{k}}{p_{k}}-\frac{q_{k-1}}{p_{k}}=h_k\frac{(p_{k-1}-p_{k})}{p_{k}}+h_k,
\end{equation*}
then using the second equation of \eqref{dstringIVP} to eliminate $p_{k-1}-p_{k}$, on the way 
employing the relation $m_k^2=g_k h_k$, and finally rewriting the result using the 
definition of $w_{2k}$ and $w_{2k-1}$.  The condition $w_1=0$ corresponds to the 
boundary condition at index $k=1$, recalling that $w_1=\frac{q_0}{p_1}=0$ because $q_0=0, p_1=1$.  The second line follows from the second formula in 
\eqref{dstringIVP}. 
\end{proof}   
\begin{remark} The recurrence in Lemma \ref{lem:wdstring} can be viewed as 
the recurrence on the Weyl functions corresponding to shorter strings 
obtained by truncating at the index $k$.  Then $W_{2k}$ is precisely the 
Weyl function corresponding to the measures $\sum_{j=1}^k h_j \delta_{x_j}$ and $\sum_{j=1}^k g_j \delta_{x_j}$, 
while $W_{2k-1}$ corresponds to the measures $\sum_{j=1}^{k-1}  h_j \delta_{x_j}$ and $\sum_{j=1}^k g_j \delta_{x_j}$ respectively.  
\end{remark} 
Before we state the second lemma we 
briefly review  some old results 
of T. Stieltjes, appropriately adapted to our setup.  More specifically, the 
following description of rational functions follows from general results proved by T. Stieltjes in his famous memoir \cite{stieltjes}.  

\begin{theorem} [T. Stieltjes] \label{thm:Stieltjes} Any rational function $F(z)$ admitting the 
integral representation 
\begin{equation}\label{eq:Stieltjesintegral}
F(z)=c+\int \frac{d\nu(x)}{x-z}, 
\end{equation} 
where $d\nu(x)$ is the (Stieltjes) measure corresponding to the piecewise constant 
non-decreasing function $\nu(x)$ with finitely many jumps in $\R_+$ 
has a finite (terminating) continued fraction expansion
\begin{equation}\label{eq:Stieltjescf}
F(z)=c+\cfrac{1}{a_1 (-z)+\cfrac{1}{a_2+\cfrac{1}{a_3(-z)+\cfrac{1}{\ddots}}}}, 
\end{equation}
where all $a_j>0$ and, conversely, any rational function with this type of 
continued fraction expansion has the integral representation \eqref{eq:Stieltjesintegral}. 
\end{theorem}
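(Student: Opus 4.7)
The approach is to implement Stieltjes' continued-fraction algorithm and inductively track the Stieltjes structure through each extraction.

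For the forward direction, write $F(z) = c + F_0(z)$ with $F_0(z)=\sum_{j=1}^{N} b_j/(\zeta_j-z)$, $b_j,\zeta_j>0$. Expanding $F_0(z) = -B/z - D/z^2 + O(z^{-3})$ at infinity (with $B=\sum b_j$ and $D=\sum b_j\zeta_j$ both positive) gives $1/F_0 = -z/B + D/B^2 + O(z^{-1})$, so one sets $a_1 := 1/B$ and $R_1 := 1/F_0 + z/B$. A partial-fraction analysis then yields $R_1 = E - G_1$, where $G_1(z)=\sum_{k=1}^{N-1}\rho_k/(\eta_k-z)$ is a Stieltjes function with atoms at the zeros $\eta_k$ of $F_0$: the $\eta_k$ lie in $\R_+$ and interlace the $\zeta_j$ by monotonicity of $F_0$, the residues $\rho_k = 1/F_0'(\eta_k)$ are positive since $F_0'>0$ everywhere, and $E=D/B^2>0$. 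Set $a_2 := 1/E$ and $R_2 := 1/R_1 - a_2$.

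The crux is to show $R_2$ is again a Stieltjes function with exactly $N-1$ atoms in $\R_+$. Rewriting $R_2 = G_1/[E(E-G_1)]$, its poles are the $N-1$ solutions of $G_1=E$. Monotonicity of $G_1$ on each interval of its domain, together with the key inequality $E - G_1(0) = 1/F_0(0) = 1/\sum(b_j/\zeta_j) > 0$, places all these solutions in $\R_+$. A direct residue computation at each new pole $\eta_k'$ gives a positive residue $1/G_1'(\eta_k')$ in the Stieltjes form $b_k'/(\eta_k'-z)$. Iterating this reduction $N$ times produces the terminating continued fraction with coefficients $a_1,\ldots,a_{2N}$ all positive.

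The converse direction is handled by reverse induction on the depth of the continued fraction, building outward from the innermost term $1/a_{2N}$, which is a positive constant and hence trivially Stieltjes. At each level one verifies that if $G$ is a Stieltjes function (vanishing at infinity, with positive atoms in $\R_+$), then $1/[a(-z)+a'+G]$ for $a,a'>0$ is again Stieltjes with one additional positive atom in $\R_+$; this is essentially the reverse of the extraction above. The principal obstacle in both directions is the tracking of residue positivity and pole location, but the monotonicity $F_0'>0$ combined with the interlacing of zeros and poles on $\R_+$ makes the bookkeeping tractable. A more abstract alternative via the Nevanlinna (Pick-class) characterization of Stieltjes functions would bypass the explicit residue computations but requires invoking that theory as a black box.
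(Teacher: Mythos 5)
The paper does not actually prove this theorem: it is quoted as a classical result and referred to Stieltjes' memoir \cite{stieltjes}, so your proposal supplies a self-contained argument where the paper offers only a citation, and it must be judged on its own terms. Your forward direction is correct. The extraction $a_1=1/B$, the partial-fraction identity $R_1=1/F_0+z/B=E-G_1$ with $E=D/B^2>0$, the interlacing $\zeta_1<\eta_1<\zeta_2<\cdots<\eta_{N-1}<\zeta_N$ and positivity of the residues $1/F_0'(\eta_k)$ (both from $F_0'>0$), the identity $R_2=1/R_1-1/E=G_1/\bigl[E(E-G_1)\bigr]$, the removability of the singularities at the $\eta_k$, the count of solutions of $G_1=E$ (the numerator of $E-G_1$ has degree exactly $N-1$, and your inequality $E-G_1(0)=1/F_0(0)>0$ pins the leftmost root inside $(0,\eta_1)$), and the positivity of the new residues $1/G_1'(\eta_k')$ all check out; since $R_2$ is again a Stieltjes function with $N-1$ atoms in $\R_+$, the induction closes and terminates after $N$ cycles with all $a_j>0$.

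The converse, however, rests on a step that fails as written. The lemma you invoke --- if $G$ is a Stieltjes function vanishing at infinity with atoms in $\R_+$ and $a,a'>0$, then $1/\bigl[a(-z)+a'+G\bigr]$ is again such a function with one more atom --- is false: for $G(z)=\beta/(\gamma-z)$ the denominator equals $\bigl(az^2-(a\gamma+a')z+a'\gamma+\beta\bigr)/(\gamma-z)$, whose discriminant $(a\gamma-a')^2-4a\beta$ is negative whenever $a\gamma=a'$ (e.g.\ $a=a'=\beta=\gamma=1$ gives poles at $1\pm i$), so the resulting function has complex poles and is not a Stieltjes transform. The source of the slip is structural: in \eqref{eq:Stieltjescf} consecutive partial denominators are nested, not summed, so the inductive step must be the composite map $G\mapsto 1/\bigl[a(-z)+1/(a'+G)\bigr]$, the true inverse of your two extraction steps. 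For that map the argument does go through: $a'+G>a'>0$ to the left of the first atom of $G$, so $a'+G$ has exactly $M$ zeros, all in $\R_+$, whence $1/(a'+G)$ is a positive constant minus a Stieltjes function; the denominator $a(-z)+1/(a'+G)$ is then strictly decreasing on every component of its domain and positive at $z=0$, so it has $M+1$ simple zeros, all positive, and the residues of its reciprocal carry the correct sign. Note also that your base case needs adjusting: the constant $1/a_{2N}$ does not belong to your inductive class (it does not vanish at infinity); the induction should start from $1/\bigl[a_{2N-1}(-z)+1/a_{2N}\bigr]$, which is a one-atom Stieltjes function. With these repairs the converse, and hence the whole proof, is complete.
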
 

We will refer to the integral representation \eqref{eq:Stieltjesintegral} as 
the \textit{shifted Stieltjes transform} of a measure $dv(x)$.  Now we are ready to 
state the second lemma.  

\begin{lemma}\label{lem:spectralwks}
Given $h_j>0, h_jg_j=m_j^2>0, 1\leq j\leq n$, let $w_j$s satisfy the recurrence relations of Lemma \ref{lem:wdstring}.  
Then $w_j$s are shifted Stieltjes transforms of finite, discrete Stieltjes measures supported on $\R_+$, with nonnegative shifts.  More precisely: 
\begin{align*} 
w_{2k-1}&=\int \frac{d\mu^{(2k-1)}(x)}{x-z}, \\
w_{2k}&=c_{2k}+\int \frac{d\mu^{(2k)}(x)}{x-z}, 
\end{align*} 
where $c_{2k}>0$ when $k$ is odd, otherwise, $c_{2k}=0$.  Furthermore, 
the number of points in the support $d\mu^{(2k)}(x)$ and $d\mu^{(2k-1)}$ is 
$\lfloor\frac{k}{2}\rfloor$.  
\end{lemma}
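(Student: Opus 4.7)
The plan is to prove the lemma by induction on $k$, using Theorem \ref{thm:Stieltjes} to pass freely between the shifted Stieltjes integral representation \eqref{eq:Stieltjesintegral} and the terminating continued fraction \eqref{eq:Stieltjescf} with strictly positive partial denominators. The base case is immediate from the recurrence: $w_1 = 0$ is the degenerate (unshifted) Stieltjes transform with empty measure, while $w_2 = h_1$ is a shifted Stieltjes with shift $c_2 = h_1 > 0$ (consistent with $k=1$ being odd) and empty measure.

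For the inductive step I would handle the two transitions in the order $w_{2k} \to w_{2k+1}$ and then $w_{2k+1} \to w_{2k+2}$. The first is performed on the continued fraction side: writing $w_{2k} = c_{2k} + \cfrac{1}{a_1(-z) + \cfrac{1}{a_2 + \cdots}}$ with positive $a_j$'s (and $c_{2k} = 0$ when $k$ is even) by the inductive hypothesis, the relation $\frac{1}{w_{2k+1}} = \frac{1}{w_{2k}} + g_{k+1}(-z)$ yields
\begin{equation*}
w_{2k+1} = \cfrac{1}{g_{k+1}(-z) + \cfrac{1}{c_{2k} + \cfrac{1}{a_1(-z) + \cdots}}}
\end{equation*}
when $c_{2k} > 0$ ($k$ odd; the former shift becomes a deeper partial and one new partial $g_{k+1}(-z)$ is prepended), and
\begin{equation*}
w_{2k+1} = \cfrac{1}{(g_{k+1} + a_1)(-z) + \cfrac{1}{a_2 + \cdots}}
\end{equation*}
when $c_{2k} = 0$ ($k$ even; the new $g_{k+1}(-z)$ merges with the existing leading $a_1(-z)$). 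In both cases every partial denominator is positive since $g_{k+1}, c_{2k} > 0$ and the remaining $a_j$'s are positive by the inductive hypothesis, so the converse of Theorem \ref{thm:Stieltjes} provides that $w_{2k+1}$ is an unshifted Stieltjes transform; counting levels in the CF gives the support size $\lfloor (k+1)/2 \rfloor$.

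For the second transition I would work on the integral side: the representation $w_{2k+1} = \sum_j b_j/(\zeta_j - z)$ (with $b_j, \zeta_j > 0$, just established) combined with the identity $z/(\zeta_j - z) = -1 + \zeta_j/(\zeta_j - z)$ applied to \eqref{eq:recweven} yields
\begin{equation*}
w_{2k+2} = \Bigl(h_{k+1} - m_{k+1}^2 \textstyle\sum_j b_j\Bigr) + \sum_j \frac{(1 + m_{k+1}^2 \zeta_j) b_j}{\zeta_j - z}.
\end{equation*}
The new residues are manifestly positive, the new poles are inherited from $w_{2k+1}$ (hence positive, simple, of count $\lfloor (k+1)/2 \rfloor$), and the sign of $c_{2k+2} = w_{2k+2}(\infty)$ is pinned down by Corollary \ref{cor:pq-degrees}: since $w_{2k+2} = q_{k+1}/p_{k+1}$ with $\deg q_{k+1} = \lfloor k/2 \rfloor$ and $\deg p_{k+1} = \lfloor (k+1)/2 \rfloor$, when $k+1$ is even (equivalently $k$ odd) $\deg q_{k+1} < \deg p_{k+1}$ and $c_{2k+2} = 0$, whereas when $k+1$ is odd ($k$ even) the degrees match and the ratio of the (strictly positive) leading coefficients coming from the interlacing-multi-index sums of Corollary \ref{cor:solqkpk} gives $c_{2k+2} > 0$.

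The main obstacle is the bookkeeping in the first transition, where the exact structure of the CF of $w_{2k+1}$ (and in particular how many leading partials are prepended versus merged) depends on the parity of $k$ and on whether $c_{2k}$ is zero or positive; care is needed to verify that the predicted support sizes are preserved and that the CF remains in the strict Stieltjes form. Once this parity-dependent bookkeeping is handled, the strict positivity of all partials is immediate from $g_{k+1}, c_{2k} > 0$ and the inductive hypothesis, and Theorem \ref{thm:Stieltjes} closes the loop by returning the claimed shifted Stieltjes integral representations.
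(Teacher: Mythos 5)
Your proof is correct, and its skeleton is the same as the paper's: induction on $k$, the odd step $w_{2k}\to w_{2k+1}$ performed on the continued-fraction side exactly as in the paper (prepend a partial $g_{k+1}(-z)$, which merges with $a_1(-z)$ when $c_{2k}=0$), and the even step $w_{2k+1}\to w_{2k+2}$ performed on the integral side, where your partial-fraction identity reproduces the paper's formula $w_{2k+2}=\int\frac{(1+xm_{k+1}^2)\,d\mu^{(2k+1)}(x)}{x-z}+h_{k+1}\bigl(1-g_{k+1}\int d\mu^{(2k+1)}(x)\bigr)$. The one genuine divergence is how the shift $c_{2k+2}=h_{k+1}-m_{k+1}^2\int d\mu^{(2k+1)}$ is shown to vanish or be positive according to parity: the paper tracks the total mass $\int d\mu^{(2k+1)}$, which the leading continued-fraction coefficient of $w_{2k+1}$ pins down as $1/g_{k+1}$ when $c_{2k}>0$ (so the shift is exactly $0$, using $m_{k+1}^2=g_{k+1}h_{k+1}$) and as $1/(g_{k+1}+a_1)<1/g_{k+1}$ when $c_{2k}=0$ (so the shift is positive); you instead evaluate $w_{2k+2}=q_{k+1}/p_{k+1}$ at $z=\infty$ and read off the answer from the degrees and leading coefficients in Corollaries \ref{cor:pq-degrees} and \ref{cor:solqkpk}. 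Your variant is sound --- the leading interlacing sums are nonempty and positive, and the signs $(-1)^{\deg}$ cancel in the ratio --- and it has the advantage of tying the parity of the shift directly to degree parity rather than to an inductively propagated alternation of total masses. It also dissolves the ``bookkeeping'' obstacle you flag at the end: rather than counting levels of the continued fraction, count poles, since $w_{2k}=q_k/p_k$ with $\deg p_k=\lfloor k/2\rfloor$ bounds the support, and your observation that the residues $(1+m_{k+1}^2\zeta_j)b_j$ are strictly positive shows no pole is cancelled in the even step; this is in effect how the paper settles the support count (it simply invokes Corollary \ref{cor:pq-degrees}).
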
 
\begin{proof} The proof proceeds by induction on $k$.   
The base case $k=1$ is trivial since $w_1=0$ while $w_{2}=h_1$ by the first 
equation in Lemma \ref{lem:wdstring} confirming that $c_2>0$.  
Suppose now the claim is valid for the index $k$.  Thus 
$w_{2k-1}$ and $w_{2k}$ are shifted Stieltjes transforms of some 
measures $d\mu^{(2k-1)}$ and $d\mu^{(2k)}$, both being finite, discrete and 
supported on $\R_+$.  We now solve \eqref{eq:recwodd} for $w_{2k+1}$ 
obtaining: 
\begin{equation*}
w_{2k+1}=\cfrac{1}{-zg_{k+1}+\frac{1}{w_{2k}}}, 
\end{equation*}
then use the induction hypothesis, which implies that 
$w_{2k}$ has the form \eqref{eq:Stieltjescf}, resulting in the continued fraction 
expansion: 
\begin{equation*}
w_{2k+1}=\cfrac{1}{-zg_{k+1}+\cfrac{1}{c+\cfrac{1}{a_1 (-z)+\cfrac{1}{a_2+\cfrac{1}{a_3(-z)+\cfrac{1}{\ddots}}}}}}\, .  
\end{equation*}
If $c>0$ then $w_{2k+1}$ has already the required form.  If, on the other hand, $c=0$ the first term in the continued fraction expansion is $-(g_{k+1}+a_1)z$ and 
since $a_1>0$ the coefficient is positive.  Thus $w_{2k+1}$ satisfies 
the conditions of Stieltjes's Theorem \ref{thm:Stieltjes} and, as a result, $w_{2k+1}$ is 
the Stieltjes transform with zero shift of a finite, discrete measure, say $d\mu^{2k+1}(x) $ supported 
on $\R_+$.  In either case 
\begin{equation*}
w_{2k+1}=\int d\mu^{2k+1}(x)(-\frac 1z)+\mathcal{O}(\frac{1}{z^2}), \qquad z \to \infty, 
\end{equation*} 
where $\int d\mu^{2k+1}(x)=\frac{1}{g_{k+1}}$ if $c>0$ and $\int d\mu^{2k+1}(x)=\frac{1}{g_{k+1}+a_1}$ if $c=0$.  Let us now examine 
equation \eqref{eq:recweven}, shifting $k\to k+1$.  First, we have 
\begin{equation*}
w_{2k+2}=(1+zm^2_{k+1})w_{2k+1}+h_{k+1}, 
\end{equation*}
and, upon using the integral representation for $w_{2k+1}$ we obtain: 
\begin{align*}
w_{2k+2}=\int \frac{(1+xm_{k+1}^2) d\mu^{(2k+1)}(x)}{x-z} +h_{k+1}-m_{k+1}^2 
\int d\mu^{(2k+1)}(x)\\=
\int \frac{(1+xm_{k+1}^2) d\mu^{(2k+1)}(x)}{x-z} +h_{k+1}\left (1-g_{k+1} 
\int d\mu^{(2k+1)}(x)\right). 
\end{align*} 
If the shift $c$ in the formula for $w_{2k}$ is positive then $\int d\mu^{(2k+1)}(x)=\frac{1}{g_{k+1}}$, as remarked earlier, and the shift in the formula 
for $w_{2k+2}$ is $0$.  When $c=0$, $\int d\mu^{(2k+1)}(x)=\frac{1}{g_{k+1}+a_1} <\frac{1}{g_{k+1}} $ and then the shift is positive since $\left (1-g_{k+1} 
\int d\mu^{(2k+1)}(x)\right)>0$.  This proves the integral representation for 
$w_{2k+2}$ and shows that the shift alternates between $0$ and positive numbers, depending on whether 
$k$ is even or odd as claimed since $c_2>0$.  Finally, the number of 
the points in the support of $d\mu^{(2k)}(x)$ and $d\mu^{(2k-1)}(x)$ follows from Corollary \ref{cor:pq-degrees}.   
\end{proof}

Now, with all the preparation, the proof of Theorem \ref{thm:W} follows 
readily from Lemma \ref{lem:spectralwks} by observing that 
\begin{equation}\label{eq:WrepS}
W(z)=c_{2n}+\int \frac{d\mu^{(2n)}(x)}{x-z}, \qquad d\mu^{(2n)}=\sum_{j=1}^{\lfloor \frac n2 \rfloor} b_j^{(2n)}\delta_{\zeta_j}.   
\end{equation}
This concludes the spectral characterization of the boundary value problem \eqref{eq:xLaxBVP}, or equivalently \eqref{dstring}.  

\section{Inverse problem}\label{sec:IP}
\subsection{A solution by recursion}  
The inverse problem associated with the boundary value problem \eqref{eq:xLaxBVP} can be stated: given positive constants $m_j,\, 1\leq j\leq n$, and a rational function $W(z)$ with integral representation \eqref{eq:WrepS}, we seek to invert the map $S: \{x_1, x_2, \dots x_n\}\longrightarrow W$.  

To solve the 
inverse problem we proceed in two stages: 
first we 
reconstruct the positive coefficients $g_j, h_j$ such that $g_jh_j=m_j^2$ then 
we use the relation $\frac{h_j}{g_j}=e^{2x_j}$ to determine $x_j$.  In this section 
we concentrate on the first stage.

The reconstruction of $h_j,g_j$ amounts to solving recurrence relations \eqref{eq:recweven} and \eqref{eq:recwodd} 
following the steps below: 
\begin{enumerate} 
\item starting with $w_{2n}=W(z)$ define $h_n=w_{2n}(-\frac{1}{m_n^2}), \, \, g_n=\frac{m_n^2}{h_n}$ and solve 
\begin{equation*}
w_{2n}=(1+zm_n^2)w_{2n-1}+h_n, \qquad \frac{1}{w_{2n-2}}=\frac{1}{w_{2n-1}}+zg_n, 
\end{equation*}
for $w_{2n-1}$ and $w_{2n-2}$;
\item restart the procedure from $w_{2n-2}$ shifting $n\rightarrow n-1$. 
\end{enumerate} 
We remark that the procedure encodes solving \eqref{eq:recweven}, \eqref{eq:recwodd}
backwards. 
However, for the procedure to make sense, $w_{2n-2}$ needs to be of the form 
\eqref{eq:WrepS}.   
Let us therefore turn to analyzing $w_{2n-2}$.  
First, from the recurrence relation we easily get 
\begin{equation*} 
h_n=c_{2n}+\int\frac{d\mu^{(2n)}(x)}{x+\frac{1}{m_n^2}}=c_{2n}+m_n^2\int d\mu^{(2n-1)}(x), 
\end{equation*}
where 
$d\mu^{(2n-1)}(x)=\frac{d\mu^{(2n)}(x)}{1+m_n^2 x}$, while solving for $w_{2n-1}$ yields
\begin{equation*}
w_{2n-1}(z)=\int \frac{d\mu^{(2n-1)}(x)}{x-z}.  
\end{equation*} 
Thus by Stieltjes's theorem \ref{thm:Stieltjes} 
\begin{equation*}
w_{2n-1}(z)=\cfrac{1}{a_1 (-z)+\cfrac{1}{a_2+\cfrac{1}{a_3(-z)+\cfrac{1}{\ddots}}}}
\end{equation*}
for some $a_j>0$. Next, we write $$w_{2n-2}=\cfrac{1}{zg_n+\cfrac{1}{w_{2n-1}}}=\cfrac{1}{(g_n-a_1)z+\cfrac{1}{a_2+\cfrac{1}{a_3(-z)+\cfrac{1}{\ddots}}}}$$ and observe that for $w_{2n-2}$ 
to have the spectral representation \eqref{eq:WrepS} $g_n-a_1$ must be negative or 
$0$.  However, 
\begin{equation*}
\frac{1}{a_1}-\frac{1}{g_n}=\int d\mu^{(2n-1)}(x)-\frac{h_n}{m_n^2}=
-\frac{c_{2n}}{m_n^2}, 
\end{equation*} 
hence $g_n-a_1 \leq 0$, which proves the existence of 
the spectral representation \eqref{eq:WrepS} for $w_{2n-2}$ for some 
measure $d\mu_{2n-2}$ supported on a finite number of points in $\R_+$.  

Similar to the content of Lemma \ref{lem:spectralwks} we have the following dichotomy: if $c_{2n}=0$, which by the same Lemma happens if $n$ is even, 
the support of $w_{2n-2}$ has one less point in the spectrum of the corresponding 
measure compensated by the appearance of non-zero $c_{2n-2}$.  If, on the other 
hand, $n$ is odd, in which case $c_{2n}>0$, then $c_{2n-2}=0$ and the number of points in the support of $d\mu^{(2n-2)}$ does not differ from that of $d\mu^{(2n)}$.  
In either case, by iterating, one reaches $w_2$ which is a positive constant equal by definition to $h_1$ and the iteration stops.  We conclude the discussion of the solution to the 
inverse problem of recovering $\{g_j, h_j\}$ by recursion with the following 
theorem. 

\begin{theorem} \label{thm:contInverse}
The inverse spectral problem is uniquely solvable 
for any positive masses $m_j$ and 
the inverse map is continuous both 
with respect to the masses $m_j$ as well as the 
spectral data $\{\zeta_1<\zeta_2<\cdots <\zeta_{\floor{\frac n2}}; b_1, b_2,\cdots, b_{\floor{\frac n2}}; c\}$.  
\end{theorem}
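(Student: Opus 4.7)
The plan is to deduce the theorem directly from the recursive reconstruction procedure developed in the preceding pages, by observing that each step is (i) uniquely determined by the input data and (ii) given by a continuous operation on that data. Existence has already been established constructively in the paragraph above the theorem, so only uniqueness and continuity need attention.

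For unique solvability, I would proceed by downward induction on the index $k$. Starting from $w_{2n}=W(z)$, by Theorem \ref{thm:W} the poles of $W$ lie at the positive nodes $\zeta_j$, so $W$ is analytic at $z=-1/m_n^2<0$ and the value
\begin{equation*}
h_n := w_{2n}\bigl(-1/m_n^2\bigr) = c + \sum_{j=1}^{\floor{\frac{n}{2}}} \frac{b_j}{\zeta_j+1/m_n^2}
\end{equation*}
is strictly positive and uniquely determined by $(m_n; c, \{b_j,\zeta_j\})$. Then $g_n:=m_n^2/h_n$ is forced, and the closed-form solutions $w_{2n-1}=(w_{2n}-h_n)/(1+zm_n^2)$ and $1/w_{2n-2}=1/w_{2n-1}+zg_n$ uniquely determine $w_{2n-1}$ and $w_{2n-2}$. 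The discussion preceding the theorem shows that $w_{2n-2}$ is again a shifted Stieltjes transform of the form \eqref{eq:WrepS}, so the procedure can be restarted. Iterating produces unique values of $h_j,g_j$ for every $j$, and then the position $x_j=\tfrac{1}{2}\log(h_j/g_j)$ is uniquely recovered.

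For continuity, I would argue that the elementary step at level $k$,
\begin{equation*}
(m_k;\, w_{2k}) \longmapsto (h_k,\, g_k,\, w_{2k-2}),
\end{equation*}
is a continuous map. Indeed $h_k=w_{2k}(-1/m_k^2)$ is a continuous function of $m_k$ and of the spectral data defining $w_{2k}$, because the evaluation point $-1/m_k^2$ stays in $\mathbf{R}_-$ while the poles of $w_{2k}$ stay in $\mathbf{R}_+$ (with uniform separation for nearby data). The quotient $g_k=m_k^2/h_k$ is continuous since the positivity chain from Lemma \ref{lem:spectralwks} prevents $h_k$ from vanishing, and the formulas for $w_{2n-1}$, $w_{2n-2}$ involve only rational operations with non-vanishing denominators. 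Composing the continuous elementary maps and finally applying the smooth map $(h_j,g_j)\mapsto \tfrac{1}{2}\log(h_j/g_j)$ yields continuity of the full inverse map.

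The main delicacy is ensuring that the intermediate Stieltjes data — the nodes, residues and shift of $w_{2k-2}$ — depend continuously on those of $w_{2k}$. This is the step where degeneracies could in principle arise, but the strict positivity of $h_k$ and the sign condition $g_k-a_1\le 0$ secured in the paragraph preceding the theorem keep the recursion inside the open region where the rational reductions are smooth; the nodes themselves are zeros of polynomials whose coefficients depend rationally on the input data with non-vanishing denominators, so their continuous dependence follows from the implicit function theorem.
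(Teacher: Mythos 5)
Your proposal is correct and takes essentially the same approach as the paper's own proof: uniqueness is deduced from the forced, step-by-step nature of the recursive reconstruction $w_{2k}\mapsto (h_k,g_k,w_{2k-2})$, and continuity from the fact that each step is an evaluation of a Weyl function at the point $-1/m_k^2$ (where strict positivity and the separation of poles in $\R_+$ from the evaluation point in $\R_-$ prevent any degeneration) followed by algebraic inversions. Your write-up simply makes explicit details the paper leaves implicit, such as the closed-form update formulas and the continuous dependence of the intermediate spectral data.
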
 
\begin{proof} 
The uniqueness follows by construction of the inverse map.  As discussed earlier 
there are no obstructions to invertibility present at each stage of the recursion and the updated spectral data is obtained by evaluation and algebraic inversions of 
of continuous functions  (Weyl functions) at points ($-\frac{1}{m_j^2}$) where 
those Weyl functions are strictly positive.  
\end{proof} 

\subsection{A solution by interpolation; basic ideas} 
The iteration proposed above requires $2n-2$ steps to reach $w_2$, 
each step leading to a new input rational function $w_j$.  The formulas 
for $h_j$ get increasingly more complicated and a natural 
question presents itself: can one compute $h_j$ using directly the spectral data 
$c_{2n}$ and $d\mu^{(2n)}$?  The answer is affirmative and this section 
outlines the main steps of the construction leaving 
the detailed formulas for the following sections in which we present 
a complete solution to the peakon problem \eqref{mCH_ode}.  

First we give a brief summary of main ideas behind the solution 
by interpolation.  Let us rewrite \eqref{transition} in terms of the Weyl function $W=w_{2n} $ as 
\begin{equation} \label{eq:Papprox1}
\begin{bmatrix} W(z) \\ 1 \end{bmatrix} =T_n(z)T_{n-1}(z)  \dots T_{n-k+1}(z)\begin{bmatrix} \frac{q_{n-k}(z)}{p_{n}(z)}\\ \frac{p_{n-k}(z)}{p_{n}(z)} \end{bmatrix}. 
\end{equation}
Clearly, the transpose of the matrix of cofactors of each $T_j(z)$ 
is $\begin{bmatrix} 1 &- h_j\\zg_j &1 \end{bmatrix}\stackrel{\textrm {def}}{=}C_{n-j+1}(z)$,  
which allows one to express equation \eqref{eq:Papprox1}  as 
\begin{equation*}
C_{k}(z)\dots C_1(z)\begin{bmatrix} W(z) \\ 1 \end{bmatrix} =\det (T_n(z))\det (T_{n-1}(z))  \dots \det(T_{n-k+1}(z))\begin{bmatrix} \frac{q_{n-k}(z)}{p_{n}(z)}\\ \frac{p_{n-j}(z)}{p_{n}(z)} \end{bmatrix}, 
\end{equation*}
which, recalling that $\det T_j(z)=1+zm_j^2$ and that the roots of 
$p_{n}(z)$ are all positive, implies ~
\begin{equation} \label{eq:Papprox2}
\left( C_{k}(z)\dots C_1(z)\begin{bmatrix} W(z) \\ 1 \end{bmatrix}\right)\Big |_{z=-\frac{1}{m_{n-i+1}^2}}=0,  \qquad \textrm{ for any } 1\leq i\leq k.  
\end{equation}

\begin{theorem} \label{thm:CJ interp}
Let the matrix of products of $C$s in equation \eqref{eq:Papprox2} be denoted by ~$\begin{bmatrix} a_k(z)&b_k(z)\\c_k(z)&d_k(z) \end{bmatrix}\stackrel{\textrm {def}}{=}\hat S_k(z)$.  Then the polynomials $a_k(z), b_k(z),c_k(z),d_k(z)$ 
solve the following interpolation problem: 
\begin{subequations}\label{eq:Papprox}
\begin{align}
&a_k(-\frac{1}{m_{n-i+1}^2})W(-\frac{1}{m_{n-i+1}^2})+b_k(-\frac{1}{m_{n-i+1}^2})
=0, \qquad 1\leq i\leq k, \\
&\deg a_k=\big\lfloor \frac{k}{2} \big \rfloor, \qquad \deg b_k=\big \lfloor \frac{k-1}{2} \big\rfloor, \qquad a_k(0)=1, \\
\notag\\
&c_k(-\frac{1}{m_{n-i+1}^2})W(-\frac{1}{m_{n-i+1}^2})+d_k(-\frac{1}{m_{n-i+1}^2})
=0, \qquad 1\leq i\leq k,  \\
&\deg c_k=\big \lfloor \frac{k+1}{2}\big \rfloor, \qquad \deg d_k=\big \lfloor \frac{k}{2}\big \rfloor, \qquad c_k(0)=0, \quad d_k(0)=1.   
\end{align}
\end{subequations} 
\end{theorem}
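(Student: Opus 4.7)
The plan is to verify each of the four clauses of \eqref{eq:Papprox} directly from the definition $\hat S_k(z) = C_k(z)\dots C_1(z)$, using \eqref{eq:Papprox2} and the explicit form of the matrices $C_j(z)$.

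First, the interpolation conditions are an immediate consequence of \eqref{eq:Papprox2}: reading off the two components of the vector equation at the prescribed evaluation points $z = -1/m_{n-i+1}^2$ yields $a_k(z) W(z) + b_k(z) = 0$ and $c_k(z) W(z) + d_k(z) = 0$ at those points, which is precisely what the interpolation conditions assert.

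Second, the normalization at $z=0$ follows by inspection: each factor $C_j(0)$ is upper unitriangular (its lower-left entry $z g_j$ vanishes at $z=0$), and the class of upper unitriangular $2\times 2$ matrices is closed under multiplication, so $\hat S_k(0)$ is itself upper unitriangular; reading off entries gives $a_k(0) = 1$, $c_k(0) = 0$, $d_k(0) = 1$.

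For the degree claims, I would use the recurrence $\hat S_k = C_k\,\hat S_{k-1}$, which on components reads
\begin{align*}
  a_k &= a_{k-1} - h_{n-k+1}\, c_{k-1}, & b_k &= b_{k-1} - h_{n-k+1}\, d_{k-1}, \\
  c_k &= z g_{n-k+1}\, a_{k-1} + c_{k-1}, & d_k &= z g_{n-k+1}\, b_{k-1} + d_{k-1}.
\end{align*}
An induction on $k$, with base case $k=1$ (where $\hat S_1 = C_1$ manifestly matches the claimed degrees), gives the upper bounds $\deg a_k \le \lfloor k/2\rfloor$, $\deg b_k \le \lfloor (k-1)/2\rfloor$, $\deg c_k \le \lfloor (k+1)/2\rfloor$, $\deg d_k \le \lfloor k/2\rfloor$.

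The main obstacle is showing that these bounds are saturated. Depending on the parity of $k$, the two summands in each recurrence can contribute polynomials of identical top degree, so in principle the leading coefficients could cancel. I would exclude this by tracking the leading coefficients of $a_k, b_k, c_k, d_k$ inductively. The hypothesis $g_j, h_j > 0$ imposes a predictable (essentially four-periodic in $k$) sign pattern on these leading coefficients, so whenever two summands of equal degree appear they combine without cancellation. This closes the induction and establishes exact equality of all four degree statements.
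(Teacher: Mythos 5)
Your proposal is correct and follows essentially the same route as the paper: the interpolation conditions are read off directly from \eqref{eq:Papprox2}, and the degree and normalization statements are established by induction on $k$ using the factorization $\hat S_k = C_k \hat S_{k-1}$ and the explicit form of $C_j$. The paper's own proof is a one-line appeal to this induction; your tracking of leading-coefficient signs (which is consistent with the explicit formulas later recorded in Theorem \ref{thm:explicit S-hat}, where the coefficients of $(-z)^j$ are positive sums of products of the $g$'s and $h$'s) correctly supplies the non-cancellation step needed to turn the obvious degree upper bounds into the exact equalities the theorem asserts.
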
 
\begin{proof} 
The approximation statements follow directly from \eqref{eq:Papprox2}, while  
the degrees follow, by induction, from the definition of $C_j$ and 
the formula for $\hat S_k$.  
\end{proof}
\begin{remark} The interpolation \eqref{eq:Papprox} is an example of a \textit {Cauchy-Jacobi interpolation problem}\cite{gonvcar1978markov,meinguet1970on,stahl1987existence}, studied as part of a general multi-point Pad\'{e} approximation theory \cite{baker1996pade}.
\end{remark}

Before we solve the interpolation problem it is helpful to understand 
how information about the measures $g$ and $h$ is encoded in the coefficients 
$a_j(z), b_j(z), c_j(z), d_j(z)$.  To this end we define another 
initial value problem, following the general philosophy of scattering theory, this time 
specifying initial conditions at $x=+\infty$.  
\begin{equation}\label{eq:xLaxIVP+}
    \hi_x=\begin{bmatrix}0 &h\\
     -z g& 0 \end{bmatrix} \hi, \qquad \hi_1(+\infty)=1, \quad \hi_2(+\infty)=0,  
\end{equation}
and seeking, in contrast to \eqref{eq:xLaxIVP}, the right-continuous solutions, 
interpreting the products $\hi_{a}\delta_{x_j}$ as ~$\hi_{a}\delta_{x_j}=\hi_a(x_j)\delta_{x_j}, a=1,2$. 
Subsequently we define the (right) boundary value problem: 
\begin{equation}\label{eq:xLaxBVP+}
 \hi_x=\begin{bmatrix}0 &h\\
     -z g& 0 \end{bmatrix} \hi, \qquad \hi_1(-\infty)=0, \quad \hi_2(+\infty)=0, 
\end{equation}
seeking right continuous solutions.  
\begin{remark} 
We refer to \eqref{eq:xLaxBVP+} as the (right) boundary value problem, 
even though it is formally the same boundary value problem as ~$\eqref{eq:xLaxBVP}$
but we stress that the rules of defining the singular operation of multiplication 
of a measure by piecewise-smooth functions has changed, hence, we don't know 
~\textit{a priori} if the boundary value problems are indeed the same.  We will 
establish below that they are. 
\end{remark}

\begin{lemma}\label{lem:forwardR+}
Let $\hat q_j=\hi_1(x_{j'}-), ~\hat p_j=~\hi_2(x_{j'}-)$,  where $j'=n+1-j$. 

\noindent Then the difference form of the 
(right) boundary value problem \eqref{eq:xLaxBVP+} reads: 
\begin{equation} \label{dstring+}
\begin{gathered}
\begin{aligned}
     \hat q_{j}-\hat q_{j-1}&=-h_{j'} \hat p_{j-1}, & 1\leq j\leq n, \\
     \hat p_{j}-\hat p_{j-1}&=z g_{j'}\hat q_{j-1},& 1\leq j\leq n,\\
       \hat p_0&=0, \quad \hat q_{n}=0.   & 
  \end{aligned}
\end{gathered}
\end{equation}
  \end{lemma}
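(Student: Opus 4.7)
The plan is to follow the same strategy used to derive Lemma \ref{lem:forwardR}, adapted to the right-continuous convention and to the reversed indexing $j'=n+1-j$ in place here. Since $g$ and $h$ are discrete measures supported on $\{x_1,\dots,x_n\}$, the distributional equation forces $\hi$ to be locally constant away from this finite set, so all the content of \eqref{eq:xLaxBVP+} is encoded in the jumps at the points $x_{j'}$ together with the behaviour at $\pm\infty$.

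First I would integrate the distributional ODE across a small neighbourhood of each $x_{j'}$. Using the prescribed product rule $\hi_a \delta_{x_{j'}} = \hi_a(x_{j'}) \delta_{x_{j'}}$ and the right-continuous convention $\hi_a(x_{j'}) = \hi_a(x_{j'}+)$, this produces the jump relations
\begin{align*}
\hi_1(x_{j'}+) - \hi_1(x_{j'}-) &= h_{j'}\,\hi_2(x_{j'}+),\\
\hi_2(x_{j'}+) - \hi_2(x_{j'}-) &= -z\,g_{j'}\,\hi_1(x_{j'}+).
\end{align*}
Next, I would translate these identities into the variables $\hat q_j, \hat p_j$. Since $\hi$ is constant on the open interval $(x_{j'}, x_{j'+1})$ and $(j-1)' = j'+1$, the value $\hi_a(x_{j'}+)$ coincides with $\hi_a(x_{(j-1)'}-)$, so that $\hi_1(x_{j'}+)=\hat q_{j-1}$, $\hi_1(x_{j'}-)=\hat q_j$, and analogously for the second component. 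Substituting into the jump relations and rearranging produces exactly the recurrence claimed in \eqref{dstring+}. The boundary data $\hat p_0 = 0$ and $\hat q_n = 0$ then follow by reading off the appropriate one-sided limits: $\hat p_0 = \hi_2(x_{n+1}-) = \hi_2(+\infty) = 0$ and $\hat q_n = \hi_1(x_1-) = \hi_1(-\infty) = 0$, the conditions imposed in the BVP \eqref{eq:xLaxBVP+}.

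The only real obstacle is bookkeeping: one must be vigilant about the orientation-reversed indexing, so that $\hat q_{j-1}$ corresponds to the value to the \emph{right} of $x_{j'}$ (not the left, as one might naively expect), and about the switch from the left-continuous convention of Lemma \ref{lem:forwardR} to the right-continuous one dictated by the initial data posed at $+\infty$. Once those conventions are tracked consistently, each step is a routine calculation parallel to the one already carried out for the left boundary value problem, and no new analytical input is required.
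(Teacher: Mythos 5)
Your proof is correct and follows essentially the same route the paper takes (the paper states this lemma without a written proof, treating it as the mirror image of Lemma \ref{lem:forwardR}): integrating the distributional equation across each support point under the right-continuous convention, translating via $(j-1)'=j'+1$, and reading off the boundary data from the constancy of $\hi$ on $(-\infty,x_1)$ and $(x_n,+\infty)$. The index bookkeeping, including the identification $\hi_a(x_{j'}+)=\hi_a(x_{(j-1)'}-)$ and the sign flips relative to \eqref{dstring}, is handled correctly throughout.
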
 
 The accompanying initial value problem is chosen for the remainder of the 
 discussion to
 have  initial conditions $\hat q_0=1,\, \hat p_0=0$. Furthermore, 
the notation $j'=n+1-j$ (reflection of the interval $[1,n]$, or counting from the right 
end $n$ ) is in force from this point onward.  
\begin{lemma} The difference form of the (right) boundary value problem \eqref{dstring+} can be written in matrix form
\begin{equation}\label{transition+}
\begin{bmatrix}
  \hat q_{j}\\
  \hat p_{j}
\end{bmatrix}
=\widehat T_j\begin{bmatrix}
  \hat q_{j-1}\\
  \hat p_{j-1}
\end{bmatrix}, \qquad\qquad
\widehat T_j=\begin{bmatrix}
  1& -h_{j'}\\
  z g_{j'}&1
\end{bmatrix}, 
\end{equation}
and $C_{j'}(z)$, the transpose of the cofactor matrix of $T_j(z)$ appearing in \eqref{eq:Papprox2}, satisfies 
\begin{equation*}
C_{j}(z)=\hat T_{j}(z), \qquad 1\leq j\leq n,  
\end{equation*}
 and its product $\hat S_k(z)$, also defined in Theorem \ref{thm:CJ interp}, is 
 the transition matrix for the right boundary value problem, namely, 
\begin{equation*} 
\hat S_k(z)=\hat T_k(z) \cdots \hat T_1(z). 
\end{equation*} 

\end{lemma}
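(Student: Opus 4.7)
The proof decomposes naturally into three verifications, each essentially bookkeeping, with the only real subtlety being the consistency of index conventions.

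First I will establish the matrix form of \eqref{dstring+}. Rewriting the two scalar recurrences as
\begin{equation*}
\hat q_j = \hat q_{j-1} - h_{j'}\hat p_{j-1}, \qquad \hat p_j = zg_{j'}\hat q_{j-1} + \hat p_{j-1},
\end{equation*}
and collecting coefficients yields precisely the stated matrix $\widehat T_j$. This step is purely a rearrangement.

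Next I will show $C_j(z) = \widehat T_j(z)$. By the definition given just before \eqref{eq:Papprox2}, the adjugate of $T_j(z) = \begin{bmatrix} 1 & h_j \\ -zg_j & 1 \end{bmatrix}$ equals $\begin{bmatrix} 1 & -h_j \\ zg_j & 1 \end{bmatrix}$, and this is labelled $C_{n-j+1}(z)$. Reindexing by $k = n-j+1 = j'$ converts this into $C_k(z) = \begin{bmatrix} 1 & -h_{k'} \\ zg_{k'} & 1 \end{bmatrix}$, which matches $\widehat T_k(z)$ by definition. The one thing to be careful about is that the convention $j' = n+1-j$ is an involution, so both directions of reindexing are consistent.

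Finally, having identified $C_j = \widehat T_j$, the factorization $\hat S_k(z) = \widehat T_k(z)\cdots \widehat T_1(z)$ is immediate from the defining expression $\hat S_k(z) = C_k(z)\cdots C_1(z)$ introduced in Theorem \ref{thm:CJ interp}. That $\hat S_k$ is the transition matrix of the right initial value problem is then a one-line induction: applying the matrix recurrence from Step~1 repeatedly to the initial vector $\begin{bmatrix} \hat q_0 \\ \hat p_0 \end{bmatrix} = \begin{bmatrix} 1 \\ 0 \end{bmatrix}$ gives $\begin{bmatrix} \hat q_k \\ \hat p_k \end{bmatrix} = \widehat T_k \cdots \widehat T_1 \begin{bmatrix} 1 \\ 0 \end{bmatrix} = \hat S_k \begin{bmatrix} 1 \\ 0 \end{bmatrix}$.

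There is no genuine analytic obstacle here; the only thing worth flagging is the index juggling. The left boundary value problem iterates from $k=1$ to $k=n$ in the natural direction, whereas the right problem iterates from $x = +\infty$ inward, which is why the transfer matrices associated with the right problem end up being precisely the adjugates of the left transfer matrices, with indices reflected via $j \mapsto n+1-j$. Keeping this reflection straight throughout is the one place where a careless miscount could creep in.
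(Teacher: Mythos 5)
Your proposal is correct and follows exactly the route the paper intends: the paper states this lemma without proof, treating it as immediate bookkeeping, and your three verifications (rearranging \eqref{dstring+} into matrix form, identifying the adjugate $C_{n-j+1}$ of $T_j$ with $\widehat T_{n-j+1}$ via the involution $j\mapsto j'=n+1-j$, and then reading off $\hat S_k = C_k\cdots C_1 = \widehat T_k\cdots\widehat T_1$ as the transition matrix) are precisely the intended argument. The index reflection you flag is indeed the only point where care is needed, and you handle it correctly.
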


\begin{lemma} \label{lem:seriessol+}  
Consider the initial value problem given by equation \eqref{eq:xLaxIVP+} and let us set 
 \begin{equation*} 
             \hi_1(x)=\sum\limits_{0\leq k} \hi_1^{(k)}(x) z^k, \quad \qquad \hi_2(x)=            \sum\limits_{0\leq k} \hi_2^{(k)}(x) z^k.  
 \end{equation*}
Then 
\begin{subequations} 
\begin{align}
\hi_1^{(0)}(x)&=1,    \qquad   \hi_2^{(0)}(x)=0, &\\ 
\hi_1^{(k)}(x)&=(-1)^k \int\limits_{x<\xi_1<\eta_1<\dots<\xi_k<\eta_k}
\big[\prod_{j=1}^k h(\xi_j)g(\eta_j)\big] \,\,  d\xi_1\dots d\eta_k,    \, &1\leq k, \\
\hi_2^{(k)}(x)&=(-1)^{k -1}\int\limits_{x<\eta_0<\xi_1<\eta_1<\dots<\xi_{k-1}<\eta_{k-1}}
g(\eta_0)\big[\prod_{j=1}^{k-1} h(\xi_j)g(\eta_j)\big]   \, \,  d\eta_0 d\xi_1\dots d\eta_{k-1},  \,   &1\leq k, 
\end{align} 
\end{subequations}
where, for $k=1$, $\prod_{j=1}^{0}$ is defined to be $1$ and the 
integration is carried out with respect to $\eta_0$ only.  

Furthermore, 
if the points of the support of $g$ (and $h$) are ordered $x_1<x_2<\dots<x_n$ 
then 
 \begin{subequations} \label{eq:iterativesol}
\begin{align}
\hi_1^{(k)}(x)&=(-1)^k\sum_{\substack{i_1<j_1<\dots<i_k<j_k\\x<x_{i_1}}}
\, \big[\prod_{l=1}^k h_{i_l}g_{j_l}\big] \,\,   ,    \, &\\
\hi_2^{(k)}(x)&=(-1)^{k-1} \sum_{\substack{j_0<i_1<j_1<\dots<i_k<j_k\\x<x_{j_0}}}
\, g_{j_0} \big[\prod_{l=1}^{k-1} g_{i_l}h_{j_l}\big] \,\,  . \,  &
\end{align} 
\end{subequations} 
\end{lemma}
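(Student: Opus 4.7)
My plan is to mirror the proof of Lemma \ref{lem:seriessol-}, with the adaptations dictated by the fact that the initial conditions are now imposed at $x=+\infty$ and that we work with right-continuous representatives for the product of $\hi$ with the discrete measures $g,h$.

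First I would translate the initial value problem \eqref{eq:xLaxIVP+} into integral form. Integrating the ODEs $\hi_{1,x}=h\,\hi_2$ and $\hi_{2,x}=-zg\,\hi_1$ from $x$ to $+\infty$ and using the prescribed values $\hi_1(+\infty)=1$, $\hi_2(+\infty)=0$ yields
\begin{align*}
\hi_1(x) &= 1 - \int_{\xi>x} \hi_2(\xi)\,h(\xi)\,d\xi, \\
\hi_2(x) &= z\int_{\xi>x} \hi_1(\xi)\,g(\xi)\,d\xi,
\end{align*}
where $\int_{\xi>x}$ denotes integration over the strict half-line $\{\xi:\xi>x\}$. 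This strict-inequality convention is precisely the one which delivers right-continuous representatives consistent with the multiplication rule $\hi_a\delta_{x_j}=\hi_a(x_j)\delta_{x_j}$ recalled just before Lemma \ref{lem:forwardR+}, and similarly for the iterated integrals over regions $\{x<\xi_1<\eta_1<\dots\}$.

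Next I would substitute each equation into the other to obtain self-referential Volterra-type equations,
\begin{align*}
\hi_1(x) &= 1 - z\int\limits_{x<\xi_1<\eta_1} h(\xi_1)g(\eta_1)\,\hi_1(\eta_1)\,d\xi_1 d\eta_1, \\
\hi_2(x) &= z\int_{x<\eta_0} g(\eta_0)\,d\eta_0 - z^2\int\limits_{x<\eta_0<\xi_1<\eta_1} g(\eta_0)h(\xi_1)g(\eta_1)\,\hi_2(\eta_1)\,d\eta_0 d\xi_1 d\eta_1,
\end{align*}
and iterate. Each iteration contributes one additional factor of $-z$ together with one more pair $h\,d\xi\,g\,d\eta$ tacked onto the right end of the integration domain. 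This produces exactly the alleged nested-integral formulas for $\hi_1^{(k)}$ and $\hi_2^{(k)}$: the sign $(-1)^k$ appears for $\hi_1$, and $(-1)^{k-1}$ for $\hi_2$, because the iteration for $\hi_2$ is seeded by the single-$z$ term $z\int_{x<\eta_0} g(\eta_0)d\eta_0$ and hence carries one less sign change. Convergence/termination is automatic: the measures $g,h$ have finite support, so all iterated integrals over strictly increasing chains of points in that support vanish beyond a finite order.

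Finally, specializing to $g=\sum g_j\delta_{x_j}$ and $h=\sum h_j\delta_{x_j}$ with $x_1<x_2<\dots<x_n$, the nested integrals collapse to sums over strictly increasing multi-indices in $[n]$, with the leftmost variable pinned by $x<x_{i_1}$ (for $\hi_1^{(k)}$) or $x<x_{j_0}$ (for $\hi_2^{(k)}$); this yields \eqref{eq:iterativesol}. The only real subtlety — and the one item I would treat with explicit care — is the strict-inequality boundary: it is what makes the resulting $\hi_a$ actually right-continuous and what guarantees that when $x$ coincides with some $x_j$ the sums correctly exclude the mass at that $x_j$, so that the prescribed boundary conditions at $+\infty$ and the multiplication convention are simultaneously honored. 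Everything else is a verbatim transcription of the argument in Lemma \ref{lem:seriessol-} with the orientation of the integration interval reversed.
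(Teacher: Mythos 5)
Your approach is the same as the paper's: Lemma \ref{lem:seriessol+} is intended to be proven exactly like Lemma \ref{lem:seriessol-}, by converting the initial value problem anchored at $+\infty$ into integral equations with strict-inequality (right-continuous) integration domains and iterating, and your sign bookkeeping, termination argument, and final collapse to sums over strictly increasing multi-indices are all correct. One display needs repair, however: in your second ``Volterra-type'' equation the unknown under the triple integral must be $\hi_1(\eta_1)$, not $\hi_2(\eta_1)$. The genuinely self-referential equation for $\hi_2$ alone is
\begin{equation*}
\hi_2(x)=z\int\limits_{x<\eta_0} g(\eta_0)\,d\eta_0-z\int\limits_{x<\eta_0<\xi_1} g(\eta_0)h(\xi_1)\,\hi_2(\xi_1)\,d\eta_0\,d\xi_1,
\end{equation*}
carrying a single factor of $-z$ and the kernel pair $g\,h$; iterating the identity as you literally wrote it would insert two consecutive factors of $g$ and shift the powers of $z$ (for instance it forces $\hi_2^{(2)}=0$, contradicting the claimed formula). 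With that one-symbol correction (or, equivalently, keeping your three-factor kernel but with $\hi_1(\eta_1)$ inside), the iteration is seeded at order $z$ for $\hi_2$ and at order $z^0$ for $\hi_1$, and produces exactly the stated nested-integral formulas with signs $(-1)^{k-1}$ and $(-1)^k$ respectively.
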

Clearly, by setting $\hat q_k=\hi_1(x_{k'}-), \hat p_k=\hi_2(x_{k'}-)$, 
with the help of \eqref{eq:iterativesol}, 
we obtain the solution to difference equations \eqref{dstring+} to 
initial conditions $\hat q_0=1, \hat p_0=0$.  The procedure can be 
repeated for the case of initial conditions $\hat Q_0=0, \hat P_0=1$, yielding 
a complementary solution to \eqref{dstring+}.  We will skip the intermediate steps
since they are very similar to the computations leading up to Lemma \ref{lem:seriessol+}.  
To state the final result we remark that the map $i\to i'=n+1-i$ 
is a bijection between $[1,k]$ and $[n+1-k, n]$.  This map 
can be lifted to multi-indices $I \in \binom{[1,k]}{j}$ introduced earlier, in particular 
given $I={i_1<i_2<\cdots<i_j} \in \binom{[1,k]}{j}$ let us denote by $I'$ 
its image $\{i_1'>i_2'>\cdots>i_j'\} \in \binom{[n-k+1,n]}{j}$.  

\begin{theorem} \label{thm:explicit S-hat}
Consider the right boundary value problem \eqref{dstring+} 
with its transition matrix
\begin{equation*}
\hat S_k=\hat T_k\cdots \hat T_1.  
\end{equation*}
Then 
\begin{equation*}
\hat S_k=\begin{bmatrix} \hat q_k& \hat Q_k\\ \hat p_k& \hat P_k \end{bmatrix}, \quad 1\leq k\leq n,
\end{equation*}
where 
\begin{subequations}
\begin{align}
\hat q_{k}(z)&=1+\sum_{j=1}^{\lfloor\frac{k}{2}\rfloor}\Big(\sum_{\substack{I,J \in \binom{[k]}{j}\\ I<J}} g_{I'} h_{J'}
\, \Big)(-z)^j, 
&&\hat Q_{k}(z)=-\sum_{j=0}^{\lfloor\frac{k-1}{2}\rfloor}\Big(\sum_{\substack{I\in \binom{[k]}{j+1}, J\in \binom{[k]}{j}\\ I<J}} h_{I'}g_{J'}
\, \Big)(-z)^j, \\
\hat p_k(z)&=
-\sum_{j=1}^{\lfloor\frac{k+1}{2}\rfloor}\Big(\sum_{\substack{I\in \binom{[k]}{j}, J\in \binom{[k]}{j-1}\\ I<J}}
\, g_{I'} h_{J'} \Big) (-z)^j, 
&&\hat P_k(z)=1+
\sum_{j=1}^{\lfloor\frac{k}{2}\rfloor}\Big(\sum_{\substack{I, J\in \binom{[k]}{j}\\ I<J}}
h_{I'} g_{J'} \Big) (-z)^j.  
\end{align}
\end{subequations} 

\end{theorem}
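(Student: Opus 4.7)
The plan is to identify each column of $\hat S_k$ with the value at $x_{k'}-$ of a solution of the initial value problem \eqref{eq:xLaxIVP+} (or of its complement), and then to read off the polynomial formulas directly from the iterated-integral expressions provided by Lemma~\ref{lem:seriessol+}. By construction $\hat S_k$ is the transition matrix of \eqref{transition+}, so its two columns are respectively the solutions $(\hat q_k,\hat p_k)^T$ and $(\hat Q_k,\hat P_k)^T$ with initial data $(1,0)^T$ and $(0,1)^T$. The first of these is precisely $(\hi_1(x_{k'}-),\hi_2(x_{k'}-))^T$ for the solution of \eqref{eq:xLaxIVP+}, so specialising the sums \eqref{eq:iterativesol} at $x=x_{k'}-$ yields closed-form expressions for $\hat q_k(z)$ and $\hat p_k(z)$; the constraint $x<x_{i_1}$ (respectively $x<x_{j_0}$) forces all summation indices to lie in the reflected range $[n-k+1,n]$.

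The next step is to repackage these range-restricted sums in the multi-index language of Definition~\ref{def:bigIndi}. The reflection $i\mapsto i'=n+1-i$ is an order-reversing bijection between $[1,k]$ and $[n-k+1,n]$, so a strictly interlacing pair $I,J\in\binom{[k]}{j}$ with $I<J$ corresponds to an alternating tuple $j_j'<i_j'<\dots<j_1'<i_1'$ in $[n-k+1,n]$, and similarly for the unequal case $I\in\binom{[k]}{j+1}$, $J\in\binom{[k]}{j}$. Matching the alternating product $h_{i_\ell}g_{j_\ell}$ appearing in \eqref{eq:iterativesol} against $g_{I'}h_{J'}$ (or $h_{I'}g_{J'}$, depending on which slot starts with which letter) reproduces the stated formulas for $\hat q_k$ and $\hat p_k$; the sign $(-1)^j$ from Lemma~\ref{lem:seriessol+} combines with $z^j$ into $(-z)^j$, and the degree bounds $\lfloor k/2\rfloor$, $\lfloor (k+1)/2\rfloor$, etc., come from the maximal number of interlacingly placed indices in $[k]$.

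For the second column I would repeat the argument on the complementary initial value problem, in which $\hi_1(+\infty)=0$ and $\hi_2(+\infty)=1$. The corresponding integral equations read $\hi_1(x)=-\int_{x<\xi}\hi_2(\xi)h(\xi)\,d\xi$ and $\hi_2(x)=1+z\int_{x<\xi}\hi_1(\xi)g(\xi)\,d\xi$, so iterating produces series of the same combinatorial shape as in Lemma~\ref{lem:seriessol+} but with the outermost integration weighted by $h$ rather than $g$; this is exactly what accounts for the $h_{I'}g_{J'}$ pattern and the overall minus sign in $\hat Q_k$ and $\hat P_k$. The main obstacle I expect is the bookkeeping in the reindexing step: one must carefully verify that, under $i\mapsto i'$, the interlacing condition on $[k]$ pulls back to exactly the chain $i_1<j_1<\dots<i_j<j_j$ (or its companion with an extra $i_{j+1}$) from \eqref{eq:iterativesol}, and that the alternating assignment of $g$'s and $h$'s matches the slot into which each $I'$ or $J'$ index falls for each of the four matrix entries. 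Once this dictionary is installed, the degree, constant-term, and sign claims are all automatic.
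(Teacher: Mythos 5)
Your proposal is correct and follows essentially the same route as the paper: the paper also identifies the two columns of $\hat S_k$ with the solutions of \eqref{dstring+} having initial data $(1,0)^T$ and $(0,1)^T$, obtains the first column by evaluating the series solution of \eqref{eq:xLaxIVP+} from Lemma~\ref{lem:seriessol+} at $x_{k'}-$ via \eqref{eq:iterativesol}, handles the second column by the complementary initial value problem with $\hi_1(+\infty)=0$, $\hi_2(+\infty)=1$ (whose intermediate steps the paper skips but you correctly write out), and finishes by lifting the reflection $i\mapsto i'=n+1-i$ to multi-indices. The only difference is one of explicitness: you spell out the complementary integral equations and the reindexing bookkeeping that the paper leaves implicit.
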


\begin{corollary}\label{cor:gformulas}
Let $\hat S_k$ be the transition matrix for the right 
boundary value problem as specified above. 
 
\hbox{}
\begin{enumerate} 
\item The entries of $\hat S_k$ solve 
the interpolation problems \eqref{eq:Papprox}, that is: $\hat q_k, \hat p_k, \hat Q_k, \hat P_k$ satisfy 
\begin{subequations}\label{eq:Papproxbis}
\begin{align}
&\hat q_k(-\frac{1}{m_{i'}^2})W(-\frac{1}{m_{i'}^2})+\hat Q_k(-\frac{1}{m_{i'}^2})
=0, \qquad 1\leq i\leq k, \label{eq:Papproxbis1}\\
&\deg \hat q_k=\big\lfloor \frac{k}{2} \big \rfloor, \qquad \deg \hat Q_k=\big \lfloor \frac{k-1}{2} \big\rfloor, \qquad \hat q_k(0)=1, \label{eq:Papproxbis1nc}\\
\notag\\
&\hat p_k(-\frac{1}{m_{i'}^2})W(-\frac{1}{m_{i'}^2})+\hat P_k(-\frac{1}{m_{i'}^2})
=0, \qquad 1\leq i\leq k,  \label{eq:Papproxbis2}\\
&\deg \hat p_k=\big \lfloor \frac{k+1}{2}\big \rfloor, \qquad \deg \hat P_k=\big \lfloor \frac{k}{2}\big \rfloor, \qquad \hat p_k(0)=0, \quad \hat P_k(0)=1.   \label{eq:Papproxbis2nc}
\end{align}
\end{subequations} 
 \item Given $f(z)\in \C[z]$, let $\hoc{f}$ denote the coefficient of the term of the highest degree . 
Then
\begin{subequations}
\begin{align}
g_{k'}&=\frac{\hoc{\hat p_k}}{\hoc{\hat q_{k-1}}},   &\text{if k  is odd,} \label{eq:inversegodd}\\
g_{k'}&=\frac{\hoc{\hat P_k}}{\hoc{\hat Q_{k-1}}}, &\text{if k is even.}\label{eq:inversegeven}
\end{align} 
\end{subequations} 
\item The right boundary value problem \eqref{eq:xLaxBVP+} has the same spectrum as the left boundary value problem \eqref{eq:xLaxBVP}.  
\end{enumerate} 
\end{corollary}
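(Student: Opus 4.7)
The plan is to prove the three parts in sequence, leveraging the identification $C_j(z) = \hat T_j(z)$ established in the lemma preceding the corollary and the explicit combinatorial formulas from Theorem \ref{thm:explicit S-hat}.

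For part (1), the interpolation conditions are essentially restatements of Theorem \ref{thm:CJ interp}. Since $\hat S_k(z) = C_k(z) \cdots C_1(z)$ and the entries of $\hat S_k$ are by definition the matrix $\begin{bmatrix} \hat q_k & \hat Q_k \\ \hat p_k & \hat P_k \end{bmatrix}$, the polynomials $a_k, b_k, c_k, d_k$ of Theorem \ref{thm:CJ interp} are precisely $\hat q_k, \hat Q_k, \hat p_k, \hat P_k$. Hence \eqref{eq:Papproxbis1} and \eqref{eq:Papproxbis2} follow immediately after noting $-1/m_{n-i+1}^2 = -1/m_{i'}^2$. The degree bounds and the normalizations $\hat q_k(0) = 1$, $\hat p_k(0) = 0$, $\hat P_k(0) = 1$ can be read off directly from the explicit formulas in Theorem \ref{thm:explicit S-hat}.

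For part (2) the key ingredient is the recursion $\hat S_k = \hat T_k \hat S_{k-1}$, which yields
\begin{equation*}
\hat p_k = z g_{k'}\, \hat q_{k-1} + \hat p_{k-1}, \qquad \hat P_k = z g_{k'}\, \hat Q_{k-1} + \hat P_{k-1}.
\end{equation*}
If $k$ is odd, Theorem \ref{thm:explicit S-hat} gives $\deg \hat p_k = (k+1)/2$, $\deg(z\hat q_{k-1}) = (k+1)/2$, and $\deg \hat p_{k-1} = (k-1)/2$, so comparing the coefficient of $z^{(k+1)/2}$ on both sides produces \eqref{eq:inversegodd}. If $k$ is even, one instead uses $\deg \hat P_k = k/2$, $\deg(z\hat Q_{k-1}) = k/2$, and $\deg \hat P_{k-1} = k/2 - 1$, and the same leading-coefficient comparison yields \eqref{eq:inversegeven}.

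For part (3) I would establish the sharper identity $\hat q_n(z) = p_n(z)$, from which equality of spectra is automatic. The cleanest route uses the cofactor identity $T_j \hat T_{j'} = (1+zm_j^2)\, I$, which gives $\hat T_{j'} = (1+zm_j^2)\, T_j^{-1}$; telescoping the product produces
\begin{equation*}
\hat S_n = \hat T_n \cdots \hat T_1 = \Bigl(\prod_{j=1}^n (1+zm_j^2)\Bigr)\, (T_n T_{n-1} \cdots T_1)^{-1},
\end{equation*}
and since $\det(T_n \cdots T_1) = \prod_j (1+zm_j^2)$, an elementary $2\times 2$ inversion identifies the $(1,1)$-entry of $\hat S_n$ with the $(2,2)$-entry of $T_n \cdots T_1$, namely $p_n$. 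Hence $\hat q_n = p_n$, and the two boundary value problems share their spectra. The main (but mild) obstacle is the parity book-keeping in part (2), where one must check that the degree bounds in Theorem \ref{thm:explicit S-hat} are attained on the correct side of the recursion in each parity; everything else is formal manipulation.
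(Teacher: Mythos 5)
Your proof is correct; part (1) coincides with the paper's argument, but parts (2) and (3) take genuinely different routes. For part (2) the paper evaluates the leading coefficients directly from the combinatorial formulas of Theorem \ref{thm:explicit S-hat}: at top degree the interlacing condition $I<J$ admits exactly one pair, so that for odd $k$ one has $\hoc{\hat p_k}=(-1)^{\frac{k-1}{2}}g_{1'}h_{2'}\cdots h_{(k-1)'}g_{k'}$ and $\hoc{\hat q_{k-1}}=(-1)^{\frac{k-1}{2}}g_{1'}h_{2'}\cdots h_{(k-1)'}$, whose ratio is $g_{k'}$. Your alternative---comparing leading coefficients across the one-step recursion $\hat S_k=\hat T_k\hat S_{k-1}$---is cleaner and avoids identifying the extremal interlacing pair, but it tacitly requires that the nominal degrees of $\hat q_{k-1}$ (resp.\ $\hat Q_{k-1}$) are attained; this does hold, because each top coefficient is, up to an overall sign, a single product of positive $g$'s and $h$'s, which is exactly what the paper's computation makes visible and what underlies the exact degree equalities you cite from part (1). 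For part (3) the paper compares the explicit formula for $\hat q_n$ with that for $p_n$ in Corollary \ref{cor:solqkpk}, using that the reflection $i\mapsto n+1-i$ carries interlacing pairs $I<J$ to interlacing pairs $J'<I'$; you instead telescope $\hat T_j=(1+zm_{j'}^2)\,T_{j'}^{-1}$ to get $\hat S_n=\adj(T_n\cdots T_1)$, whence $\hat q_n=\bigl(\adj(T_n\cdots T_1)\bigr)_{11}=(T_n\cdots T_1)_{22}=p_n$, the last identification being valid since $(q_n,p_n)^T$ is the second column of $T_n\cdots T_1$. Your route is purely algebraic, produces all four entries of $\hat S_n$ in terms of $T_n\cdots T_1$ at once, and sidesteps the combinatorics entirely, while the paper's keeps the positivity structure of the coefficients in view; both arguments are sound.
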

\begin{proof}

The interpolation problem was stated in Theorem \ref{thm:CJ interp} 
for the matrix elements of $C_{k}C_{(k-1)}\cdots C_{1}$ (and that's where $\hat S_k$ was introduced). 
However, by Theorem \ref{thm:explicit S-hat}, $\hat S_k$ is the same as $\begin{bmatrix}\hat q_k & \hat Q_k\\ \hat p_k& \hat P_k \end{bmatrix}$, hence the first claim. 

To prove the second claim with consider first the case of odd $k$.  
Then, by the formulas in Theorem \ref{thm:explicit S-hat}
we get: 
\begin{align*} 
\hoc{\hat q_{k-1}}&=(-1)^{\frac{k-1}{2}}\sum_{\substack{I,J \in \binom{[k-1]}{\frac{k-1}{2}}\\ I<J}} g_{I'} h_{J'}=(-1)^{\frac{k-1}{2}}g_{1'}h_{2'}g_{3'}\cdots g_{(k-2)'}h_{(k-1)'}, 
\\
\hoc{\hat p_{k}} &=- (-1)^{\frac{k+1}{2}}\sum_{\substack{I \in \binom{[k]}{\frac{k+1}{2}}, J \in \binom{[k]}{\frac{k-1}{2}}\\ I<J}} g_{I'} h_{J'}=(-1)^{\frac{k-1}{2}}g_{1'}h_{2'}g_{3'}\cdots g_{(k-2)'}h_{(k-1)'}g_{k'}, 
\end{align*}
whose ratio gives the desired formula for $g_{k'}$, recalling that $\hat q_0=1$  to cover the case of $k=1$.  The argument 
for even $k$ is similar except that one uses the formulas for the second column 
of $\hat S_k$.  

Finally, to prove the last claim, we observe that the map $i\rightarrow n+1-i$ is a bijection of the set $[n]$.  
Upon comparing Corollary \ref{cor:solqkpk} with the formula for $\hat q_n$ given above we see that $\hat q_{n}(z)=p_{n}(z)$, hence the two boundary value problems 
are equivalent. 
\end{proof}


\subsection{Solving the inverse problem by interpolation}

The inverse problem we are interested 
in solving explicitly can be stated as follows: 
\begin{definition} \label{def:ISP}
Given a rational function (see Theorem \ref{thm:W})
\begin{equation} \label{eq:Wc}
W(z)=c+\int \frac{d\mu(x)}{x-z}, \qquad d\mu=\sum_{i=1}^{\lfloor \frac n2 \rfloor} 
b_j \delta_{\zeta _j}, \qquad 0<\zeta_1<\dots< \zeta_{\lfloor \frac n2 \rfloor}, \qquad 0<b_j,   \quad 1\leq j\leq \floor{\frac n2}, 
\end{equation}
~ where 
$c>0$ when $n$ is odd 
and $c=0$ when $n$ even, as well as 
positive, distinct, constants $m_1,m_2,\dots, m_n$,  
find positive constants $g_j, h_j$, $1\leq j\leq n$, such 
that $g_jh_j=m_j^2$ and the unique solution of the initial value problem: 
\begin{equation*} 
\begin{gathered}
\begin{aligned}
     q_{k}-q_{k-1}&=h_kp_{k-1}, & 1\leq k\leq n, \\
     p_{k}-p_{k-1}&=-z g_kq_{k-1},& 1\leq k\leq n,\\
      q_0=0, &\quad  p_0=1, &  
  \end{aligned}
\end{gathered}
\end{equation*} 
satisfies 
\begin{equation*}
W(z)=\frac{q_n(z)}{p_n(z)}.  
\end{equation*} 
\end{definition}
\begin{remark} The restriction that the constants $m_j$ be distinct has been made to 
facilitate the argument and 
will be eventually relaxed by taking appropriate limits of the generic case (see 
Theorem \ref{thm:minversex}).  
\end{remark}  
The key observation leading to the solution of this inverse problem 
is the realization that the interpolation problem \eqref{eq:Papproxbis}(the same as 
\eqref{eq:Papprox}) has 
a unique solution.  
\begin{theorem}\label{thm:solPapproxbis}
Given a rational function $W(z)$ as above, and positive, distinct 
constants $m_1,m_2,\dots, m_n$, there exist unique 
solutions $\hat q_k, \hat p_k, \hat Q_k, \hat P_k, 1\leq k\leq n$ to the 
interpolations problems \eqref{eq:Papproxbis}.  

Let $z_{i}=-~\frac{1}{m_{i'}^2}, \, 1\leq i\leq k$, then the solution to the first 
interpolation problem \eqref{eq:Papproxbis1}, \eqref{eq:Papproxbis1nc}is 
\begin{equation}\label{eq:hatqhatQsol}
\begin{split}
&\hat q_k(z)+z^{{\lfloor \frac k2 \rfloor}+1}\hat Q_k(z)\\=\frac{1}{D_k}&\det
\begin{bmatrix} 1&z&\dots&z^{\lfloor \frac k2 \rfloor}&z^{{\lfloor \frac k2 \rfloor}+1}&z^{{\lfloor \frac k2 \rfloor}+2}&\cdots&z^k\\
W(z_1)&z_1W(z_1)&\dots&z_1^{\lfloor \frac k2 \rfloor}W(z_1)&1&z_1&\cdots&z_1^{\lfloor \frac{k-1}{2} \rfloor}\\
\vdots&\vdots&\ddots&\vdots&\vdots&\vdots&\ddots&\vdots\\
W(z_k)&z_k W(z_k)&\dots&z_k^{\lfloor \frac k2 \rfloor}W(z_k)&1&z_k&\cdots&z_k^{\lfloor \frac{k-1}{2} \rfloor} 
\end{bmatrix}, 
\end{split}
\end{equation} 
where 
\begin{equation}
D_k=\det \begin{bmatrix} 
z_1W(z_1)&\dots&z_1^{\lfloor \frac k2 \rfloor}W(z_1)&1&z_1&\cdots&z_1^{\lfloor \frac{k-1}{2} \rfloor}\\
\vdots&\ddots&\vdots&\vdots&\vdots&\ddots&\vdots&\\
z_k W(z_k)&\dots&z_k^{\lfloor \frac k2 \rfloor}W(z_k)&1&z_k&\cdots&z_k^{\lfloor \frac{k-1}{2} \rfloor} 
\end{bmatrix}.  
\end{equation}
Likewise, the solution to the second interpolation problem \eqref{eq:Papproxbis2}, \eqref{eq:Papproxbis2nc} is 
\begin{equation}\label{eq:hatphatPsol}
\begin{split}
&\hat P_k(z)+z^{{\lfloor \frac k2 \rfloor}}\hat p_k(z)\\=\frac{1}{E_k}&\det
\begin{bmatrix} 1&z&\dots&z^{\lfloor \frac k2 \rfloor}&z^{{\lfloor \frac k2 \rfloor}+1}&z^{{\lfloor \frac k2 \rfloor}+2}&\cdots&z^k\\
1&z_1&\dots&z_1^{\lfloor \frac k2 \rfloor}&z_1W(z_1)&z_1^2W(z_1)&\cdots&z_1^{\lfloor \frac{k+1}{2} \rfloor}W(z_1)\\
\vdots&\vdots&\ddots&\vdots&\vdots&\vdots&\ddots&\vdots\\
1&z_k &\dots&z_k^{\lfloor \frac k2 \rfloor}&z_kW(z_k)&z_k^2W(z_k)&\cdots&z_k^{\lfloor \frac{k+1}{2} \rfloor}W(z_k)
\end{bmatrix}, 
\end{split}
\end{equation} 
where 
\begin{equation}
E_k=\det \begin{bmatrix} 
z_1&\dots&z_1^{\lfloor \frac k2 \rfloor}&z_1W(z_1)&z_1^2W(z_1)&\cdots&z_1^{\lfloor \frac{k+1}{2} \rfloor}W(z_1)\\
\vdots&\ddots&\vdots&\vdots&\vdots&\ddots&\vdots\\
z_k &\dots&z_k^{\lfloor \frac k2 \rfloor}&z_kW(z_k)&z_k^2W(z_k)&\cdots&z_k^{\lfloor \frac{k+1}{2} \rfloor}W(z_k)
\end{bmatrix}.  
\end{equation}

\end{theorem}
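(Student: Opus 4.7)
The plan is to check that each determinantal formula solves the corresponding interpolation problem and then to infer uniqueness from the nondegeneracy of the associated $(k+1)\times(k+1)$ linear system. Write $F_1(z)=\det M_1(z)$ and $F_2(z)=\det M_2(z)$ for the determinants appearing on the right of \eqref{eq:hatqhatQsol} and \eqref{eq:hatphatPsol}. In each case only the top row carries $z$-dependence, through the monomials $1,z,\dots,z^k$, so expansion along that row gives $F_i(z)=\sum_{j=0}^{k}(-1)^{j}z^{j}\det[M_i]_{1,j+1}$, a polynomial of degree at most $k$ whose coefficients are minors of the constant $k\times k$ block of rows $2,\dots,k+1$. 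Identifying $\hat q_k(z)$ with the part of $F_1(z)/D_k$ of degree $\le\lfloor k/2\rfloor$ and $\hat Q_k(z)$ with its higher-degree part (shifted down by $z^{\lfloor k/2\rfloor+1}$) produces polynomials of the degrees required in \eqref{eq:Papproxbis1nc}; the analogous splitting of $F_2$ at the same index produces $\hat P_k$ of degree $\le\lfloor k/2\rfloor$ and $\hat p_k$ of degree $\le\lfloor(k+1)/2\rfloor$, with $\hat p_k(0)=0$ automatic because the leftmost column block in $M_2$ contributes no constant term to the $\hat p_k$ piece.

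To verify the interpolation identities \eqref{eq:Papproxbis1} and \eqref{eq:Papproxbis2}, I would evaluate $\hat q_k(z_i)W(z_i)+\hat Q_k(z_i)$ (respectively $\hat p_k(z_i)W(z_i)+\hat P_k(z_i)$) and regroup the resulting combination of minors. In view of the identification from the previous paragraph, the combination equals $1/D_k$ (resp.\ $1/E_k$) times the determinant of $M_1(z)$ (resp.\ $M_2(z)$) with its top row replaced by $(W(z_i),z_iW(z_i),\dots,z_i^{\lfloor k/2\rfloor}W(z_i),1,z_i,\dots,z_i^{\lfloor(k-1)/2\rfloor})$ (resp.\ $(1,z_i,\dots,z_i^{\lfloor k/2\rfloor},z_iW(z_i),\dots,z_i^{\lfloor(k+1)/2\rfloor}W(z_i))$). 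Each such replacement row coincides with row $i+1$ of the respective original matrix, so the resulting determinant vanishes by row repetition. The normalizations $\hat q_k(0)=1$ and $\hat P_k(0)=1$ reduce in the same spirit to the identities $\det[M_1]_{1,1}=D_k$ and $\det[M_2]_{1,1}=E_k$, since setting $z=0$ kills every entry of the top row except the leading $1$.

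Uniqueness of the pairs $(\hat q_k,\hat Q_k)$ and $(\hat p_k,\hat P_k)$ is equivalent to $D_k\ne 0$ and $E_k\ne 0$, because after imposing the normalizations each interpolation problem becomes a square linear system of $k$ equations in the $k$ remaining unknown coefficients whose coefficient matrices have determinants $D_k$ and $E_k$ up to sign. Existence of at least one solution is already guaranteed by Theorem \ref{thm:contInverse} together with Theorem \ref{thm:explicit S-hat}, which exhibit concrete polynomials satisfying all of \eqref{eq:Papproxbis}. The principal obstacle in this plan is to prove $D_k\ne 0$ and $E_k\ne 0$ in a self-contained way; the natural route is to exploit the positivity $W(z_i)>0$ (immediate because $z_i<0$ while every pole $\zeta_j$ of $W$ is positive, so $W(z_i)=c+\sum_j b_j/(\zeta_j-z_i)$ is a sum of positive terms) together with the Cauchy-Vandermonde structure of the matrices, which is precisely the role anticipated for the Cauchy-Stieltjes-Vandermonde matrices of Definition \ref{def:CSV}.
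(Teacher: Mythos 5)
Your proposal is correct and takes essentially the same approach as the paper: the paper likewise reduces each interpolation problem, after imposing the normalizations $\hat q_k(0)=1$, $\hat P_k(0)=1$, $\hat p_k(0)=0$, to a square $k\times k$ linear system and writes the solution via Cramer's rule (your cofactor-expansion and row-repetition verification is just that argument unwound), and it equally defers the nonvanishing of $D_k$ and $E_k$ to the later evaluation of the Cauchy--Stieltjes--Vandermonde determinants (Theorem \ref{thm:detCSV}, Theorem \ref{eq:evalDkEk}, Corollary \ref{cor:DkEknonzero}), which is exactly the route you identify via the positivity of $W(z_i)$.
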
 
\begin{proof} Set 
\begin{equation*}
\hat q_k(z)=1+\sum_{j=1}^{\lfloor \frac k2 \rfloor}a_j z^j, 
\qquad \hat Q_k(z)=\sum_{j=0}^{\lfloor \frac {k-1}{2} \rfloor}A_j z^j. 
\end{equation*}
Then the first interpolation problem reads: 
\begin{equation*} 
\sum_{j=1}^{\lfloor \frac k2 \rfloor} (z_i)^j W(z_j) a_j +
\sum_{j=0}^{\lfloor{\frac{k-1}{2}\rfloor}} (z_i)^j A_j=-W(z_i), \qquad 1\leq i \leq k, 
\end{equation*}
whose solution, by virtue of Cramer's rule, can be written in the form 
of equation \eqref{eq:hatqhatQsol}, provided that $D_k\neq 0$.  Likewise, 
the solution to the second interpolation problem can be easily 
deduced by writing 
\begin{equation*}
\hat P_k(z)=1+\sum_{j=1}^{\lfloor \frac k2 \rfloor}B_j z^j, 
\qquad \hat p_k(z)=\sum_{j=1}^{\lfloor \frac {k+1}{2} \rfloor}b_j z^j,  
\end{equation*}
substituting into the interpolation problem \eqref{eq:Papproxbis2} and, again, 
using Cramer's rule, with the same proviso that $E_k\neq 0$.  Thus it remains to 
prove that $D_k$ and $E_k$ are not $0$ under our assumption of 
distinct masses $m_j$.  To this end we derive below explicit formulas for the 
determinants $D_k$ and $E_k$ from which we conclude that none of the 
determinants can be $0$ in view of the non-degeneracy assumption on the masses $m_j$ (see Corollary \ref{cor:DkEknonzero}).  
\end{proof} 
\subsection{Evaluation of determinants}
In this subsection, we will derive explicit formulas for determinants 
appearing in the solution to the interpolation problems (see Theorem \ref{thm:solPapproxbis}).  We begin by introducing some additional notation to 
facilitate the presentation of formulas, reminding the reader that the 
multi-index notation was introduced earlier in the part leading up to 
the definition \ref{def:bigIndi}.  The following notation is in place: 
 we denote $[i,j]=\{i,i+1,\cdots,j\}, \,\,  \binom{[1,K]}{k}=\{J=\{j_1,j_2,\cdots,j_k\}|j_1<\cdots <j_k, j_i\in [1,K]\}$.  Then for two ordered multi-index sets $I, J$  we define 
\begin{align*}
  \mathbf{x}_J&=\prod_{j\in J}x_j, &\Delta_J(\mathbf{x})&=\prod_{i<j\in J}(x_j-x_i), \\
  \Delta_{I,J}(\mathbf{x};\mathbf{y})&=\prod_{i\in I}\prod_{j\in J}(x_i-y_j), 
  &\Gamma_{I,J}(\mathbf{x};\mathbf{y})&=\prod_{i\in I}\prod_{j\in J}(x_i+y_j),  
  \end{align*}
  along with the convention
\begin{align*}
&\Delta_\emptyset(\mathbf{x})=\Delta_{\{i\}}(\mathbf{x})=\Delta_{\emptyset,J}(\mathbf{x};\mathbf{y})=\Delta_{I,\emptyset}(\mathbf{x};\mathbf{y})=\Gamma_{\emptyset,J}(\mathbf{x};\mathbf{y})=\Gamma_{I,\emptyset}(\mathbf{x};\mathbf{y})=1,&\\
&\binom{[1,K]}{0}=1;\qquad\qquad \binom{[1,K]}{k}=0,\ \  k>K.&
\end{align*}
\begin{definition} 
Given two vectors $\mathbf{e}\in \R^k, \mathbf{d}\in \R^l, \, 0\leq l\leq k$ such that 
$e_i+d_j\neq0$ for any pair of indices, a \textit{Cauchy-Vandermonde matrix}\cite{finck1993inversion,gasca1989computation,martinez1998factorizations,martinez1998fast} is a matrix of the form
\begin{equation}\label{eq:CV} 
  CV_k^{(l)}(\mathbf{e},\mathbf{d})=\left(\begin{array}{cccccccc}
    \frac{1}{e_1+d_1}&\frac{1}{e_1+d_2}&\cdots&\frac{1}{e_1+d_l}&1&e_1&\cdots&e_1^{k-l-1}\\
    \frac{1}{e_2+d_1}&\frac{1}{e_2+d_2}&\cdots&\frac{1}{e_2+d_l}&1&e_2&\cdots&e_2^{k-l-1}\\
    \vdots&\vdots&\ddots&\vdots&\vdots&\vdots&\ddots&\vdots\\
    \frac{1}{e_k+d_1}&\frac{1}{e_k+d_2}&\cdots&\frac{1}{e_k+d_l}&1&e_k&\cdots&e_k^{k-l-1}
  \end{array}\right).  
\end{equation}

\end{definition} 
Two special cases are: for $l=0$ the matrix defined by \eqref{eq:CV} is a classical Vandermonde matrix and for $l=k$ it is a classical Cauchy matrix, and for both 
these special cases there exist classical formulas expressing their 
determinants.  Luckily, 
there exists also a compact formula for the determinant of the (generic) Cauchy-Vandermonde matrix \cite{gasca1989computation,martinez1998factorizations,martinez1998fast}:
\begin{equation}\label{eq:detCV}
  \det(CV_k^{(l)}(\mathbf{e},\mathbf{d}))=\frac{\Delta_{[1,k]}(\mathbf{e})\Delta_{[1,l]}(\mathbf{d})}{\Gamma_{[1,k],[1,l]}(\mathbf{e};\mathbf{d})}.
\end{equation}

As a side note we would like to mention that the Cauchy-Vandermonde matrix \eqref{eq:CV} appears naturally as the coefficient matrix of a rational interpolation problem of Lagrange type: given $k$ pairs of interpolation data $(e_1, t_1),\cdots,(e_k,t_k)$, where $e_1,\cdots,e_k$ are different real numbers, find a function
\[
f(x)=\sum_{j=1}^ls_j\frac{1}{x+d_j}+\sum_{j=l+1}^ks_jx^{j-l-1}, 
\]
with $s_i$ to be determined, such that $f(e_i)=t_i, i=1,\cdots,k$.

The interpolation problems \eqref{eq:Papprox} can be viewed as
 slight variations on the theme of rational interpolations problem of Largrange type 
 and to effect the explicit solution of these problems  one is led to a generalization of the  Cauchy-Vandermonde matrix. 
 
 ~\begin{definition} \label{def:mCV}
 Given three  vectors $\mathbf{e}\in \R^k, \mathbf{d}, \mathbf{a} \in \R^l, \,  \, 0\leq l\leq k$ such that 
$e_i+d_j\neq0$ for any pair of indices, a \textit{ modified Cauchy-Vandermonde matrix} is that of the form 
\begin{equation}
  CV_k^{(l,p)}(\mathbf{e},\mathbf{d},\mathbf{a})=\left(\begin{array}{cccccccc}
    \frac{a_1e_1^{p}}{e_1+d_1}&\frac{a_2e_1^{p+1}}{e_1+d_2}&\cdots&\frac{a_le_1^{p+l-1}}{e_1+d_l}&1&e_1&\cdots&e_1^{k-l-1}\\
    \frac{a_1e_2^{p}}{e_2+d_1}&\frac{a_2e_2^{p+1}}{e_2+d_2}&\cdots&\frac{a_le_2^{p+l-1}}{e_2+d_l}&1&e_2&\cdots&e_2^{k-l-1}\\
    \vdots&\vdots&\ddots&\vdots&\vdots&\vdots&\ddots&\vdots\\
    \frac{a_1e_k^{p}}{e_k+d_1}&\frac{a_2e_k^{p+1}}{e_k+d_2}&\cdots&\frac{a_le_k^{p+l-1}}{e_k+d_l}&1&e_k&\cdots&e_k^{k-l-1}
  \end{array}\right),
\end{equation}
with $p\geq0, 0\leq l\leq k, p+l-1\leq k-l$.  
\end{definition} 
 
\begin{theorem}\label{thm:mCV}

  Let $\, 0\, \leq p, \, 0\leq l\leq k $ and $ p+l-1\leq k-l,$ then
\begin{equation}\label{det:gmCV}
  \det(CV_k^{(l,p)}(\mathbf{e},\mathbf{d},\mathbf{a}))=C_{l,p}\frac{\Delta_{[1,k]}(\mathbf{e})\Delta_{[1,l]}(\mathbf{d})}{\Gamma_{[1,k],[1,l]}(\mathbf{e};\mathbf{d})}, 
\end{equation}
where $C_{l,p}=(-1)^{lp+\frac{l(l-1)}{2}}\mathbf{a}_{[1,l]}\cdot \mathbf{d}_{[1,l]}^p\cdot d_1^0 d_2^1\dots d_l^{l-1} $.  
\end{theorem}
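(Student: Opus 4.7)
The plan is to reduce the determinant of $CV_k^{(l,p)}(\mathbf{e},\mathbf{d},\mathbf{a})$ to the classical Cauchy-Vandermonde determinant \eqref{eq:detCV} by a sequence of elementary column operations followed by factoring scalars out of the first $l$ columns. The mechanism is Euclidean division: for each $j\in\{1,\ldots,l\}$, write
\begin{equation*}
e_i^{p+j-1}=(e_i+d_j)\,Q_{ij}(e_i)+(-d_j)^{p+j-1},
\end{equation*}
where $Q_{ij}$ is a polynomial in $e_i$ of degree $p+j-2$. The assumption $p+l-1\leq k-l$ ensures that for every $j=1,\ldots,l$ this degree satisfies $p+j-2\leq k-l-1$, so $Q_{ij}(e_i)$ belongs to the span of the monomials $1,e_i,e_i^2,\ldots,e_i^{k-l-1}$ that form the rightmost $k-l$ columns of the matrix.

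Consequently, subtracting from the $j$-th column an appropriate linear combination (in $i$-independent coefficients depending only on $d_j,a_j$) of the last $k-l$ Vandermonde columns eliminates the polynomial contribution $a_j Q_{ij}(e_i)$, leaving the $j$-th column with entries $\frac{a_j(-d_j)^{p+j-1}}{e_i+d_j}$. Since column operations do not change the determinant, and factoring $a_j(-d_j)^{p+j-1}$ from the $j$-th column is a scalar multiplication, we obtain
\begin{equation*}
\det\bigl(CV_k^{(l,p)}(\mathbf{e},\mathbf{d},\mathbf{a})\bigr)=\left(\prod_{j=1}^{l}a_j(-d_j)^{p+j-1}\right)\det\bigl(CV_k^{(l)}(\mathbf{e},\mathbf{d})\bigr).
\end{equation*}

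Applying the classical Cauchy-Vandermonde formula \eqref{eq:detCV} to the second factor and bookkeeping the prefactor finishes the computation: the sign contribution aggregates as
\begin{equation*}
\prod_{j=1}^{l}(-1)^{p+j-1}=(-1)^{lp+l(l-1)/2},
\end{equation*}
while $\prod_{j=1}^{l}d_j^{p+j-1}=\mathbf{d}_{[1,l]}^{p}\cdot d_1^{0}d_2^{1}\cdots d_l^{l-1}$ and $\prod_{j=1}^{l}a_j=\mathbf{a}_{[1,l]}$, reproducing exactly $C_{l,p}$ as stated.

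I expect no serious obstacle in this argument: the only delicate point is the degree accounting that justifies the column-reduction step, and this is precisely controlled by the hypothesis $p+l-1\leq k-l$. Everything else is bookkeeping of signs and products, plus a direct invocation of the already-known Cauchy-Vandermonde determinant evaluation.
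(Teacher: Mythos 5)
Your proposal is correct and follows essentially the same route as the paper's own proof: both use the division identity $e_i^{p+j-1}=(e_i+d_j)Q_j(e_i)+(-d_j)^{p+j-1}$ (the paper writes out the quotient as the geometric sum $\sum_{s=1}^{p+j-1}e_i^{p+j-1-s}(-d_j)^{s-1}$), absorb the polynomial part into the Vandermonde columns using exactly the hypothesis $p+l-1\leq k-l$, factor out $a_j(-d_j)^{p+j-1}$ from each of the first $l$ columns, and invoke the classical Cauchy-Vandermonde evaluation \eqref{eq:detCV}. The sign and factor bookkeeping in your final step matches the paper's constant $C_{l,p}$ exactly.
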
 
\begin{proof} 
By multilinearity of the determinant we can factor 
all coefficients $a_j$ from the first $l$ rows.  Hence it is sufficient to 
work with $\mathbf{a}=[1, 1, \cdots, 1] $.  Let us drop the reference 
to $\mathbf{a}$ for the remainder of the proof and simply write
\begin{equation}
  CV_k^{(l,p)}(\mathbf{e},\mathbf{d})=\left(\begin{array}{cccccccc}
    \frac{e_1^{p}}{e_1+d_1}&\frac{e_1^{p+1}}{e_1+d_2}&\cdots&\frac{e_1^{p+l-1}}{e_1+d_l}&1&e_1&\cdots&e_1^{k-l-1}\\
    \frac{e_2^{p}}{e_2+d_1}&\frac{e_2^{p+1}}{e_2+d_2}&\cdots&\frac{e_2^{p+l-1}}{e_2+d_l}&1&e_2&\cdots&e_2^{k-l-1}\\
    \vdots&\vdots&\ddots&\vdots&\vdots&\vdots&\ddots&\vdots\\
    \frac{e_k^{p}}{e_k+d_1}&\frac{e_k^{p+1}}{e_k+d_2}&\cdots&\frac{e_k^{p+l-1}}{e_k+d_l}&1&e_k&\cdots&e_k^{k-l-1}
  \end{array}\right),
\end{equation}
maintaining the assumptions $0\leq p, \, 0\leq l\leq k$ and $ p+l-1\leq k-l$.  

Let us now consider the $l$-th column of the matrix; we may write it as
\[
\left(
\begin{array}{c}
  \frac{e_1^{p+l-1}-(-d_l)^{p+l-1}+(-d_l)^{p+l-1}}{e_1+d_l}\\
  \frac{e_2^{p+l-1}-(-d_l)^{p+l-1}+(-d_l)^{p+l-1}}{e_2+d_l}\\
  \vdots\\
  \frac{e_k^{p+l-1}-(-d_l)^{p+l-1}+(-d_l)^{p+l-1}}{e_k+d_l}
\end{array}
\right)=
\left(
\begin{array}{c}
  \sum_{j=1}^{p+l-1}e_1^{p+l-1-j}(-d_l)^{j-1}+\frac{(-d_l)^{p+l-1}}{e_1+d_l}\\
  \sum_{j=1}^{p+l-1}e_2^{p+l-1-j}(-d_l)^{j-1}+\frac{(-d_l)^{p+l-1}}{e_2+d_l}\\
  \vdots\\
  \sum_{j=1}^{p+l-1}e_k^{p+l-1-j}(-d_l)^{j-1}+\frac{(-d_l)^{p+l-1}}{e_k+d_l}
\end{array}
\right).
\]
Since, thanks to the assumption $ p+l-1\leq k-l$, the first terms above are linear combinations of columns $l+1$  through $k$ we obtain
\[
\det(CV_k^{(l,p)}(\mathbf{e},\mathbf{d}))=\det\left(\begin{array}{ccccccccc}
    \frac{e_1^{p}}{e_1+d_1}&\frac{e_1^{p+1}}{e_1+d_2}&\cdots&\frac{(-d_l)^{p+l-1}}{e_1+d_l}&1&e_1&\cdots&e_1^{k-l-1}\\
    \frac{e_2^{p}}{e_2+d_1}&\frac{e_2^{p+1}}{e_2+d_2}&\cdots&\frac{(-d_l)^{p+l-1}}{e_2+d_l}&1&e_2&\cdots&e_2^{k-l-1}\\
    \vdots&\vdots&\ddots&\vdots&\vdots&\vdots&\ddots&\vdots\\
    \frac{e_k^{p}}{e_k+d_1}&\frac{e_k^{p+1}}{e_k+d_2}&\cdots&\frac{(-d_l)^{p+l-1}}{e_k+d_l}&1&e_k&\cdots&e_k^{k-l-1}
  \end{array}\right), 
\]
which after implementing similar operations for the remaining first $l-1$ columns 
leads to: 
\[
\det(CV_k^{(l,p)}(\mathbf{e},\mathbf{d}))=\det\left(\begin{array}{ccccccccc}
    \frac{(-d_1)^{p}}{e_1+d_1}&\frac{(-d_2)^{p+1}}{e_1+d_2}&\cdots&\frac{(-d_l)^{p+l-1}}{e_1+d_l}&1&e_1&\cdots&e_1^{k-l-1}\\
    \frac{(-d_1)^{p}}{e_2+d_1}&\frac{(-d_2)^{p+1}}{e_2+d_2}&\cdots&\frac{(-d_l)^{p+l-1}}{e_2+d_l}&1&e_2&\cdots&e_2^{k-l-1}\\
    \vdots&\vdots&\ddots&\vdots&\vdots&\vdots&\ddots&\vdots\\
    \frac{(-d_1)^{p}}{e_k+d_1}&\frac{(-d_2)^{p+1}}{e_k+d_2}&\cdots&\frac{(-d_l)^{p+l-1}}{e_k+d_l}&1&e_k&\cdots&e_k^{k-l-1}
  \end{array}\right).  
\]
Now it suffices to factor $(-d_1)^p, \cdots, (-d_l)^{p+l-1}$ in order to obtain 
a straightforward relation between the determinants of the modified Cauchy-Vandermonde matrix  and the Cauchy-Vandermonde matrix \eqref{eq:CV}, \eqref{eq:detCV}
\[
\det(CV_k^{(l,p)}(\mathbf{e},\mathbf{d}))=(-d_1)^{p}(-d_2)^{p+1}\cdots(-d_l)^{p+l-1}\cdot\det(CV_k^{l}(\mathbf{e},\mathbf{d})),
\]
from which, after restoring a general $\mathbf{a}$ which contributes the factor 
$\mathbf{a}_{[1,l]}$, the result follows.
\end{proof} 
In the final step of generalizing Cauchy-Vandermonde matrices we 
introduce a family of matrices of this type attached to a Stieltjes transform of a positive measure. 

~\begin{definition} \label{def:CSV}
 Given  a (strictly) positive vector $\mathbf{e}\in \R^k$, a non-negative 
 number $c$, an index $l$ such that $0\leq l\leq k$, another index $p$ such that $0\leq p, \, p+l-1\leq k-l$, and a positive measure $\nu$ 
 with support in $\R_+$, a \textit{ Cauchy-Stieltjes-Vandermonde (CSV) matrix} is that of the form 
\begin{equation}
  CSV_k^{(l,p)}(\mathbf{e},\nu, c)=\left(\begin{array}{cccccccc}
    e_1^{p}\hat\nu_c(e_1)&e_1^{p+1}\hat \nu_c(e_1)&\cdots&e_1^{p+l-1}\hat\nu_c(e_1)&1&e_1&\cdots&e_1^{k-l-1}\\
    e_2^{p}\hat\nu_c(e_2)&e_2^{p+1}\hat\nu_c(e_2)&\cdots&e_2^{p+l-1}\hat\nu_c(e_2)&1&e_2&\cdots&e_2^{k-l-1}\\
    \vdots&\vdots&\ddots&\vdots&\vdots&\vdots&\ddots&\vdots\\
    e_k^{p}\hat\nu_c(e_k)&e_k^{p+1}\hat\nu_c(e_k)&\cdots&e_k^{p+l-1}\hat\nu_c(e_k)&1&e_k&\cdots&e_k^{k-l-1}
  \end{array}\right),
\end{equation}
where $\hat \nu_c$ is the (shifted) Stieltjes transform of the measure 
$\nu$ and is given by
$
\hat \nu_c(y)=c+\int\frac{d\nu(x)}{y+x}. 
$

\end{definition} 
In the next theorem we establish explicit formulas 
for the determinant of the CSV matrix.  This theorem is essential for 
our solution of the interpolation problems \eqref{eq:Papprox}.  
\begin{theorem} \label{thm:detCSV}
Let $\nu$ be a positive measure with support in $\R_+$ and let $\mathbf{x}$ denote the vector ~$[x_1,x_2,\dots , x_l]\in ~\R^l $ and $d\nu^p(y)=y^p d\nu(y)$, respectively. 
Then 
\begin{enumerate} 
\item if either $c=0$ or $p+l-1<k-l$ then 
\begin{equation} \label{eq:detCSV1}
\det CSV_k^{(l,p)}(\mathbf{e},\nu, c)=(-1)^{lp+\frac{l(l-1)}{2}} \Delta_{[1,k]}(\mathbf{e})
\int_{0<x_1<x_2<\dots<x_l} \frac{\Delta_{[1,l]}(\mathbf{x})^2}{\Gamma_{[1,k], [1,l]}(\mathbf{e}; \mathbf{x})} d\nu^p(x_1)d\nu^p(x_2)\dots d\nu^p(x_l);
\end{equation} 
\item if $c>0$ and $p+l-1=k-l$ then 
\begin{equation} \label{eq:detCSV2}
\begin{split}
&\det CSV_k^{(l,p)}(\mathbf{e},\nu, c)=(-1)^{lp+\frac{l(l-1)}{2}} \Delta_{[1,k]}(\mathbf{e})\\&\cdot\Big(
\int_{0<x_1<x_2<\dots<x_l} \frac{\Delta_{[1,l]}(\mathbf{x})^2}{\Gamma_{[1,k], [1,l]}(\mathbf{e}; \mathbf{x})} d\nu^p(x_1)d\nu^p(x_2)\dots d\nu^p(x_l)\\
&+c
\int_{0<y_1<y_2<\dots<y_{l-1}} \frac{\Delta_{[1,l-1]}(\mathbf{y})^2}{\Gamma_{[1,k], [1,l-1]}(\mathbf{e}; \mathbf{y})} d\nu^p(y_1)d\nu^p(y_2)\dots d\nu^p(y_{l-1})\Big). 
\end{split}
\end{equation} 
\end{enumerate} 

\end{theorem}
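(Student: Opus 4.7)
The key idea is to expand using the integral representation $\hat\nu_c(e_i)=c+\int \frac{d\nu(y)}{e_i+y}$ and the multilinearity of the determinant in its first $l$ columns. Writing the $j$-th such column ($1\le j\le l$) as the sum of a constant piece $c\,(e_i^{p+j-1})_i$ and an integral piece $\int \frac{e_i^{p+j-1}}{e_i+x_j}\,d\nu(x_j)$ produces $2^l$ sub-determinants indexed by the subset $S\subseteq [1,l]$ of columns that take the constant option.

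For the all-integral term $S=\emptyset$, pulling the $l$ integrations outside the determinant and applying Theorem~\ref{thm:mCV} with $\mathbf d=\mathbf x$ and $\mathbf a=\mathbf 1$ yields the integrand $x_1^p x_2^{p+1}\cdots x_l^{p+l-1}\,\Delta_{[1,l]}(\mathbf x)/\Gamma_{[1,k],[1,l]}(\mathbf e;\mathbf x)$ together with the prefactor $(-1)^{lp+l(l-1)/2}\Delta_{[1,k]}(\mathbf e)$. This integrand is not symmetric in $\mathbf x$, but the product measure $d\nu^{\otimes l}$ is, so it may be replaced by its $S_l$-symmetrization. Using the Vandermonde identity
\[
\sum_{\sigma\in S_l}\sgn(\sigma)\,x_{\sigma(1)}^p\cdots x_{\sigma(l)}^{p+l-1}=\prod_j x_j^p\cdot \Delta_{[1,l]}(\mathbf x),
\]
the symmetrization equals $\prod_j x_j^p\,\Delta_{[1,l]}(\mathbf x)^2/(l!\,\Gamma_{[1,k],[1,l]}(\mathbf e;\mathbf x))$; since this is symmetric and vanishes on every diagonal $x_r=x_s$, restricting to the ordered chamber $0<x_1<\cdots<x_l$ absorbs the $l!$ and produces exactly the principal integral appearing in both \eqref{eq:detCSV1} and \eqref{eq:detCSV2}.

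The remaining $2^l-1$ terms are ruled in or out by a column-collision argument: for $j\in S$, the $j$-th column of the corresponding sub-determinant is a scalar multiple of $(e_i^{p+j-1})_i$, so whenever $p+j-1\in\{0,1,\dots,k-l-1\}$ this column coincides with one of the trailing polynomial columns and the sub-determinant vanishes. If $c=0$ only the main term is present. If $c>0$ and $p+l-1<k-l$, every $j\in[1,l]$ lies in the vanishing range, so again only the main term survives. In the borderline case $c>0$ and $p+l-1=k-l$, only $S=\{l\}$ escapes collision: the fresh power $p+l-1=k-l$ lies just outside the polynomial block, while any $j<l$ still produces a degenerate column.

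The surviving extra term has $l-1$ integral columns of powers $p,p+1,\dots,p+l-2$ at positions $1,\dots,l-1$, a fresh polynomial column $c\,(e_i^{k-l})_i$ at position $l$, and the original polynomial block at positions $l+1,\dots,k$. Moving this fresh column past the $k-l$ trailing polynomial columns contributes the sign $(-1)^{k-l}$ and places the matrix into the standard $CV_k^{(l-1,p)}$ form; Theorem~\ref{thm:mCV} with $l\mapsto l-1$, followed by the same symmetrization argument, delivers precisely the second integral in \eqref{eq:detCSV2}. The main technical hurdle is sign bookkeeping: one must verify that the combined sign of this branch,
\[
(-1)^{\,k-l+(l-1)p+(l-1)(l-2)/2},
\]
reduces, under the defining relation $p+l-1=k-l$ (equivalently $k-2l+1=p$), to the target $(-1)^{lp+l(l-1)/2}$. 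A short parity computation shows the two exponents differ by $k-2l+1-p=0$, which closes the proof.
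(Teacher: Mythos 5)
Your proof is correct and follows essentially the same route as the paper's: multilinearity to pull the Stieltjes integrals out of the first $l$ columns, Theorem \ref{thm:mCV} to evaluate the resulting Cauchy--Vandermonde determinants, the $S_l$-symmetrization converting $x_1^0x_2^1\cdots x_l^{l-1}\Delta_{[1,l]}(\mathbf{x})$ into $\Delta_{[1,l]}(\mathbf{x})^2/l!$, and elimination of the constant ($c$) columns by collision with the Vandermonde block, with the single surviving borderline column moved into the polynomial block so that $l$ drops to $l-1$. Your only departures are organizational — the systematic $2^l$-subset expansion and the explicit parity check that $(-1)^{k-l+(l-1)p+(l-1)(l-2)/2}=(-1)^{lp+l(l-1)/2}$ when $p+l-1=k-l$ — which make precise the sign bookkeeping the paper leaves implicit.
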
 
\begin{proof} 
Let us first consider the case $c=0$.  Using multilinearity of the determinant we obtain
\begin{equation*}
\det CSV_k^{(l,p)}(\mathbf{e},\nu, 0) =\int
\det \begin{pmatrix} \frac{e_1^p}{e_1+x_1}&\frac{e_1^{p+1}}{e_1+x_2}&\cdots&\frac{e_1^{p+l-1}}{e_1+x_l}&1&e_1&\dots&e_1^{k-l+1}\\
\frac{e_2^p}{e_2+x_1}&\frac{e_2^{p+1}}{e_2+x_2}&\cdots&\frac{e_2^{p+l-1}}{e_2+x_l}&1&e_2&\dots&e_2^{k-l+1}\\
\vdots&\vdots&\ddots&\vdots&\vdots&\vdots&\ddots&\vdots\\
\frac{e_k^p}{e_k+x_1}&\frac{e_k^{p+1}}{e_k+x_2}&\cdots&\frac{e_k^{p+l-1}}{e_k+x_l}&1&e_k&\dots&e_k^{k-l+1}\
\end{pmatrix} d\nu(x_1)d\nu(x_2)\cdots d\nu(x_l). 
\end{equation*}
Then by Theorem \ref{thm:mCV} 
\begin{equation*}
\det CSV_k^{(l,p)}(\mathbf{e},\nu, 0)=(-1)^{lp+\frac{l(l-1)}{2}}
\int (x_1x_2\cdots x_l)^p x_1^0x_2^1\cdots x_l^{l-1} 
\frac{\Delta_{[1,k]}(\mathbf{e})\Delta_{[1,l]}(\mathbf{x})}{\Gamma_{[1,k], [1,l]}(\mathbf{e};\mathbf{x})}d\nu(x_1)d\nu(x_2)\cdots d\nu(x_l).  
\end{equation*}
Let us now consider the action of the group of permutations on $l$ letters, denoted 
$S_l$, on individual terms of the integrand.  The product measure 
is invariant under the action and so are $(x_1x_2\cdots x_l)^p$ and $\Gamma_{[1,n], [1,l]}(\mathbf{e};\mathbf{x})$.  Let $\sigma.\mathbf{x}=[x_{\sigma(1)}, x_{\sigma(2)}, \cdots, x_{\sigma(l)}]$ then 
\begin{align*}
&\det CSV_k^{(l,p)}(\mathbf{e},\nu,0)\\
&=\frac{(-1)^{lp+\frac{l(l-1)}{2}}\Delta_{[1,k]}(\mathbf{e})}{l!}
\int \frac{(x_1x_2\cdots x_l)^p }{\Gamma_{[1,k], [1,l]}(\mathbf{e};\mathbf{x})}\big(\sum_{\sigma\in S_l} x_{\sigma(1)}^0x_{\sigma(2)}^1\cdots x_{\sigma(l)} ^{l-1}\Delta_{[1,l]}(\sigma.\mathbf{x})\big)\, d\nu(x_1)d\nu(x_2)\cdots d\nu(x_l)\\
&= \frac{(-1)^{lp+\frac{l(l-1)}{2}}\Delta_{[1,k]}(\mathbf{e})}{l!}
\int \frac{(x_1x_2\cdots x_l)^p \Delta^2_{[1,l]}(\mathbf{x})}{\Gamma_{[1,k], [1,l]}(\mathbf{e};\mathbf{x})}\, d\nu(x_1)d\nu(x_2)\cdots d\nu(x_l),  
\end{align*}
where in the last step we used $\Delta_{[1,l]}(\sigma.\mathbf{x})=\sgn(\sigma)
\Delta_{[1,l]}(\mathbf{x})$.  Since now the integrand is invariant under 
the action of $S_l$ we integrate over $x_1<x_2<\cdots<x_l$, multiply by $l!$, and 
restrict integration to $R_+$ in view of the condition on the support of $\nu$, obtaining the final formula for this case.  

The next case is $0\leq c$ but $p+l-1<k-l$.  In this case 
every column $j, 1\leq j\leq l$ is a sum: 
$$
c\begin{pmatrix}e_1^{p+j-1}\\e_2^{p+j-1}\\\vdots\\e_k^{p+j-1}\end{pmatrix}
+\begin{pmatrix}e_1^{p+j-1}\nu_0(e_1)\\e_2^{p+j-1}\nu_0(e_2)\\\vdots\\e_k^{p+j-1}\nu_0(e_k)\end{pmatrix}, 
$$
and the condition $p+l-1<k-l$ ensures that the first vector, namely the one multiplied by $c$, appears 
in the Vandermonde part of the matrix, hence by antisymmetry of the determinant implying that this case 
reduces to the case $c=0$.  

Finally, the last case $p+l-1=k-l$ can handled in a similar fashion, except that 
now in the $l$th column we have a term which does not appear in the original
Cauchy part: 
\begin{equation*}
\begin{split}
&\det CSV_k^{(l,p)}(\mathbf{e},\nu,c)\\
&=\det\left(\begin{array}{cccccccc}
    e_1^{p}\hat\nu_0(e_1)&e_1^{p+1}\hat \nu_0(e_1)&\cdots&ce_1^{p+l-1}+e_1^{p+l-1}\hat\nu_0(e_1)&1&e_1&\cdots&e_1^{k-l-1}\\
    e_2^{p}\hat\nu_0(e_2)&e_2^{p+1}\hat\nu_0(e_2)&\cdots&ce_2^{p+l-1}+e_2^{p+l-1}\hat\nu_0(e_2)&1&e_2&\cdots&e_2^{k-l-1}\\
    \vdots&\vdots&\ddots&\vdots&\vdots&\vdots&\ddots&\vdots\\
    e_k^{p}\hat\nu_0(e_k)&e_k^{p+1}\hat\nu_0(e_k)&\cdots&ce_k^{p+l-1}+e_k^{p+l-1}\hat\nu_0(e_k)&1&e_k&\cdots&e_k^{k-l-1}
  \end{array}\right). 
  \end{split}
\end{equation*}
It suffices now to split the determinant into two, then move the term involving $c$ to the Vandermonde 
part, effectively lowering $l$ to $l-1$ for this term and then apply the same 
method as in the proof of the $c=0$ case.  
\end{proof} 
Our goal in the remainder of this section is to connect 
the determinantal formalism we have developed to the 
interpolation problem \eqref{eq:Papprox} (see Theorem \ref{thm:solPapproxbis}).  
To this end we set (see \eqref{eq:Wc} and Theorem \ref{thm:solPapproxbis}): 
\begin{equation*}
e_j=-z_j=\frac{1}{m_{j'}^2}, \qquad \nu =\mu, \quad 1\leq j\leq n, 
\end{equation*}
and observe that $W(z_j)=\hat \mu_c(e_j)$ by \eqref{eq:Wc}.  
\begin{theorem}\label{eq:evalDkEk}
Let $D_k, E_k$ be the determinants defined in Theorem \ref{thm:solPapproxbis}.  
Then 
\begin{equation*}
D_k=(-1)^{\lfloor \frac k2\rfloor \lfloor \frac {k+1}{2} \rfloor} \det CSV_k^{({\lfloor \frac k2}\rfloor,1)}(\mathbf{e},\mu,c), \qquad E_k=(-1)^{k+\lfloor  \frac{k}{2} \rfloor \lfloor \frac {k+1}{2 }\rfloor} \mathbf{e}_{[1,k]}\det CSV_k^{(\lfloor \frac{k+1}{2}\rfloor,0)}(\mathbf{e},\mu,c). 
\end{equation*}
\end{theorem}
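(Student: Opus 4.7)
The plan is to convert both determinants $D_k$ and $E_k$ into Cauchy--Stieltjes--Vandermonde form by a sequence of elementary multilinear operations (row factorization, column permutation, extraction of signs column by column), and then to track the accumulated sign by parity. The one nontrivial identification needed is $W(z_j)=\hat\mu_c(e_j)$, which follows directly from the definition of the shifted Stieltjes transform combined with the substitution $e_j=-z_j$ adopted just before the theorem.

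For $D_k$ I would substitute $z_j=-e_j$ in each column directly, with no row factoring or column permutation. The first $\lfloor k/2\rfloor$ columns, which carry the factor $W(z_j)=\hat\mu_c(e_j)$, each pick up $(-1)^{j}$ through $z_j^{j}=(-1)^{j}e_j^{j}$ and, after extraction, become precisely the Cauchy block of $CSV_k^{(\lfloor k/2\rfloor,1)}(\mathbf e,\mu,c)$. The remaining Vandermonde columns $1,z,z^{2},\dots,z^{\lfloor(k-1)/2\rfloor}$ similarly reduce to the Vandermonde block of the same CSV matrix after extracting one sign per column. A parity check, split into the cases $k=2s$ and $k=2s+1$, collapses the sum of extracted exponents $\tfrac{l(l+1)}{2}+\tfrac{(k-l-1)(k-l)}{2}$ with $l=\lfloor k/2\rfloor$ into the compact form $(-1)^{\lfloor k/2\rfloor\lfloor(k+1)/2\rfloor}$ stated in the theorem.

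The treatment of $E_k$ requires one preparatory step, because every entry of $E_k$ already carries a factor of $z_i$: the Vandermonde block starts at $z$ rather than $1$, and the Stieltjes block starts at $zW$. I would factor $z_i$ out of row $i$ first, producing the overall scalar $z_1z_2\cdots z_k=(-1)^k\mathbf e_{[1,k]}$. The resulting matrix has its $\lfloor k/2\rfloor$ Vandermonde columns on the left and its $\lfloor(k+1)/2\rfloor$ Cauchy columns on the right, opposite to the convention of Definition \ref{def:CSV}; a block swap of the two groups contributes the permutation sign $(-1)^{\lfloor k/2\rfloor\lfloor(k+1)/2\rfloor}$. After the swap I would substitute $z_j=-e_j$ and extract one sign per column exactly as in the $D_k$ step, arriving at $CSV_k^{(\lfloor(k+1)/2\rfloor,0)}(\mathbf e,\mu,c)$ and leaving the overall factor $\mathbf e_{[1,k]}$ outside the determinant.

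The argument uses no analytic input beyond the identification $W(z_j)=\hat\mu_c(e_j)$ and multilinearity of the determinant; the only genuine work is sign bookkeeping. The main obstacle will be the parity analysis needed to combine three distinct sources of signs in the $E_k$ case (the $(-1)^k$ from row extraction, the permutation sign from the block swap, and the column sign contributions from the substitution $z_j=-e_j$) into the single compact exponent asserted in the theorem; this is best done by residue-class analysis in $k$ modulo $4$, carefully comparing the direct computation against the stated closed form.
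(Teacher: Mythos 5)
Your method is exactly the paper's: its entire proof of this theorem is the remark that one factors $(-1)^j$ from each column containing $z_i^j=(-e_i)^j$, and that for $E_k$ one additionally factors $z_1z_2\cdots z_k$ and reshuffles the columns into CSV order. Your treatment of $D_k$ is a correct completion of that remark: no reshuffling is needed there, and the exponent sum $\frac{l(l+1)}{2}+\frac{(k-l-1)(k-l)}{2}$ with $l=\lfloor k/2\rfloor$ is indeed congruent mod $2$ to $\lfloor \frac k2\rfloor\lfloor \frac{k+1}{2}\rfloor$ in both parity cases, so the stated sign for $D_k$ is right.

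The gap is in the $E_k$ step you defer to the end. The three sign sources do \emph{not} collapse to the printed exponent. After factoring the rows (sign $(-1)^k$) and swapping the two blocks (sign $(-1)^{ab}$ with $a=\lfloor\frac k2\rfloor$, $b=\lfloor\frac{k+1}{2}\rfloor$), both blocks carry exponents $0,1,\dots$, so the substitution $z_i=-e_i$ contributes a further $(-1)^{a(a-1)/2+b(b-1)/2}$; for $k=2s+1$ this equals $(-1)^{s^2}=(-1)^s$, which does not cancel when $s$ is odd. Altogether $k+ab+\frac{a(a-1)}{2}+\frac{b(b-1)}{2}=k+\frac{k(k-1)}{2}=\frac{k(k+1)}{2}$, i.e.\ the computation actually yields $E_k=(-1)^{k(k+1)/2}\,\mathbf{e}_{[1,k]}\det CSV_k^{(\lfloor\frac{k+1}{2}\rfloor,0)}(\mathbf{e},\mu,c)$, and this differs from the theorem's sign precisely when $k\equiv3\pmod 4$. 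A direct check at $k=3$ confirms it: factoring $z_i$ from each row of $E_3$ and expanding gives $E_3=+\,\mathbf{e}_{[1,3]}\det CSV_3^{(2,0)}(\mathbf{e},\mu,c)$, whereas the stated formula gives the sign $(-1)^{3+2}=-1$. So the mod-$4$ bookkeeping you propose cannot terminate in the printed closed form; carried out honestly, it shows the printed sign is itself off in that residue class. This is immaterial for the rest of the paper, which immediately passes to the absolute values $\abs{D_k}$, $\abs{E_k}$ in \eqref{eq:absDkEk}, but your write-up should conclude with the exponent $\frac{k(k+1)}{2}$ (or explicitly flag the discrepancy) rather than promise agreement with the statement as printed.
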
 
\begin{proof}
This is a straightforward computation requiring only to factor 
$(-1)^j$ from any column containing $z_i^j=(-e_i)^j$ and, in the case of 
$E_k$, we also need to factor $(z_1z_2\cdots z_k)$ and reshuffle the columns 
to bring the matrix to the CSV form.  
\end{proof}
Now, it suffices to use theorem \ref{thm:detCSV} to conclude that both 
$D_k$ and $E_k$ are nonzero, provided that all $e_j$ are distinct (
which is the same as our nondegeracy conditions on the masses $m_j$). 
\begin{corollary} \label{cor:DkEknonzero}
The determinants $D_k$ and $E_k$ appearing in the solution 
to the interpolation problem stated in Theorem \ref{thm:solPapproxbis} are non zero for any $k$, 
$1\leq k\leq n$.  
\end{corollary}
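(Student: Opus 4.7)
The plan is to combine the two preceding theorems and reduce the statement to positivity of explicit integrals. First, I would invoke Theorem \ref{eq:evalDkEk} to express $D_k$ and $E_k$, up to a sign and (for $E_k$) the positive factor $\mathbf{e}_{[1,k]}$, as determinants of the Cauchy--Stieltjes--Vandermonde matrices $CSV_k^{(\lfloor k/2\rfloor,1)}(\mathbf{e},\mu,c)$ and $CSV_k^{(\lfloor(k+1)/2\rfloor,0)}(\mathbf{e},\mu,c)$, respectively. The relevant inputs are strictly positive and pairwise distinct nodes $e_j=1/m_{j'}^2$ (distinctness coming from the standing hypothesis of Theorem \ref{thm:solPapproxbis}), the positive discrete measure $\mu$ supported at $\lfloor n/2\rfloor$ points of $\R_+$, and the shift $c\geq 0$ which is strictly positive exactly when $n$ is odd (Theorem \ref{thm:W}).

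Second, I would apply Theorem \ref{thm:detCSV}. Each such determinant is a nonzero multiple of $\Delta_{[1,k]}(\mathbf{e})$ times either a single integral (case 1) or a sum of two integrals with coefficients $1$ and $c$ (case 2). In either situation the integrands $\Delta_{[1,l]}(\mathbf{x})^2/\Gamma_{[1,k],[1,l]}(\mathbf{e};\mathbf{x})$ (and the analogous $(l-1)$-integrand) are strictly positive on the open chamber $0<x_1<\dotsb<x_l$ because both the $e_j$ and the $x_j$ are positive, and $\Delta_{[1,k]}(\mathbf{e})\neq 0$ by distinctness. Since $\mu$ is a positive sum of Dirac masses at positive points, each integral is thus strictly positive as soon as $\mu$ possesses at least as many atoms as there are integration variables.

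Third, I would handle the case analysis on $k$ and $n$. For $D_k$ the leading integral has $l=\lfloor k/2\rfloor\leq\lfloor n/2\rfloor$ integration variables, which never exceeds the support size of $\mu$; so the leading term alone is already strictly positive and $D_k\neq 0$. For $E_k$ the leading count is $l=\lfloor(k+1)/2\rfloor$, and when $k$ is even or $k$ is odd with $k\leq n-1$ one checks $l\leq\lfloor n/2\rfloor$ directly, so again the leading integral is strictly positive. The delicate situation is $n$ odd and $k=n$: then $l=(n+1)/2$ exceeds the number of atoms of $\mu$ and the leading integral vanishes. However this is precisely the case $p+l-1=k-l$, so Theorem \ref{thm:detCSV} places us in case 2; moreover $c>0$ because $n$ is odd, and the secondary integral involves $l-1=\lfloor n/2\rfloor$ variables, matching the support of $\mu$ exactly, hence giving a strictly positive contribution. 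Thus $E_k\neq 0$ in every case.

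The principal obstacle is this last edge case: one must recognize that exactly when the dimension of the leading integral outstrips the support of $\mu$, the shift $c$ switches on via the parity of $n$, and case 2 of Theorem \ref{thm:detCSV} supplies a correction term of precisely the right dimension to restore non-vanishing. Once this dovetailing between parity, shift, and interpolation dimension is identified, the remainder is a routine dimension count together with the positivity argument sketched above.
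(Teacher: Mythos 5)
Your proposal is correct and follows essentially the same route as the paper: the paper likewise combines Theorem \ref{eq:evalDkEk} with the explicit CSV-determinant evaluations of Theorem \ref{thm:detCSV} and the distinctness of the nodes $e_j=1/m_{j'}^2$, disposing of the corollary in a single sentence. Your dimension count --- in particular the edge case of $n$ odd with $k=n$, where the leading $l$-fold integral vanishes for lack of atoms and the strictly positive shift $c$ supplies the nonzero secondary term --- simply makes explicit the details the paper leaves to the reader.
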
 

We finish this subsection by giving a complete 
solution to the inverse problem in terms of determinants of CSV matrices.  
To lessen the burden of keeping track of signs resulting from 
manipulations of matrix columns needed to bring matrices to the CSV form 
we opt for the display using the absolute value of determinants to 
quickly and compactly present the formulas.  
Thus, till further notice, we will denote 
\begin{equation}\label{eq:calD}
\D_k^{(l,p)}=\abs{\det CSV_k^{(l,p)}(\mathbf{e},\mu,c)}, 
\end{equation}
with the proviso that the arguments $\mathbf{e}, \mu, c$ are fixed.  
With this notation in place the formulas of Theorem \ref{eq:evalDkEk} take 
the form: 
\begin{equation}\label{eq:absDkEk}
\abs{D_k}=\D_k^{(\floor{\frac k2}, 1)}, \quad \abs{E_k}=\mathbf{e}_{[1,k]} \D_k^{(\floor{\frac {k+1}{2}}, 0)}, 
\end{equation}
from which one obtains determinantal formulas for 
the coefficients of the polynomials $\hat q_k, \hat p_k, \hat Q_k, \hat P_k$ leading 
via equations \eqref{eq:inversegodd}, \eqref{eq:inversegeven} to a complete 
solution of the inverse problem \ref{def:ISP}.  
\begin{theorem}\label{thm:detsolISP}
Suppose the Weyl function $W(z)$ is given by \eqref{eq:Wc} along with 
positive distinct constants (masses) $m_1, m_2, \cdots, m_n$.  Then 
there exists a unique solution to the inverse problem specified in Definition \ref{def:ISP}: 
\begin{subequations}
\begin{align}
&&g_{k'}&=\frac{\D_k^{(\frac{k-1}{2},1)}\D_{k-1}^{(\frac{k-1}{2},1)}}
{\mathbf{e}_{[1,k]}\D_k^{(\frac{k+1}{2},0)}\D_{k-1}^{(\frac{k-1}{2},0)}},   &\text{ if k  is odd},  \label{eq:detinversegodd}\\
&&g_{k'}&= \frac{\D_k^{(\frac{k}{2},1)}\D_{k-1}^{(\frac k2 -1,1)}}
{\mathbf{e}_{[1,k]}\D_k^{(\frac k2,0)}\D_{k-1}^{(\frac{k}{2},0)}}, &\text{ if k is even}. \label{eq:detinversegeven}
\end{align}
\end{subequations}
Likewise, 
\begin{subequations}
\begin{align}
h_{k'}&=\frac
{\mathbf{e}_{[1,k-1]}\D_k^{(\frac{k+1}{2},0)}\D_{k-1}^{(\frac{k-1}{2},0)}}{\D_k^{(\frac{k-1}{2},1)}\D_{k-1}^{(\frac{k-1}{2},1)}},   &\text{ if k is odd},  \label{eq:detinversehodd}\\
h_{k'}&= \frac
{\mathbf{e}_{[1,k-1]}\D_k^{(\frac k2,0)}\D_{k-1}^{(\frac{k}{2},0)}}{\D_k^{(\frac{k}{2},1)}\D_{k-1}^{(\frac k2 -1,1)}}, &\text{ if k is even}. \label{eq:detinversegeven}
\end{align}
\end{subequations}
\end{theorem}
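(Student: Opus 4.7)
The plan is to assemble three earlier ingredients: (i) the formulas in Corollary \ref{cor:gformulas} which express $g_{k'}$ as ratios of leading coefficients of the polynomials $\hat p_k,\hat q_{k-1}$ (odd case) and $\hat P_k,\hat Q_{k-1}$ (even case); (ii) the determinantal solution to the interpolation problem given by Theorem \ref{thm:solPapproxbis}, which realises these four polynomials as ratios of certain determinants against $D_k$ or $E_k$; and (iii) the identifications \eqref{eq:absDkEk} from Theorem \ref{eq:evalDkEk} which convert $D_k$ and $E_k$ into $\mathcal{D}_k^{(l,p)}$. Uniqueness and existence of the inverse are already provided by Theorem \ref{thm:contInverse}, so the entire burden here is the derivation of the closed-form expressions.

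The first step is to read off leading coefficients from \eqref{eq:hatqhatQsol} and \eqref{eq:hatphatPsol}. Expanding the determinant in \eqref{eq:hatqhatQsol} along the top row, the coefficient of $z^{\lfloor k/2\rfloor}$ is the leading coefficient of $\hat q_k$ and the coefficient of $z^{\lfloor k/2\rfloor+1+\lfloor(k-1)/2\rfloor}=z^k$ is the leading coefficient of $\hat Q_k$; in both cases Cramer's rule presents them as a minor of the displayed matrix divided by $D_k$. The analogous extraction from \eqref{eq:hatphatPsol} yields the leading coefficients of $\hat p_k$ and $\hat P_k$ divided by $E_k$. Plugging these four expressions into \eqref{eq:inversegodd} and \eqref{eq:inversegeven} already reduces $g_{k'}$ to a ratio of products of such minors and of $D_k,D_{k-1}$ or $E_k,E_{k-1}$.

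The main obstacle is to recognise each of the resulting minors as itself a CSV determinant with the right parameters. For odd $k$, after deleting the appropriate $z^j$ column, the rows retain $\lfloor k/2\rfloor$ Cauchy-Stieltjes columns $z_i^j W(z_i)$ and a Vandermonde block, so by reversing the column operations of Theorem \ref{eq:evalDkEk} (factoring signs $(-1)^j$ out of columns containing $(-e_i)^j$, and, when needed, pulling out $\mathbf{e}_{[1,k]}$ or $\mathbf{e}_{[1,k-1]}$ to account for columns that lose a factor of $e_i$) one recognises $\mathcal{D}_k^{((k+1)/2,0)}$ or $\mathcal{D}_{k-1}^{((k-1)/2,1)}$, and the parallel bookkeeping handles the even case. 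I would do this carefully once for the $\hat p_k$ minor, which is the prototypical computation, and then indicate the three analogous reductions. Combining everything and using the notation \eqref{eq:absDkEk} yields \eqref{eq:detinversegodd} and its even counterpart, where all signs have cancelled in the ratio thanks to Corollary \ref{cor:DkEknonzero} guaranteeing non-vanishing denominators.

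Finally, the formulas for $h_{k'}$ are obtained from the basic constraint $g_{k'}h_{k'}=m_{k'}^2=1/e_k$ imposed in Definition \ref{def:ISP}. Dividing $1/e_k$ by the closed form for $g_{k'}$ and absorbing $e_k$ into $\mathbf{e}_{[1,k]}$ to produce $\mathbf{e}_{[1,k-1]}$ reproduces the stated expressions. The sign-management in the second paragraph is the only delicate point; once a single prototype is handled, the remaining parities are cosmetic variants of the same Cramer-plus-\eqref{eq:absDkEk} computation.
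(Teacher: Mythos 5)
Your proposal is correct and follows essentially the same route as the paper's own proof: the paper likewise combines the leading-coefficient formulas \eqref{eq:inversegodd}, \eqref{eq:inversegeven} of Corollary \ref{cor:gformulas} with the Cramer's-rule solution of Theorem \ref{thm:solPapproxbis}, identifies the resulting minors as CSV determinants with signs handled by taking absolute values, and recovers $h_{k'}$ from $g_{k'}h_{k'}=m_{k'}^2$. The only difference is one of exposition — the paper states these steps in three sentences, whereas you spell out the cofactor expansions and the column bookkeeping that the paper leaves implicit.
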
 
\begin{proof} 
The formulas follow from equations \eqref{eq:inversegodd} and \eqref{eq:inversegeven}, as well as Theorem \ref{thm:solPapproxbis}.  The 
question of signs involved in the identification of the CSV determinants is 
addressed by taking the absolute values in all formulas needed to produce 
positive outcomes $g_{k'}$.  The formulas for $h_{k'}$ follow from 
the relation $g_{k'}h_{k'}=m_{k'}^2$.  
\end{proof} 
Finally, recalling that the original peakon problem \eqref{mCH_ode} was formulated 
in the $x$ space, using the relation $h_j=m_je^{x_j}$ (see equation \eqref{eq:xLax}), 
we arrive at the inverse formulae 
relating the spectral data and the positions of peakons given by $x_j$.  
\begin{theorem} \label{thm:inversex}
Given positive and distinct constants $m_j$, let $\Phi$ be the solution to the boundary value 
problem \ref{eq:xLaxBVP} with associated spectral data $\{d\mu, c\}$.  
Then the positions $x_j$ (of peakons) in the discrete measure 
$m=2\sum_{j=1}^n m_j \delta_{x_j} $ can be expressed in 
terms of the spectral data as: 

\begin{subequations}
\begin{align}
&&x_{k'}&=\ln \frac
{\mathbf{e}_{[1,k-1]}\D_k^{(\frac{k+1}{2},0)}\D_{k-1}^{(\frac{k-1}{2},0)}}{m_{k'}\D_k^{(\frac{k-1}{2},1)}\D_{k-1}^{(\frac{k-1}{2},1)}},   &\text{ if k  is odd}, \label{eq:detinversexodd}\\
&&x_{k'}&= \ln \frac
{\mathbf{e}_{[1,k-1]}\D_k^{(\frac k2,0)}\D_{k-1}^{(\frac{k}{2},0)}}{m_{k'}\D_k^{(\frac{k}{2},1)}\D_{k-1}^{(\frac k2 -1,1)}}, &\text{ if k is even}, \label{eq:detinversexeven}
\end{align}
\end{subequations}
with $\D_k^{(l,p)}$ defined in \eqref{eq:calD}, $k'=n-k+1, \, 1\leq k\leq n$ 
and the convention that $\D_0^{l,p}=1$.  
\end{theorem}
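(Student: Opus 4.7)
The theorem is essentially a restatement of Theorem \ref{thm:detsolISP} after translating from the variable $h_{k'}$ back to the spatial variable $x_{k'}$, so my plan is to assemble it from pieces already established.

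First I would invoke the defining relation $h_j = m_j e^{x_j}$ from equation \eqref{eq:xLax}, which (since $m_j > 0$ and we have established $h_j > 0$ via the Stieltjes-type analysis leading to Theorem \ref{thm:contInverse}) can be inverted to give $x_j = \ln(h_j/m_j)$. The problem then reduces to expressing $h_{k'}$ in terms of the spectral data $\{d\mu, c\}$ and the masses $m_1, \ldots, m_n$.

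Next I would substitute the explicit determinantal formulas for $h_{k'}$ supplied by Theorem \ref{thm:detsolISP}: for odd $k$ one uses
\begin{equation*}
h_{k'} = \frac{\mathbf{e}_{[1,k-1]}\D_k^{(\frac{k+1}{2},0)}\D_{k-1}^{(\frac{k-1}{2},0)}}{\D_k^{(\frac{k-1}{2},1)}\D_{k-1}^{(\frac{k-1}{2},1)}},
\end{equation*}
and for even $k$ the analogous formula with the exponents shifted. Dividing through by $m_{k'}$ and taking logarithms immediately produces the two displayed expressions for $x_{k'}$. Recall that $\mathbf{e}_{[1,k-1]} = \prod_{j=1}^{k-1} m_{j'}^{-2}$ is strictly positive and each $\D_k^{(l,p)}$ is strictly positive by Corollary \ref{cor:DkEknonzero} together with the positivity of the measure $\mu$ and the positivity of $c$ (when relevant) visible in the integral representation of Theorem \ref{thm:detCSV}, so the argument of the logarithm is positive and the expression is well defined.

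The only thing demanding a sliver of care is the boundary case $k=1$, which would read $x_{n} = \ln(\D_1^{(1,0)}/(m_n \D_1^{(0,1)}))$ upon applying the convention $\D_0^{(l,p)} = 1$; one should verify directly that this recovers the known one-peakon value (essentially $W(z) = c + b_1/(\zeta_1 - z)$ evaluated appropriately), which serves as a sanity check rather than a genuine obstacle. No part of the argument involves new analysis: the entire theorem is a bookkeeping corollary of Theorem \ref{thm:detsolISP} and the exponential change of variables, so I do not anticipate any real obstruction beyond matching parities and keeping the index convention $k' = n+1-k$ consistent throughout.
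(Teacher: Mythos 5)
Your proposal is correct and matches the paper's own route exactly: the paper derives Theorem \ref{thm:inversex} precisely by combining the determinantal formulas for $h_{k'}$ in Theorem \ref{thm:detsolISP} with the relation $h_j = m_j e^{x_j}$ from \eqref{eq:xLax} and taking logarithms. Your added remarks on positivity of the $\D_k^{(l,p)}$ and the $k=1$ sanity check are consistent with what the paper establishes (Corollary \ref{cor:DkEknonzero} and the example for $x_{1'}$), so there is nothing to fix.
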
 

Finally, we can relax the condition that the masses be distinct, observing that the Vandermonde determinants $\Delta_{[1, r]}(\mathbf{e}), \, r=k, k-1$, 
cancel out in all expressions of the type
\begin{equation*}
\frac{\D_k^{( l_1,p_1)}\D_{k-1}^{(l_2,p_2)}}{\D_k^{(l_3,p_3)}\D_{k-1} ^{(l_4,p_4)}}
\end{equation*}
as can be seen from the determinantal expressions 
given in Theorem \ref{thm:detCSV} (see \eqref{eq:detCSV1} and \eqref{eq:detCSV2}).  
\begin{theorem} \label{thm:minversex}
Given {\bf positive} constants $m_j$, let $\Phi$ be the solution to the boundary value 
problem \ref{eq:xLaxBVP} with associated spectral data $\{d\mu, c\}$.  
Then the positions $x_j$ (of peakons) in the discrete measure 
$m=2\sum_{j=1}^n m_j \delta_{x_j} $ can be expressed in 
terms of the spectral data as given by the continuous extension of the formulas \eqref{eq:detinversexodd} and \eqref{eq:detinversexeven} to all, including 
coinciding, masses.  
\end{theorem}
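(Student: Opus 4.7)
The plan is to leverage two facts already established: first, the continuous dependence of the inverse map on the masses proved in Theorem \ref{thm:contInverse}, and second, the explicit determinantal formulas of Theorem \ref{thm:detCSV} which exhibit the Vandermonde factor $\Delta_{[1,k]}(\mathbf{e})$ cleanly.

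First, I would rewrite each $\D_k^{(l,p)}$ appearing in the formulas \eqref{eq:detinversexodd}--\eqref{eq:detinversexeven} using Theorem \ref{thm:detCSV}. Each such quantity factors as
\begin{equation*}
\D_k^{(l,p)} = \abs{\Delta_{[1,k]}(\mathbf{e})}\cdot \mathcal{I}_k^{(l,p)}(\mathbf{e};\mu,c),
\end{equation*}
where $\mathcal{I}_k^{(l,p)}$ is the integral expression on the right-hand side of either \eqref{eq:detCSV1} or \eqref{eq:detCSV2}. The crucial point is that $\mathcal{I}_k^{(l,p)}$ depends on $\mathbf{e}$ only through the factors $\Gamma_{[1,k],[1,l]}(\mathbf{e};\mathbf{x})$ (and $\Gamma_{[1,k],[1,l-1]}(\mathbf{e};\mathbf{y})$ in the shifted case), both of which are polynomials in $\mathbf{e}$ with no zeros on the positive orthant $\{\mathbf{e}>0\}$ (recall $\mathbf{x},\mathbf{y}>0$ on the support of $\mu$). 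Hence $\mathcal{I}_k^{(l,p)}$ is a strictly positive, continuous function of $\mathbf{e}\in \R_+^k$ even when some entries coincide.

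Next, I would substitute the factorization into the ratios \eqref{eq:detinversexodd}--\eqref{eq:detinversexeven}. In both the odd and even cases, the numerator contributes a factor $\abs{\Delta_{[1,k]}(\mathbf{e})}\cdot\abs{\Delta_{[1,k-1]}(\mathbf{e})}$ from the product $\D_k^{(\cdot,0)}\D_{k-1}^{(\cdot,0)}$, and the denominator contributes the identical factor from $\D_k^{(\cdot,1)}\D_{k-1}^{(\cdot,1)}$. These cancel exactly, yielding
\begin{equation*}
x_{k'}=\ln\frac{\mathbf{e}_{[1,k-1]}\,\mathcal{I}_k^{(\cdot,0)}\,\mathcal{I}_{k-1}^{(\cdot,0)}}{m_{k'}\,\mathcal{I}_k^{(\cdot,1)}\,\mathcal{I}_{k-1}^{(\cdot,1)}},
\end{equation*}
with appropriate multi-index choices depending on parity. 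This expression is manifestly continuous in $(m_1,\ldots,m_n)\in\R_+^n$ (through $e_j=1/m_{j'}^2$), since all the $\mathcal{I}$-integrals are strictly positive, and the argument of the logarithm is therefore positive throughout the entire positive orthant.

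To conclude, both sides of the proposed formula are continuous functions of $(m_1,\ldots,m_n)\in \R_+^n$: the left side by Theorem \ref{thm:contInverse}, the right side by the observation above. They agree on the open dense subset where all $m_j$ are distinct by Theorem \ref{thm:inversex}, hence they agree everywhere by continuity. The main obstacle I anticipate is simply the bookkeeping of signs and powers to verify that the Vandermonde factors truly appear with equal multiplicity in numerator and denominator for both parities of $k$, and checking that the remaining $\mathcal{I}_k^{(l,p)}$ do not develop spurious zeros or infinities as masses coalesce; but once the factorization of Theorem \ref{thm:detCSV} is in hand this is a direct verification rather than a substantive difficulty.
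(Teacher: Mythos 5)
Your proposal is correct and follows essentially the same route as the paper's own proof: the paper likewise factors out and cancels the Vandermonde determinants $\Delta_{[1,k]}(\mathbf{e})$, $\Delta_{[1,k-1]}(\mathbf{e})$ using the determinantal evaluations of Theorem \ref{thm:detCSV}, and then invokes the continuity and uniqueness of the inverse map from Theorem \ref{thm:contInverse} to conclude that the continuously extended formulas agree with the true positions for all positive masses. Your version merely spells out in more detail the strict positivity of the residual integral factors, which the paper leaves implicit.
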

\begin{proof} 
We give a short proof of this statement.  The forward map $\mathcal{S}:
(m_1,m_2,\cdots, m_n; x_1,x_2,\cdots, x_n)\rightarrow (m_1,m_2,\cdots, \mathbf{\zeta}, \mathbf{b}, c)$ and its inverse $\mathcal{S}^{-1}$, 
which exists by Theorem \ref{thm:contInverse}, are continuous.  
The formulas \eqref{eq:detinversexodd} and \eqref{eq:detinversexeven}
were originally defined for distinct masses but, after cancellation of 
the Vandermonde determinants mentioned above, have continuous extensions 
to all positive masses, distinct or not.  By uniqueness and continuity of $\mathcal{S}$ the extended formulas are then the formulas valid for all positive masses.  

\end{proof}

\section{Multipeakons for $n=2K$}\label{sec:evenpeakons} 
Even though the only difference between even and odd $n$ is that $c=0$ when $n$ is even (see Theorem \ref{thm:W}),  and $c>0$ otherwise, we nevertheless present these two cases 
separately in an attempt to underscore subtle differences in the asymptotic 
behaviour of peakons for these two cases.   
\subsection{Closed formulae for $n=2K$}
If we assume that $x_1(0)<x_2(0)<\cdots<x_{2K}(0)$ then this condition will hold at least in a small interval containing $t=0$.  Thus by Theorem \ref{thm:inversex} we have the following result.

\begin{theorem}\label{thm:peakon_even}
Assuming the notation of Theorem \ref{thm:inversex}, the mCH equation \eqref{eq:m1CH} with the regularization 
of the singular term $u_x^2 m$ given by $\avg{u_x^2}m$ admits the multipeakon solution
\begin{equation}\label{eq:umultipeakoneven}
u(x,t)=\sum_{k=1}^{2K}m_{k'}(t)\exp(-|x-x_{k'}(t)|),
\end{equation}
where $x_{k'}$ are given by equations \eqref{eq:detinversexodd} and \eqref{eq:detinversexeven}, 
with the peakon spectral measure 
\begin{equation}\label{eq:peakon sm}
d\mu=\sum_{j=1}^{K} b_j(t) \delta_{\zeta_j}, 
\end{equation} 
$b_j(t)=b_j(0)e^{\frac{2t}{\zeta_j}}, \, 0<b_j(0)$, ordered eigenvalues $0<\zeta_1<\cdots<\zeta_K$ and $c=0$ in \eqref{eq:calD}.
\end{theorem}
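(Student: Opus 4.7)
The plan is to assemble Theorem \ref{thm:peakon_even} from three already-available ingredients: the forward spectral map, the time evolution of spectral data extracted from the Lax pair, and the inverse map of Section \ref{sec:IP}. By the ordering hypothesis $x_1(0)<\cdots<x_{2K}(0)$ and the constancy of the $m_j$ established in \eqref{mCH_ode}, the forward map at $t=0$ produces a Weyl function $W(z,0)$ of the form \eqref{eq:Wc} with $c=0$ (by Theorem \ref{thm:W}, since $n=2K$ is even), simple positive spectrum $0<\zeta_1<\cdots<\zeta_K$, and positive residues $b_j(0)$.

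Next, I read off the time evolution of the spectral data from Lemma \ref{lem:t-evolution of qp}. Since $\dot p_n=0$, the eigenvalues $\zeta_j$ are time-invariant. For the residues, write $b_j=-q_n(\zeta_j)/p_n'(\zeta_j)$; evaluating the identity $\dot q_n(z)=(2/z)q_n(z)-(2L/z)p_n(z)$ at $z=\zeta_j$ kills the second term and gives $\dot b_j=(2/\zeta_j)\,b_j$, hence $b_j(t)=b_j(0)e^{2t/\zeta_j}$. Positivity of $b_j(t)$ is preserved, and $c=0$ persists throughout because $p_n$ retains its full degree $K$ (its zeros being fixed in time), so the Weyl function stays strictly proper.

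Finally, I feed the evolved data $\{m_j;\,\zeta_j;\,b_j(t);\,c=0\}$ into the inverse formulas \eqref{eq:detinversexodd}--\eqref{eq:detinversexeven} of Theorem \ref{thm:inversex}, or Theorem \ref{thm:minversex} for coinciding masses, to obtain $x_{k'}(t)$. By continuity of the inverse map (Theorem \ref{thm:contInverse}), $x_{k'}(t)$ varies continuously in $t$ and agrees with the initial positions at $t=0$, so the ordering persists on a small interval containing $0$; there the peakon ansatz \eqref{eq:umultipeakoneven} is consistent. That the reconstructed $u$ solves the distributional mCH equation with $u_x^2 m$ interpreted as $\avg{u_x^2}m$ follows from Appendix \ref{lax_mch}: the $x$-part of the Lax pair holds by construction, the $t$-part generates exactly the spectral evolution derived above, and distributional compatibility of the two parts is precisely \eqref{mCH_ode}.

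The main obstacle, technically speaking, is verifying that the diagram $(m,x)\mapsto W\mapsto W(\cdot,t)\mapsto (m,x(t))$ commutes, i.e.\ that the forward spectral map of the reconstructed positions returns the evolved Weyl function; this is guaranteed by the uniqueness clause of Theorem \ref{thm:contInverse}, but requires careful bookkeeping of the shift constant $c$ and the degree structure of $p_n$, which enter the inverse formulas differently in the even and odd cases. Global preservation of the ordering is not claimed here and is the subject of Theorem \ref{thm:global}.
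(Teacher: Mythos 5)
Your proposal is correct and follows essentially the same route as the paper: the spectral evolution $\dot b_j=\tfrac{2}{\zeta_j}b_j$, fixed $\zeta_j$, and $c=0$ are read off from Lemma \ref{lem:t-evolution of qp} (the paper computes $\dot W=\tfrac{2}{z}W-\tfrac{2L}{z}$ and takes residues, which is the same calculation as your evaluation of $\dot q_n$ at the zeros of $p_n$), and the conclusion then follows from the inverse formulas of Theorem \ref{thm:inversex} together with the distributional compatibility established in Appendix \ref{lax_mch}. Your added remarks on the local persistence of the ordering and the uniqueness/commutativity of the forward--inverse diagram only make explicit what the paper leaves implicit.
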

\begin{proof} 
We only need to discuss the time evolution of the spectral measure.  To this 
end we recall that the Weyl function $W(z)$ defined in \eqref{eq:defWeyl} 
undergoes the time evolution  dictated by \eqref{eq:tderqp}; 
a simple computation gives 
\begin{equation*}
\dot W=\frac 2z W-\frac{2L}{z}, 
\end{equation*}
which, in turn, implies
$\dot b_j=\frac{2}{\zeta_j} b_j, 1\leq j\leq K$ by virtue of 
Corollary \ref{cor:spectrum}.  The remaining statement regarding 
the multipeakon solutions follows from our solution of the inverse problem 
and the fact that to formulate the time evolution we used the distributionally compatible Lax pair (see Apendix \ref{lax_mch}).  
\end{proof} 

Even though one could easily give examples of peakon solutions 
based directly on Theorem \ref{thm:peakon_even} it is helpful 
to examine the explicit formulas for the evaluation of 
CSV determinants presented in Theorem \ref{thm:detCSV} (see equation \eqref{eq:calD} for notation), adjusted to the case of even $n$.  

\begin{theorem} \label{thm:Dklm-peakon}  Let $n=2K, \, 0\leq l\leq K, \, 0\leq p, \,  p+l-1\leq k-l, \,1\leq k\leq 2K$ and let the peakon spectral measure be given by \eqref{eq:peakon sm}.  
Then 
\begin{enumerate} 
\item 
\begin{equation}\label{eq:Dklp} 
\D_k^{(l,p)}=\abs{\Delta_{[1,k]}(\mathbf{e})}\sum_{I\in\binom{[1,K]}{l} }
\frac{\Delta^2_I(\mathbf{\zeta})\mathbf{b}_{I} \mathbf{\zeta}^p_I}{\Gamma_{[1,k],I}(\mathbf{e};\mathbf{\zeta})}; 
\end{equation} 
\item in the asymptotic region $t\rightarrow + \infty$
\begin{equation} \label{eq:Dklp+}
\D_k^{(l,p)}=\abs{\Delta_{[1,k]}(\mathbf{e})}
\frac{\Delta^2_{[1,l]} (\mathbf{\zeta})\mathbf{b}_{[1,l]} \mathbf{\zeta}^p_{[1,l]}}{\Gamma_{[1,k],[1,l]}(\mathbf{e};\mathbf{\zeta})}\Big[1+\mathcal{O}(e^{-\alpha t})\Big], \quad 0< \alpha; 
\end{equation}
\item in the asymptotic region $t \rightarrow -\infty$
\begin{equation}\label{eq:Dklp-}
\D_k^{(l,p)}=\abs{\Delta_{[1,k]}(\mathbf{e})}
\frac{\Delta^2_{[1,l]^*} (\mathbf{\zeta})\mathbf{b}_{[1,l]^*} \mathbf{\zeta}^p_{[1,l]^*}}{\Gamma_{[1,k],[1,l]^*}(\mathbf{e};\mathbf{\zeta})}\Big[1+\mathcal{O}(e^{\beta t})\Big]
, \quad 0<\beta, 
\end{equation} 
where $[1,l]^*=[l^*=K-l+1,1^*=K]$ $(\textrm{reflection of the interval }[1,K])$.  
\end{enumerate} 
\end{theorem}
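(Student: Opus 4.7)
The plan is to derive part (1) directly from Theorem \ref{thm:detCSV}, and then to read off parts (2) and (3) by isolating the dominant term under the explicit time evolution of the spectral measure given in Theorem \ref{thm:peakon_even}.

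First I would specialize Theorem \ref{thm:detCSV} to the present situation. Since $n=2K$ is even, Theorem \ref{thm:W} gives $c=0$, so formula \eqref{eq:detCSV1} applies. The peakon spectral measure $d\mu=\sum_{j=1}^K b_j\,\delta_{\zeta_j}$ is discrete and supported on the positive reals, so the multiple integral in \eqref{eq:detCSV1} with $d\nu^p(x)=x^p\,d\mu(x)$ collapses to a sum over tuples $(\zeta_{i_1},\dots,\zeta_{i_l})$ with $\zeta_{i_1}<\cdots<\zeta_{i_l}$, i.e. over multi-indices $I\in\binom{[1,K]}{l}$. Each summand equals
\[
\frac{\Delta_I^2(\mathbf{\zeta})\,\mathbf{b}_I\,\mathbf{\zeta}_I^p}{\Gamma_{[1,k],I}(\mathbf{e};\mathbf{\zeta})},
\]
which is manifestly positive (the $e_i$'s and $\zeta_j$'s are positive). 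Taking absolute values then removes the sign $(-1)^{lp+l(l-1)/2}$ and converts $\Delta_{[1,k]}(\mathbf{e})$ into $|\Delta_{[1,k]}(\mathbf{e})|$, giving \eqref{eq:Dklp}.

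Next I would feed in the time evolution $b_j(t)=b_j(0)e^{2t/\zeta_j}$ from Theorem \ref{thm:peakon_even}. For a multi-index $I=\{i_1<\dots<i_l\}$, we get
\[
\mathbf{b}_I(t)=\mathbf{b}_I(0)\,\exp\Bigl(2t\sum_{i\in I}\tfrac{1}{\zeta_i}\Bigr).
\]
Because $0<\zeta_1<\zeta_2<\dots<\zeta_K$, the reciprocals satisfy $\tfrac{1}{\zeta_1}>\tfrac{1}{\zeta_2}>\dots>\tfrac{1}{\zeta_K}>0$. Hence the functional $I\mapsto \sum_{i\in I}1/\zeta_i$ on $\binom{[1,K]}{l}$ attains its \emph{unique} maximum at $I=[1,l]$ and its \emph{unique} minimum at $I=[1,l]^*=[K-l+1,K]$. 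Consequently, as $t\to+\infty$ the term with $I=[1,l]$ dominates in \eqref{eq:Dklp}, and as $t\to-\infty$ the term with $I=[1,l]^*$ dominates.

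Finally I would quantify the error. Factor the dominant term out of the sum: for $t\to+\infty$,
\[
\D_k^{(l,p)} = |\Delta_{[1,k]}(\mathbf{e})|\,\frac{\Delta^2_{[1,l]}(\mathbf{\zeta})\,\mathbf{b}_{[1,l]}\,\mathbf{\zeta}^p_{[1,l]}}{\Gamma_{[1,k],[1,l]}(\mathbf{e};\mathbf{\zeta})}\Bigl(1+\!\!\sum_{I\neq[1,l]}\!\! R_I\,e^{-2t\,\delta_I}\Bigr),
\]
where $\delta_I=\sum_{j=1}^{l}\tfrac{1}{\zeta_j}-\sum_{i\in I}\tfrac{1}{\zeta_i}>0$ and $R_I$ is the time-independent ratio of the remaining factors (spectral and geometric data). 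Setting $\alpha=2\min_{I\neq[1,l]}\delta_I>0$ yields \eqref{eq:Dklp+}. The argument for $t\to-\infty$ is identical after replacing $[1,l]$ by $[1,l]^*$, with $\beta=2\min_{I\neq[1,l]^*}\bigl(\sum_{i\in I}\tfrac{1}{\zeta_i}-\sum_{i\in[1,l]^*}\tfrac{1}{\zeta_i}\bigr)>0$. The only subtle point — really just bookkeeping rather than an obstacle — is verifying that the extremizers of $I\mapsto \sum_{i\in I}1/\zeta_i$ are unique, which is immediate from the strict monotonicity of the eigenvalues guaranteed by Corollary \ref{cor:spectrum}.
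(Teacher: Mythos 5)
Your proposal is correct and follows essentially the same route as the paper's proof: part (1) by specializing Theorem \ref{thm:detCSV} (formula \eqref{eq:detCSV1}, valid here since $c=0$ for $n=2K$) to the discrete measure $d\mu$, and parts (2)--(3) by noting that all time dependence sits in $\mathbf{b}_I$ and that the strict ordering $0<\zeta_1<\cdots<\zeta_K$ makes the $l$-tuple of smallest (resp.\ largest) eigenvalues uniquely dominant as $t\to+\infty$ (resp.\ $t\to-\infty$). Your explicit factoring of the dominant term and identification of $\alpha,\beta$ as twice the minimal exponent gaps merely spells out what the paper leaves implicit.
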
 
\begin{proof} 
Equation \eqref{eq:Dklp} follows from Theorem \ref{thm:detCSV}, in particular 
equation \eqref{eq:detCSV1}, by taking $d\nu=d\mu$ there, and carrying 
out integration.  
The formula \eqref{eq:Dklp+} can be obtained from \eqref{eq:Dklp} by observing 
that the time dependence is confined to terms $\mathbf{b}_{I}= 
b_{j_1}(0)e^{\frac{2t}{\zeta_{j_1}}}b_{j_2}(0)e^{\frac{2t}{\zeta_{j_2}}}\cdots 
b_{j_l}(0)e^{\frac{2t}{\zeta_{j_l}}}$ of which the term with the smallest $l$-tuple of 
eigenvalues is dominant; the rest then follows from our ordering of 
the eigenvalues.  Finally, 
formula \eqref{eq:Dklp-} follows from a similar argument, except that 
for $t\rightarrow -\infty$ the dominant term corresponds to the largest $l$-tuple 
of eigenvalues.  

\end{proof} 
Before we display examples of formulas for multipeakons in the case of 
$n=2K$ we remind the reader that  
$e_j=\frac{1}{m^2_{j'}}, \, j'=2K-j+1$.  All examples below 
are derived following the same pattern: one takes formulas \eqref{eq:detinversexodd}, 
\eqref{eq:detinversexeven} and uses \eqref{eq:Dklp} to derive explicit expressions 
for positions $x_1, \cdots, x_{2K}$.   
\begin{example}[2-peakon solution; K=1]
  \begin{align*}
     &x_1=\ln\left(\frac{b_1}{\zeta_1m_1(1+\zeta_1m_2^2)}\right),\ \ \
     &x_2=\ln\left(\frac{b_1m_2}{1+\zeta_1m_2^2}\right).
       \end{align*}
\end{example}

\begin{example}[4-peakon solution; K=2]
  \begin{align*}
     &x_1=\ln\left(\frac{1}{m_1}\cdot\frac{b_1b_2(\zeta_2-\zeta_1)^2}{\zeta_1\zeta_2\left(b_1\zeta_1(1+\zeta_2m_2^2)(1+\zeta_2m_3^2)(1+\zeta_2m_4^2)+b_2\zeta_2(1+\zeta_1m_2^2)(1+\zeta_1m_3^2)(1+\zeta_1m_4^2)\right)}\right),&\\
     &x_2=\ln\left(m_2\cdot\frac{b_1b_2(\zeta_2-\zeta_1)^2\left(b_1(1+\zeta_2m_3^2)(1+\zeta_2m_4^2)+b_2(1+\zeta_1m_3^2)(1+\zeta_1m_4^2)\right)}{\left(b_1\zeta_1(1+\zeta_2m_3^2)(1+\zeta_2m_4^2)+b_2\zeta_2(1+\zeta_1m_3^2)(1+\zeta_1m_4^2)\right)}\right.&\\
     &\qquad\qquad \cdot\left.\frac{1}{\left(b_1\zeta_1(1+\zeta_2m_2^2)(1+\zeta_2m_3^2)(1+\zeta_2m_4^2)+b_2\zeta_2(1+\zeta_1m_2^2)(1+\zeta_1m_3^2)(1+\zeta_1m_4^2)\right)}\right),&\\
     &x_3=\ln\left(\frac{1}{m_3}\cdot\frac{\left(b_1(1+\zeta_2m_4^2)+b_2(1+\zeta_1m_4^2)\right)\left(b_1(1+\zeta_2m_3^2)(1+\zeta_2m_4^2)+b_2(1+\zeta_1m_3^2)(1+\zeta_1m_4^2)\right)}{(1+\zeta_1m_4^2)(1+\zeta_2m_4^2)\left(b_1\zeta_1(1+\zeta_2m_3^2)(1+\zeta_2m_4^2)+b_2\zeta_2(1+\zeta_1m_3^2)(1+\zeta_1m_4^2)\right)}\right),&\\
     &x_4=\ln\left(m_4\cdot\frac{b_1(1+\zeta_2m_4^2)+b_2(1+\zeta_1m_4^2)}{(1+\zeta_1m_4^2)(1+\zeta_2m_4^2)}\right).&
       \end{align*}
       \end{example}
       \begin{example}[a general formula for the last position $x_{2K}=x_{1'}$]
\begin{equation*}
x_{2K}=x_{1'}=\ln\frac{\D_1^{(1,0)}}{m_{1'}}=\ln \frac{\hat \mu_0(e_1)}{m_{2K}}=
\ln\frac{\sum_{i=1}^K \frac{b_i}{\frac{1}{m_{2K}^2}+\zeta_i}}{m_{2K}}=\ln m_{2K} \sum_{i=1}^K \frac{b_i}{1+m_{2K}^2\zeta_i}. 
\end{equation*}
\end{example}

\subsection{Global existence for $n=2K$}
 As time varies, the initial order $x_1(0)<x_2(0)<\cdots<x_{2K}(0)$ might cease to hold. In this subsection, we formulate a sufficient  condition needed to 
ensure that the peakon flow exists globally in time.
\begin{theorem}\label{thm:global}
  Given arbitrary spectral data $$\{b_j>0, \, 0<\zeta_1<\zeta_2<\cdots< \zeta_K: 1\leq j \leq K \}, $$ suppose the masses $m_k$ satisfy
\begin{subequations}
\begin{align}
&&\frac{\zeta_K^{\frac{k-1}{2}}}{\zeta_1^{\frac{k+1}{2}}}&< m_{(k+1)' }m_{k'}, \qquad &\text { for all odd }k, \qquad   1\leq k\leq 2K-1, \\
&&\frac{m_{(k+2)'}m_{(k+1)'}}{(1+m_{(k+1)'}^2\zeta_1)(1+m_{(k+2)'}^2\zeta_1)}&<
\frac{\zeta_1^{\frac{k+1}{2}}}{\zeta_K^{\frac{k-1}{2}}}
\frac{2\, \textnormal{min}_j(\zeta_{j+1}-\zeta_j)^{k-1}}{(k+1)(\zeta_K-\zeta_1)^{k+1}}, 
\quad &\text{ for all odd } k, \qquad 1\leq k\leq 2K-3. \label{eq:seccond}
\end{align}
\end{subequations}
Then the positions obtained from inverse formulas \eqref{eq:detinversexodd}, 
\eqref{eq:detinversexeven} are ordered $x_1<x_2<\cdots<x_{2K}$ and the multipeakon solutions \eqref{eq:umultipeakoneven} exist for arbitrary $t\in R$.
\end{theorem}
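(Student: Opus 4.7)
The goal is to guarantee that the candidate positions produced by the inverse formulas \eqref{eq:detinversexodd}--\eqref{eq:detinversexeven} satisfy the strict ordering $x_1(t) < x_2(t) < \cdots < x_{2K}(t)$ for all $t \in \R$; once this is in place, the construction of Theorem \ref{thm:peakon_even} delivers a genuine, globally defined multipeakon. Rewriting the ordering in the reflected indexing, I must show
\begin{equation*}
x_{k'}(t) - x_{(k+1)'}(t) > 0, \qquad 1\leq k\leq 2K-1, \quad t\in\R,
\end{equation*}
and I split this into two cases according to the parity of $k$, mirroring the two hypotheses of the theorem. These two families of gaps correspond, in view of the asymptotic pairing alluded to in the introduction, to \emph{intra-pair} distances (odd $k$, controlled by the first hypothesis) and to \emph{inter-pair} distances (even $k$, controlled by the second hypothesis).

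The first step is to form the difference $x_{k'} - x_{(k+1)'}$ explicitly. For odd $k$ one uses \eqref{eq:detinversexodd} for $x_{k'}$ and \eqref{eq:detinversexeven} for $x_{(k+1)'}$; for even $k$ it is the opposite. In either case the $\mathbf{e}_{[1,k-1]}$ and several common $\D$-factors cancel, and the remaining expression is (up to logarithms of strictly positive quantities) of the form $\ln (m_{k'}m_{(k+1)'} R_k(t))$ in the odd case and $\ln (R_k(t)/(m_{k'}m_{(k+1)'}))$ in the even case, where $R_k(t)$ is a ratio of reduced CSV determinants $\tilde\D_k^{(l,p)}:=\sum_{I\in\binom{[1,K]}{l}}\Delta_I^2(\boldsymbol\zeta)\mathbf{b}_I\boldsymbol\zeta_I^p/\Gamma_{[1,k],I}(\mathbf{e};\boldsymbol\zeta)$ read off from \eqref{eq:Dklp}. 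This explains the diametrically opposite shape of the two hypotheses: in the first case we need $m_{k'}m_{(k+1)'}$ bounded \emph{below}, in the second bounded \emph{above}.

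The second step is to produce estimates on $R_k(t)$ that are \emph{uniform in the time-dependent residues} $\mathbf{b}(t)=\{b_j(0)e^{2t/\zeta_j}\}$, since otherwise the bound could fail for some $t$. The key observation is that in each $\tilde\D_k^{(l,p)}$ every summand is positive and carries the \emph{same} product $\mathbf{b}_I$ regardless of the exponent $p$. Consequently, using only the ordering $0<\zeta_1<\cdots<\zeta_K$, one has the elementary squeezing
\begin{equation*}
\zeta_1^{pl}\,\tilde\D_k^{(l,0)} \;\leq\; \tilde\D_k^{(l,p)} \;\leq\; \zeta_K^{pl}\,\tilde\D_k^{(l,0)},
\end{equation*}
which, applied inside $R_k(t)$, causes \emph{all} $\mathbf{b}(t)$-dependence to cancel and reduces Case~1 to the geometric inequality $\zeta_1^{(k+1)/2}/\zeta_K^{(k-1)/2}<m_{k'}m_{(k+1)'}$, precisely the first hypothesis. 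For Case~2 the cancellation is more delicate because the $l$-values on numerator and denominator differ; here I expect to need, in addition, a one-sided bound in which the numerator is estimated from below by retaining a single well-chosen term $I=\{j,j+1,\ldots\}$ (giving $\Delta_I(\boldsymbol\zeta)^2\geq\min_j(\zeta_{j+1}-\zeta_j)^{k-1}$) and the denominator is estimated from above by $(\zeta_K-\zeta_1)^{k+1}$-type factors combined with $(1+m_{k'}^2\zeta_1)(1+m_{(k+1)'}^2\zeta_1)$ coming from $\Gamma$; this is how the right-hand side of \eqref{eq:seccond} arises.

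The step I expect to carry the real weight is Case~2. In Case~1 the analysis is essentially a one-term bound, as the model computation for $K=1$ already shows ($x_2-x_1=\ln(\zeta_1 m_1 m_2)$ positive iff $m_1m_2>1/\zeta_1$, matching the first hypothesis verbatim). In Case~2, by contrast, the mass product sits in the denominator and the estimate has to be sharp in both directions simultaneously; matching the combinatorial constant $2/(k+1)$ and the precise exponents on $\zeta_1,\zeta_K,\min_j(\zeta_{j+1}-\zeta_j),\zeta_K-\zeta_1$ in \eqref{eq:seccond} will require careful bookkeeping of which $\binom{K}{l}$ terms in $\tilde\D$ survive the cancellation and a uniform upper bound on ratios of Vandermonde products $\Delta_I^2(\boldsymbol\zeta)/\Delta_J^2(\boldsymbol\zeta)$. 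Once both cases are settled uniformly in $t$, the ordering is preserved for all real times, and since the masses $m_j$ are conserved by \eqref{mCH_ode}, Theorem \ref{thm:peakon_even} then promotes the ordered positions to a genuine multipeakon solution of \eqref{eq:m1CH} existing globally.
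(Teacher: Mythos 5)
Your overall strategy is the paper's: use \eqref{eq:detinversexodd}--\eqref{eq:detinversexeven} to convert the ordering $x_1<\cdots<x_{2K}$ into lower bounds on ratios of CSV determinants (the paper's \eqref{eq:order1bis}--\eqref{eq:order2bis}), then establish those bounds \emph{uniformly in $t$}. Your Case~1 is exactly the paper's argument: since there the numerator and denominator determinants run over the \emph{same} index families, the term-by-term squeeze $\zeta_1^{l}\D_k^{(l,0)}\leq \D_k^{(l,1)}\leq \zeta_K^{l}\D_k^{(l,0)}$ cancels all $\mathbf{b}(t)$-dependence and yields the first hypothesis. (Minor slip: your reduced inequality should read $\zeta_K^{(k-1)/2}/\zeta_1^{(k+1)/2}<m_{k'}m_{(k+1)'}$; your own $K=1$ check, $m_1m_2>1/\zeta_1$, shows the fraction in your general statement is inverted.)

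Case~2, which you yourself flag as carrying the real weight, has a genuine gap: the mechanism you propose---bound the numerator of $\D_{k+2}^{(l,1)}\D_k^{(l,0)}\big/\bigl(\D_{k+2}^{(l+1,0)}\D_k^{(l-1,1)}\bigr)$ from below by a \emph{single} well-chosen term and the denominator from above---cannot give a bound uniform in $t$. By \eqref{eq:Dklp} the numerator is a positive combination of monomials $\mathbf{b}_A\mathbf{b}_B$ with $|A|=|B|=l$, while the denominator carries monomials $\mathbf{b}_I\mathbf{b}_J$ with $|I|=l-1$, $|J|=l+1$; since $b_j(t)=b_j(0)e^{2t/\zeta_j}$, these are exponentials in $t$ with distinct rates. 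A single retained numerator monomial, of rate $R_0$, is outgrown as $t\to+\infty$ by any denominator monomial of rate larger than $R_0$, and as $t\to-\infty$ by any denominator monomial of rate smaller than $R_0$; as soon as the denominator contains more than one distinct monomial (already the case for $K\geq 3$ at $k=1$), no choice of $R_0$ avoids both failures, so the estimate collapses at one end of the time axis. What the proof needs---and what the paper supplies---is a \emph{monomial-preserving matching}: the map \eqref{eq:Map}, $(I,J)\mapsto (A,B)=\bigl(I\cup\{i\},\,J\setminus\{i\}\bigr)$ with $i$ the smallest element of $J\setminus I$, satisfies $\mathbf{b}_I\mathbf{b}_J=\mathbf{b}_A\mathbf{b}_B$, so each denominator term is dominated by a constant times a numerator term with \emph{identical} time dependence, the constant coming from $\Delta^2_{\{i\},B}/\Delta^2_{\{i\},I}\leq (\zeta_K-\zeta_1)^{2l}/\min_j(\zeta_{j+1}-\zeta_j)^{2(l-1)}$ and from $\bigl((e_{k+1}+\zeta_i)(e_{k+2}+\zeta_i)\bigr)^{-1}\leq\bigl((e_{k+1}+\zeta_1)(e_{k+2}+\zeta_1)\bigr)^{-1}$. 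Finally, the fibers of this map have at most $l$ elements, and this multiplicity count is precisely the origin of the combinatorial constant $\tfrac{2}{k+1}$ in \eqref{eq:seccond} (set $l=\tfrac{k+1}{2}$ in \eqref{eq:order2step4}). Your sketch names the right constants, but without this matching device (or an equivalent pairing of denominator monomials with numerator monomials of the same $t$-dependence) the Case~2 bound cannot be made uniform in $t$, and the plan as written does not close.
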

\begin{proof}
The solutions described in Theorem \ref{thm:inversex} are valid peakon solutions as long as $x_1<x_2<\cdots<x_{2K}$ holds.  We write these conditions as:

\begin{subequations}
\begin{align}
&&x_{(k+1)'}&<x_{k'}, & \text { for all odd }k, \qquad  1\leq k\leq 2K-1, \label{eq:order1}\\
&&x_{(k+2)'}&<x_{(k+1)'}, & \text{ for all odd }k, \qquad  1\leq k\leq 2K-3, \label{eq:order2}
\end{align}
\end{subequations}

and use equations \eqref{eq:detinversexodd}, \eqref{eq:detinversexeven} to 
obtain equivalent conditions
\begin{subequations}
\begin{align}
\frac{1}{m_{(k+1)'}m_{k'}}<\frac{\D_{k+1}^{(\frac{k+1}{2},1)} \D_{k-1}^{(\frac{k-1}{2},0)}}{\D_{k+1}^{(\frac{k+1}{2},0)} \D_{k-1}^{(\frac{k-1}{2},1)}},&& \text { for all odd }k, \qquad   1\leq k\leq 2K-1, \label{eq:order1bis} \\
\frac{1}{m_{(k+2)'}m_{(k+1)'}}<\frac{\D_{k+2}^{(\frac{k+1}{2},1)} \D_{k}^{(\frac{k+1}{2},0)}}{\D_{k+2}^{(\frac{k+3}{2},0)} \D_{k}^{(\frac{k-1}{2},1)}},&& \text{ for all odd } k, \qquad 1\leq k\leq 2K-3.  \label{eq:order2bis} 
\end{align} 
\end{subequations}
 
Note that equation \eqref{eq:Dklp} implies easily that the inequality 
\begin{equation}\label{eq:order1step3}
\frac{\D_{k+1}^{(\frac{k+1}{2},1)} \D_{k-1}^{(\frac{k-1}{2},0)}}{\D_{k+1}^{(\frac{k+1}{2},0)} \D_{k-1}^{(\frac{k-1}{2},1)}}>\frac{\zeta_1^{\frac{k+1}{2}}}{\zeta_K^{\frac{k-1}{2}}}
\end{equation}
holds uniformly in $t$ (we recall that the coefficients $b_j$ depend on $t$). Thus 
if we impose 
\begin{equation*}
\frac{1}{m_{(k+1)'}m_{k'}}<\frac{\zeta_1^{\frac{k+1}{2}}}{\zeta_K^{\frac{k-1}{2}}}\text { for all odd }k, \qquad   k\leq 2K-1,  
\end{equation*} 
then equations \eqref{eq:order1} hold automatically.  

Now we turn to the second inequality, namely \eqref{eq:order2bis}, which is needed whenever $K\geq 2$. 
It is convenient to consider a slightly more general expression, namely, 
\begin{equation*} 
\frac{\D_{k+2}^{(l,1)} \D_{k}^{(l,0)}}{\D_{k+2}^{(l+1,0)} \D_{k}^{(l-1,1)}}, \qquad 1\leq l \leq K-1, 
\end{equation*}
for which after using 
\eqref{eq:Dklp} we obtain the inequality
\begin{equation} \label{eq:order2step3}
\frac{\D_{k+2}^{(l,1)} \D_{k}^{(l,0)}}{\D_{k+2}^{(l+1,0)} \D_{k}^{(l-1,1)}}> \frac{\zeta_1^{l}}{\zeta_K^{l-1}}
\frac{\displaystyle\sum_{A,B\in \binom{[1,K]}{l }}\frac{\Delta_A^2 \Delta_B^2 \mathbf{b}_A\mathbf{b}_B}{\Gamma_{[1,k+2],A}\Gamma_{[1,k],B}}}{\displaystyle
\sum_{I\in \binom{[1,K]}{l-1}, J\in \binom{[1,K]} {l+1}}
\frac{\Delta^2_I \Delta^2_J \mathbf{b}_I\mathbf{b}_J}{\Gamma_{[1,k],I} \Gamma_{[1,k+2],J}}}, 
\end{equation}

where, to ease off notation, we temporarily suspended displaying the dependence on $\mathbf{\zeta}, \mathbf{e}$. 

We focus now on rewriting the denominator of the above expression.  First, we note that 
since the cardinality of $J$ exceeds that of $I$ it is always possible to find a unique smallest  index $i$ in $J$ which is not in $I$.  This leads to the map:
\begin{align}\label{eq:Map}
\Phi: \binom{[1,K]}{l-1}&\times \binom{[1,K]}{l+1}&\longrightarrow  &&\binom{[1,K]}{l}\times \binom{[1,K]}{l},\\
&(I,J):&\longmapsto &&(A=I\cup \{i\}, \, B=J\setminus \{i\}), \quad \text{ for all } l, \qquad l\leq K-1.  \notag
\end{align}

For $A=I\cup \{i\}, \, B=J\setminus \{i\}$ we clearly have $A,B \in \binom{[1,K]}{l}$, $\mathbf{b}_I\mathbf{b}_J=\mathbf{b}_A\mathbf{b}_B$, as well as  
\begin{align*}
&\frac{\Delta_I^2\Delta_J^2}{\Gamma_{[1,k],I}\Gamma_{[1,k+2],J}}&\\
=&\frac{\Delta_{\{i\}, B}^2}{\Delta_{\{i\}, I}^2}\cdot\frac{1}{(e_{k+1}+\zeta_i)(e_{k+2}+\zeta_i)}\cdot\frac{\Delta_A^2\Delta_B^2}{\Gamma_{[1,k],A}\Gamma_{[1,k+2],B}}&\\
\leq&\frac{(\zeta_K-\zeta_1)^{2l}}{\textnormal{min}_j(\zeta_{j+1}-\zeta_j)^{2(l-1)}}\cdot\frac{1}{(e_{k+1}+\zeta_1)(e_{k+2}+\zeta_1)}\frac{\Delta_A^2\Delta_B^2}{\Gamma_{[1,k],A}\Gamma_{[1,k+2],B}}
\end{align*}
which implies

\begin{align*}
&{\displaystyle
\sum_{I\in \binom{[1,K]}{l-1}, J\in \binom{[1,K]} {l+1}}
\frac{\Delta^2_I \Delta^2_J \mathbf{b}_I\mathbf{b}_J}{\Gamma_{[1,k],I} \Gamma_{[1,k+2],J}}}&\\
\leq&\frac{(\zeta_K-\zeta_1)^{2l}}{\textnormal{min}_j(\zeta_{j+1}-\zeta_j)^{2(l-1)}}\cdot\frac{1}{(e_{k+1}+\zeta_1)(e_{k+2}+\zeta_1)}&\\
&\qquad \cdot\sum_{\substack{A,B \in\binom{[1,K]}{l}\\
(A,B)\in \textnormal{Image}(\Phi)
}}\#[\Phi^{-1}(A,B)]\frac{\Delta_A^2\Delta_B^2\mathbf{b}_A \mathbf{b}_B}{\Gamma_{[1,k],A}\Gamma_{[1,k+2],B}},&
\end{align*}


where $\#[\Phi^{-1}(A,B)]$ counts the number of pairs $(I,J)$ which are mapped by $\Phi$ into the same $(A,B)$.
However,  by construction, $\#[\Phi^{-1}(A,B)]\leq 1$ for $l=1$, while for $1<l$ 
two distinct pairs $(I_1,J_1)\neq(I_2,J_2)$ are mapped to the same $(A,B)$ if,
for the smallest $i_1\in J_1\setminus I_1$ and  the smallest  $i_2\in J_2\setminus I_2$, there exists $L\in \binom{[1,K]}{l-2}, M\in \binom{[1,K]}{l}$ such that
\begin{align*}
&I_1=L\cup \{i_2\}, &J_1=M\cup\{i_1\}, \\&I_2=L\cup\{i_1\}, &J_2=M\cup\{i_2\},
\end{align*}
in which case $A=L\cup \{i_1\}\cup \{i_2\}, \, B=M$.  Thus $\#[\Phi^{-1}(A,B)]$ is bounded from above
by the number of ways we can select an individual entry from $A$, since 
once $i_1$ is selected so is $i_2$ and $M$,  hence $\#[\Phi^{-1}(A,B)] \leq l$, and

\begin{align}
&{\displaystyle
\sum_{I\in \binom{[1,K]}{l-1}, J\in \binom{[1,K]} {l+1}}
\frac{\Delta^2_I \Delta^2_J \mathbf{b}_I\mathbf{b}_J}{\Gamma_{[1,k],I} \Gamma_{[1,k+2],J}}}\\
\leq&l\, \frac{(\zeta_K-\zeta_1)^{2l}}{ \textnormal{min}_j(\zeta_{j+1}-\zeta_j)^{2(l-1)}}\cdot\frac{1}{(e_{k+1}+\zeta_1)(e_{k+2}+\zeta_1)}\cdot\sum_{\substack{A,B \in\binom{[1,K]}{l}\\
(A,B)\in \textnormal{Image}(\Phi)
}}\frac{\Delta_A^2\Delta_B^2 \mathbf{b}_A \mathbf{b}_B}{\Gamma_{[1,k],A}\Gamma_{[1,k+2],B}}&\nonumber\\
\leq&l\,\frac{(\zeta_K-\zeta_1)^{2l}}{\textnormal{min}_j(\zeta_{j+1}-\zeta_j)^{2(l-1)}}\cdot\frac{1}{(e_{k+1}+\zeta_1)(e_{k+2}+\zeta_1)}\cdot\sum_{\substack{A,B \in\binom{[1,K]}{l}
}}\frac{\Delta_A^2\Delta_B^2 \mathbf{b}_A \mathbf{b}_B}{\Gamma_{[1,k],A}\Gamma_{[1,k+2],B}},&\label{inequl}
\end{align}

which, upon substituting into \eqref{eq:order2step3}, proves the bound
\begin{equation} \label{eq:order2step4}
\begin{split} 
\frac{\D_{k+2}^{(l,1)} \D_{k}^{(l,0)}}{\D_{k+2}^{(l+1,0)} \D_{k}^{(l-1,1)}}> \frac{\zeta_1^{l}}{\zeta_K^{l-1}}
\frac{\textnormal{min}_j(\zeta_{j+1}-\zeta_j)^{2(l-1)}}{l(\zeta_K-\zeta_1)^{2l}}\cdot(e_{k+1}+\zeta_1)(e_{k+2}+\zeta_1)\\=\frac{\zeta_1^{l}}{\zeta_K^{l-1}}
\frac{\textnormal{min}_j(\zeta_{j+1}-\zeta_j)^{2(l-1)}}{l(\zeta_K-\zeta_1)^{2l}}\cdot\frac{(1+m_{(k+1)'}^2\zeta_1)(1+m_{(k+2)'}^2\zeta_1)}{m^2_{(k+1)'}m^2_{(k+2)'}}.   
\end{split}
\end{equation}
Hence, setting $l=\frac{k+1}{2}$, if one takes 
\begin{equation*}
\frac{1}{m_{(k+2)'}m_{(k+1)'}} < \frac{\zeta_1^{\frac{k+1}{2}}}{\zeta_K^{\frac{k-1}{2}}}
\frac{2\, \textnormal{min}_j(\zeta_{j+1}-\zeta_j)^{k-1}}{(k+1)(\zeta_K-\zeta_1)^{k+1}}\cdot\frac{(1+m_{(k+1)'}^2\zeta_1)(1+m_{(k+2)'}^2\zeta_1)}{m^2_{(k+1)'}m^2_{(k+2)'}},   
\end{equation*}
then \eqref{eq:order2bis} and thus \eqref{eq:order2} will hold.  Finally, rewriting 
the last condition as: 
\begin{equation}
\frac{m_{(k+2)'}m_{(k+1)'}}{(1+m_{(k+1)'}^2\zeta_1)(1+m_{(k+2)'}^2\zeta_1)}<
\frac{\zeta_1^{\frac{k+1}{2}}}{\zeta_K^{\frac{k-1}{2}}}
\frac{2\, \textnormal{min}_j(\zeta_{j+1}-\zeta_j)^{k-1}}{(k+1)(\zeta_K-\zeta_1)^{k+1}}, 
\quad \text{ for all odd } k, \qquad k\leq 2K-3
\end{equation}
we obtain the second sufficient condition \eqref{eq:seccond}.  
\end{proof} 

As an example illustrating the global existence of our multipeakon solutions let us consider the case $K=2$ (i.e. $n=4$).  
\begin{example} Let $K=2$, and $ b_1(0)=10,\ b_2(0)=1,\ \zeta_1=0.3,\ \zeta_2=3,\ m_1=8,\ m_2=16,\ m_3=18,\ m_4=13$. It is easy to show that the condition in Theorem \ref{thm:global} is satisfied. Hence the order of $\{x_k, k = 1, 2, 3, 4\}$ will be preserved at all time and one can use the explicit formulae for the 4-peakon solution at all time, resulting in the following sequence of graphs (Figure \ref{fig_4peakon}).
\begin{figure}[h!]
  \centering
  \resizebox{1.1\textwidth}{!}{
  \includegraphics{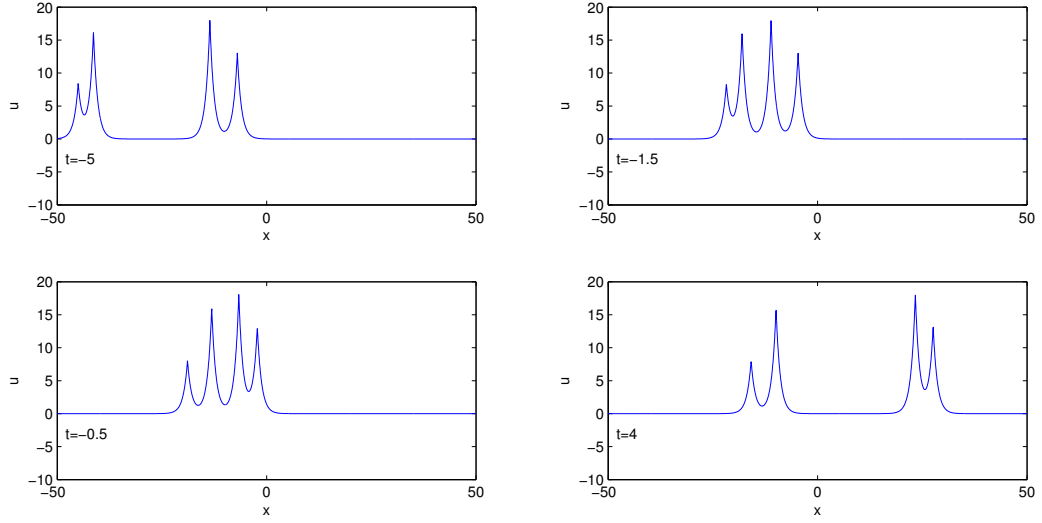}
  }
   \caption{Snapshots of $u(x,t)$ for $n=4$ at times $t=-5,\ -1.5,\ -0.5, \ 4$ in the case of $ b_1(0)=10,\ b_2(0)=1,\ \zeta_1=0.3,\ \zeta_2=3,\ m_1=8,\ m_2=16,\ m_3=18,\ m_4=13.$}
   \label{fig_4peakon}
\end{figure}
\end{example}

\subsection{Large time peakon asymptotics for $n=2K$}
In this short subsection we state the asymptotic behaviour of multipeakon 
solutions for large (positive and negative) time, thus implicitly assuming the 
global existence of solutions as guaranteed for example 
by imposing sufficient conditions of Theorem \ref{thm:global}.  

\begin{theorem}\label{thm:evenass} Suppose the masses $m_j$ satisfy the conditions of Theorem \ref{thm:global}.  Then the asymptotic position of a $k$-th (counting from the right) peakon as $t\rightarrow+\infty$ is given by 
\begin{subequations}
\begin{align}
&x_{k'}=\frac{2t}{\zeta_{\frac{k+1}{2}}}+
\ln\frac{b_{\frac{k+1}{2}}(0)\mathbf{e}_{[1,k-1]} \Delta^2_{[1,\frac{k-1}{2}],\{\frac{k+1}{2}\}}(\mathbf{\zeta})}{m_{k'} \Gamma_{[1,k], \{\frac{k+1}{2}\}}(\mathbf{e}; \mathbf{\zeta}) \mathbf{\zeta}^2_{[1,\frac{k-1}{2}]}}+\mathcal{O}(
e^{-\alpha_k t}), &\textrm{ for some positive } \alpha_k\, \textrm{ and odd } k, \\
&x_{k'}=\frac{2t}{\zeta_{\frac{k}{2}}}+
\ln\frac{b_{\frac{k}{2}}(0)\mathbf{e}_{[1,k-1]} \Delta^2_{[1,\frac{k}{2}-1],\{\frac{k}{2}\}}(\mathbf{\zeta})}{m_{k'} \Gamma_{[1,k-1], \{\frac{k}{2}\}}(\mathbf{e}; \mathbf{\zeta}) \mathbf{\zeta}^2_{[1,\frac{k}{2}-1]}\zeta_{\frac k2}}+\mathcal{O}(
e^{-\alpha_k t}), &\textrm{ for some positive } \alpha_k\, \textrm{ and even } k,\\
&x_{k'}-x_{(k+1)'}=\ln m_{(k+1)'}m_{k'} \zeta_{\frac{k+1}{2}}
+\mathcal{O}(e^{-\alpha_k t}), &\textrm{ for some positive } \alpha_k\, \textrm{ and odd } k.   
\end{align}
\end{subequations}

  Likewise, as $t\rightarrow-\infty$, using the notation of Theorem \ref{thm:Dklm-peakon}, the asymptotic position of the $k$-th peakon is given by
  \begin{subequations}
  \begin{align}
&x_{k'}=\frac{2t}{\zeta_{(\frac{k+1}{2})^*}}+
\ln\frac{b_{(\frac{k+1}{2})^*}(0)\mathbf{e}_{[1,k-1]} \Delta^2_{([1,\frac{k-1}{2}])^*,\{(\frac{k+1}{2})^*\}}(\mathbf{\zeta})}{m_{k'} \Gamma_{[1,k], \{(\frac{k+1}{2})^*\}}(\mathbf{e}; \mathbf{\zeta}) \mathbf{\zeta}^2_{[1,\frac{k-1}{2}]^*}}+\mathcal{O}(
e^{\beta_k t}), \qquad \textrm{ for some positive } \beta_k\, \textrm{ and odd } k, \\
&x_{k'}=\frac{2t}{\zeta_{(\frac{k}{2})^*}}+
\ln\frac{b_{(\frac{k}{2})^*}(0)\mathbf{e}_{[1,k-1]} \Delta^2_{[1,\frac{k}{2}-1]^*,\{(\frac{k}{2})^*\}}(\mathbf{\zeta})}{m_{k'} \Gamma_{[1,k-1], \{(\frac{k}{2})^*\}}(\mathbf{e}; \mathbf{\zeta}) \mathbf{\zeta}^2_{[1,\frac{k}{2}-1]^*}\zeta_{(\frac k2)^*}}+\mathcal{O}(
e^{\beta_k t}), \qquad \textrm{ for some positive } \beta_k\, \textrm{ and even } k,\\
&x_{k'}-x_{(k+1)'}=\ln m_{(k+1)'}m_{k'} \zeta_{(\frac{k+1}{2})^*}
+\mathcal{O}(e^{\beta_k t}), \qquad \textrm{ for some positive } \beta_k\, \textrm{ and odd } k.   
\end{align}
\end{subequations}
\end{theorem}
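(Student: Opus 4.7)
The plan is to combine Theorem \ref{thm:inversex} with the large-time asymptotics of the CSV determinants given in Theorem \ref{thm:Dklm-peakon}. Since the global existence hypothesis ensures that the ordering $x_1<x_2<\dots <x_{2K}$ is preserved for all $t\in\R$, the closed formulas \eqref{eq:detinversexodd} and \eqref{eq:detinversexeven} express each $x_{k'}$ as the logarithm of a ratio of products of the determinants $\D_k^{(l,p)}$. Substituting the asymptotic expansions \eqref{eq:Dklp+} into those ratios reduces the problem to carefully tracking the exponentially dominant $\mathbf{b}_{[1,l]}$ factors together with the algebraic prefactors built from Vandermonde and Cauchy products.

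First I would address the $t\to +\infty$ asymptotics. Since $b_j(t)=b_j(0)e^{2t/\zeta_j}$ with $0<\zeta_1<\dots<\zeta_K$, the subset $I=[1,l]$ of size $l$ maximises $\sum_{j\in I}1/\zeta_j$, so it contributes the dominant exponential as $t\to +\infty$, which is exactly what \eqref{eq:Dklp+} encodes. For odd $k$, substituting the leading terms of $\D_k^{(\frac{k+1}{2},0)}$, $\D_{k-1}^{(\frac{k-1}{2},0)}$, $\D_k^{(\frac{k-1}{2},1)}$ and $\D_{k-1}^{(\frac{k-1}{2},1)}$ into \eqref{eq:detinversexodd} leaves, in the $\mathbf{b}$-sector, only the ratio
\begin{equation*}
\frac{\mathbf{b}_{[1,\frac{k+1}{2}]}\,\mathbf{b}_{[1,\frac{k-1}{2}]}}{\mathbf{b}_{[1,\frac{k-1}{2}]}\,\mathbf{b}_{[1,\frac{k-1}{2}]}}=\frac{\mathbf{b}_{[1,\frac{k+1}{2}]}}{\mathbf{b}_{[1,\frac{k-1}{2}]}}=b_{\frac{k+1}{2}}(t),
\end{equation*}
whose logarithm yields the linear term $2t/\zeta_{(k+1)/2}$. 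The surviving algebraic prefactors, after cancelling the two copies of $|\Delta_{[1,k]}(\mathbf{e})|$, the two copies of $|\Delta_{[1,k-1]}(\mathbf{e})|$, and one factor of the squared Vandermonde $\Delta^2_{[1,(k-1)/2]}(\zeta)$, collapse to the stated constant involving $b_{(k+1)/2}(0)$, $\mathbf{e}_{[1,k-1]}$, the mixed Vandermonde $\Delta^2_{[1,(k-1)/2],\{(k+1)/2\}}(\zeta)$, the Cauchy factor $\Gamma_{[1,k],\{(k+1)/2\}}(\mathbf{e};\zeta)$, and the monomial $\zeta^2_{[1,(k-1)/2]}$. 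The even-$k$ case proceeds identically, the only difference being that the dominant index set has cardinality $k/2$.

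For the pairing formula I would apply the above asymptotics simultaneously to an odd $k$ and to the following even index $k+1$. The $\mathbf{b}$-ratio reduction performed on \eqref{eq:detinversexeven} with $k\mapsto k+1$ yields again $b_{(k+1)/2}(t)$, so in the difference $x_{k'}-x_{(k+1)'}$ the two $2t/\zeta_{(k+1)/2}$ contributions cancel exactly. What remains is a ratio of algebraic prefactors that, after systematic cancellation of Vandermonde, Cauchy and power-of-$\zeta$ terms, collapses to $\ln(m_{(k+1)'}m_{k'}\zeta_{(k+1)/2})$; the error is controlled by the subleading exponential in \eqref{eq:Dklp+}, with rate dictated by $\min_j(1/\zeta_j - 1/\zeta_{j+1})$.

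The $t\to -\infty$ statements follow identically from \eqref{eq:Dklp-}; the only bookkeeping change is that the dominant index set is the reflected set $[1,l]^*=[K-l+1,K]$, since now the largest $\zeta_j$ produces the least negative $2t/\zeta_j$. The main obstacle in the whole argument is not conceptual but combinatorial: one must verify that, after dividing numerators and denominators of the four $\D$-factors, all Vandermonde and Cauchy contributions combine precisely into the compact expressions displayed in the theorem. I would handle this by grouping factors according to their index set and exploiting the multiplicative decompositions $\Delta^2_{[1,l]}(\zeta)=\Delta^2_{[1,l-1]}(\zeta)\cdot\Delta^2_{[1,l-1],\{l\}}(\zeta)$ and the analogous splitting of $\Gamma_{[1,k],[1,l]}(\mathbf{e};\zeta)$, which isolate precisely the new index $(k+1)/2$ (respectively $k/2$) whose contribution survives in the final formula.
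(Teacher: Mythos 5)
Your proposal is correct and follows exactly the route the paper takes: its proof of Theorem \ref{thm:evenass} is precisely the ``straightforward computation'' combining the inverse formulas \eqref{eq:detinversexodd}, \eqref{eq:detinversexeven} with the determinant asymptotics \eqref{eq:Dklp+} and \eqref{eq:Dklp-}, which you have carried out (in more detail than the paper itself, including the correct cancellation of the $\mathbf{b}$, Vandermonde, $\Gamma$ and $\mathbf{e}$ factors and the exact pairing constant $\ln\bigl(m_{(k+1)'}m_{k'}\zeta_{\frac{k+1}{2}}\bigr)$).
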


\begin{proof}
The proof is by a straightforward computation using the formulas 
for positions \eqref{eq:detinversexodd}, \eqref{eq:detinversexeven}, as 
well as asymptotic evaluations of determinants  \eqref{eq:Dklp+} and 
\eqref{eq:Dklp-}.  
\end{proof}

\begin{remark}
In the closing remark for this section we note that multipeakons 
of the mCH exhibit \textit{Toda-like sorting properties} of asymptotic speeds, 
a common occurrence among known to us peakon systems.  However, 
it is apparent that the multipeakons of the mCH show also features
known to occur in multi-component cases, for example in  
the Geng-Xue equation\cite{lundmark2014inverse}, a two-component modified Camassa-Holm equation \cite{chang-hu-szmigielski}, but also in the Novikov equation \cite{kardell-PhD, kardell-lundmark}, for which one observes an \textit{ asymptotic pairing of 
peakons}, undoubtedly forced by a shortage of eigenvalues whose total number is 
$K$,  versus $2K$ positions in need of asymptotic speeds.  This feature does not 
show up in the CH equation.  
\end{remark}

\section{Multipeakons for $n=2K+1$}\label{sec:oddpeakons}
The main source of difference with the even case is of course the presence of 
the positive shift $c$ which impacts the evaluations of 
the CSV determinants as illustrated by Theorem \ref{thm:detCSV}, in particular 
formula \eqref{eq:detCSV2}.  We will present the material 
in this section in a way parallel to the previous section on the even case.  
\subsection{Closed formulae for $n=2K+1$}

Again, we assume that $x_1(0)<x_2(0)<\cdots<x_{2K+1}(0)$ then this condition will hold at least in a small interval containing $t=0$.  Thus Theorem \ref{thm:inversex} gives us the following \textit{local existence} result.

\begin{theorem}\label{thm:peakon_odd}
Assuming the notation of Theorem \ref{thm:inversex}, the mCH equation \eqref{eq:m1CH} with the regularization 
of the singular term $u_x^2 m$ given by $\avg{u_x^2}m$ admits the multipeakon solution
\begin{equation}\label{eq:umultipeakoneven}
u(x,t)=\sum_{k=1}^{2K+1}m_{k'}(t)\exp(-|x-x_{k'}(t)|),
\end{equation}
where $x_{k'}$ are given by equations \eqref{eq:detinversexodd} and \eqref{eq:detinversexeven}, 
with the peakon spectral measure 
\begin{equation}\label{eq:peakon sm(odd)}
d\mu=\sum_{j=1}^{K} b_j(t) \delta_{\zeta_j}, 
\end{equation} 
$b_j(t)=b_j(0)e^{\frac{2t}{\zeta_j}}, \, 0<b_j(0)$, ordered eigenvalues $0<\zeta_1<\cdots<\zeta_K$ and $c(t)=c(0)>0$ in \eqref{eq:calD}.
\end{theorem}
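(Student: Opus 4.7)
The proof will parallel that of Theorem \ref{thm:peakon_even}, the essential new feature being that we must verify both the time invariance of the shift constant $c$ and the exponential evolution of the residues $b_j$ in the Weyl function $W(z)$.

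First I would examine the time evolution of $W(z)=q_n(z)/p_n(z)$ using Lemma \ref{lem:t-evolution of qp}: since $\dot p_n=0$ and $\dot q_n=\tfrac{2}{z}q_n-\tfrac{2L}{z}p_n$ with $L=\sum_{j=1}^n h_j$, division yields
\begin{equation*}
\dot W=\frac{2}{z}W-\frac{2L}{z}.
\end{equation*}
By Corollary \ref{cor:spectrum} and Theorem \ref{thm:W}, in the odd case we have the partial-fraction expansion $W(z)=c+\sum_{j=1}^{K}\frac{b_j}{\zeta_j-z}$ with $c>0$, and the spectrum $\{\zeta_j\}$ is time-invariant. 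Substituting this expansion into the ODE for $W$ and using the identity $\tfrac{1}{z(\zeta_j-z)}=\tfrac{1}{\zeta_j}\bigl(\tfrac{1}{z}+\tfrac{1}{\zeta_j-z}\bigr)$ decomposes the right-hand side into a constant term, a simple pole at $z=0$, and simple poles at $z=\zeta_j$.

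Matching singularities and constant parts on both sides gives three conditions. Comparing residues at $z=\zeta_j$ yields $\dot b_j=\tfrac{2}{\zeta_j}b_j$, hence $b_j(t)=b_j(0)e^{2t/\zeta_j}$. Comparing constant parts (the value at $z=\infty$) gives $\dot c=0$, so that $c(t)=c(0)>0$ is preserved; this is the crucial new step absent in the even case, where $c\equiv 0$ holds automatically. The residue at $z=0$ produces the compatibility relation $L=c+\sum_j b_j/\zeta_j=W(0)$, which is consistent since $W(0)=q_n(0)/p_n(0)=\sum_j h_j$ by direct inspection of the recursion \eqref{dstringIVP}.

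Having established the time evolution of the spectral data $\{c,\,\mathbf{b}(t),\,\boldsymbol{\zeta}\}$, I would invoke Theorem \ref{thm:minversex} to reconstruct the positions $x_{k'}(t)$ via the determinantal formulas \eqref{eq:detinversexodd}, \eqref{eq:detinversexeven}, now with $c>0$ entering the CSV determinants of \eqref{eq:calD} (cf.\ the second branch of Theorem \ref{thm:detCSV}). The resulting $u(x,t)=\sum_{k=1}^{2K+1}m_{k'}(t)e^{-|x-x_{k'}(t)|}$ solves \eqref{eq:m1CH} as a distribution by Appendix \ref{lax_mch}, since the Lax pair was constructed precisely so that its compatibility yields \eqref{mCH_ode} with the regularization $\avg{u_x^2}m$. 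No obstacle is expected beyond the careful bookkeeping of the $c$-dependent terms; the only genuinely new input is the verification $\dot c=0$, which is routine but essential for the odd case to close.
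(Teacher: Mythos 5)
Your proposal is correct and follows essentially the same route as the paper's proof: derive $\dot W=\frac{2}{z}W-\frac{2L}{z}$ from Lemma \ref{lem:t-evolution of qp}, read off $\dot b_j=\frac{2}{\zeta_j}b_j$ and $\dot c=0$ from the partial-fraction form of $W$ guaranteed by Corollary \ref{cor:spectrum}, and then feed the evolved spectral data into the inverse formulas, with the distributional Lax pair of Appendix \ref{lax_mch} justifying that the resulting $u$ solves \eqref{eq:m1CH}. Your explicit pole-matching (including the consistency check $L=W(0)$) only spells out details the paper leaves implicit.
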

\begin{proof} 
The time evolution of the spectral measure is the same 
as for the even case.  To see this as well as that $c$ is a constant 
we  recall that the Weyl function $W(z)$ is defined in \eqref{eq:defWeyl}, 
regardless of whether $n$ is even or odd, thus $W(z)$ 
undergoes the time evolution  obtained earlier in the proof of Theorem \ref{thm:peakon_even}, namely, 
\begin{equation*}
\dot W=\frac 2z W-\frac{2L}{z}, 
\end{equation*}
which, in turn, implies
$\dot b_j=\frac{2}{\zeta_j} b_j, 1\leq j\leq K$ as well as $\dot c=0$ by virtue of 
Corollary \ref{cor:spectrum}.  The rest of the proof is the same as for the even case.  
\end{proof}

We examine now the explicit formulas for the evaluation of 
CSV determinants presented in Theorem \ref{thm:detCSV} (see equation \eqref{eq:calD} for notation), with due care to two facts: $n=2K+1$ and $c>0$.  
The proof follows the same steps as in Theorem \ref{thm:Dklm-peakon} and 
we omit it.  

\begin{theorem} \label{thm:Dklp-peakon(odd)}  Let $n=2K+1, \, 1\leq k\leq 2K+1, \, 0\leq l\leq K+1, \, 0\leq p, \,  p+l-1\leq k-l, $ and let the peakon spectral measure be given by \eqref{eq:peakon sm(odd)} and a shift $c>0$.  
Then 
\begin{enumerate} 
\item 
\begin{subequations}
\begin{align} 
&\D_k^{(l,p)}=\abs{\Delta_{[1,k]}(\mathbf{e})}\sum_{I\in\binom{[1,K]}{l} }
\frac{\Delta^2_I(\mathbf{\zeta})\mathbf{b}_{I} \mathbf{\zeta}^p_I}{\Gamma_{[1,k],I}(\mathbf{e};\mathbf{\zeta})} \quad \text{ if} \quad p+l-1<k-l, \quad k\leq 2K+1; \label{eq:Dklpodd1}\\
&\D_k^{(l,p)}=\abs{\Delta_{[1,k]}(\mathbf{e})}\Big(\sum_{I\in\binom{[1,K]}{l} }
\frac{\Delta^2_I(\mathbf{\zeta})\mathbf{b}_{I} \mathbf{\zeta}^p_I}{\Gamma_{[1,k],I}(\mathbf{e};\mathbf{\zeta})} +c\sum_{I\in\binom{[1,K]}{l-1} }
\frac{\Delta^2_I(\mathbf{\zeta})\mathbf{b}_{I} \mathbf{\zeta}^p_I}{\Gamma_{[1,k],I}(\mathbf{e};\mathbf{\zeta})}\Big)
\quad \text{ if} \quad p+l-1=k-l, \quad k\leq 2K+1; \label{eq:Dklpodd2}
\end{align} 
\end{subequations}
with the proviso that the first term inside the bracket is set to zero if $l=K+1$, which only happens when 
$k=2K+1, p=0$.  
\item in the asymptotic region $t\rightarrow + \infty$
\begin{subequations}
\begin{align}
 &\D_k^{(l,p)}=\abs{\Delta_{[1,k]}(\mathbf{e})}
\frac{\Delta^2_{[1,l]} (\mathbf{\zeta})\mathbf{b}_{[1,l]} \mathbf{\zeta}^p_{[1,l]}}{\Gamma_{[1,k],[1,l]}(\mathbf{e};\mathbf{\zeta})}\Big[1+\mathcal{O}(e^{-\alpha t})\Big], \quad 0< \alpha, \quad \text{ if } \quad 0\leq l\leq K; \label{eq:Dklp+odd}\\
 &\D_{2K+1}^{(K+1,0)}=c\abs{\Delta_{[1,2K+1]}(\mathbf{e})}
\frac{\Delta^2_{[1,K]} (\mathbf{\zeta})\mathbf{b}_{[1,K]} }{\Gamma_{[1,2K+1],[1,K]}(\mathbf{e};\mathbf{\zeta})}, \quad  \qquad \text{ if } \quad k=2K+1, l=K+1, p=0.  \label{eq:Dklp+odd-c}
\end{align}
\end{subequations}
\item in the asymptotic region $t \rightarrow -\infty$
\begin{subequations}
\begin{align}
&\D_k^{(l,p)}=\abs{\Delta_{[1,k]}(\mathbf{e})}
\frac{\Delta^2_{[1,l]^*} (\mathbf{\zeta})\mathbf{b}_{[1,l]^*} \mathbf{\zeta}^p_{[1,l]^*}}{\Gamma_{[1,k],[1,l]^*}(\mathbf{e};\mathbf{\zeta})}\Big[1+\mathcal{O}(e^{\beta t})\Big]
, \quad 0<\beta, \quad \text{ if} \quad p+l-1<k-l, \quad k\leq 2K+1; \label{eq:Dklp-odd1}\\
&\D_k^{(l,p)}=c\abs{\Delta_{[1,k]}(\mathbf{e})}
\frac{\Delta^2_{[1,l-1]^*} (\mathbf{\zeta})\mathbf{b}_{[1,l-1]^*} \mathbf{\zeta}^p_{[1,l-1]^*}}{\Gamma_{[1,k],[1,l-1]^*}(\mathbf{e};\mathbf{\zeta})}\Big[1+\mathcal{O}(e^{\beta t})\Big]
, \quad 0<\beta, \quad  \quad \notag\\
&\hspace{7cm} \text{ if}  \qquad p+l-1=k-l,\quad  k<2K+1;\label{eq:Dklp-odd2}\\
&\D_{2K+1}^{(K+1,0)}=c\abs{\Delta_{[1,2K+1]}(\mathbf{e})}
\frac{\Delta^2_{[1,K]} (\mathbf{\zeta})\mathbf{b}_{[1,K]} }{\Gamma_{[1,2K+1],[1,K]}(\mathbf{e};\mathbf{\zeta})}, \quad  \qquad \text{ if } \quad k=2K+1, l=K+1, p=0, \label{eq:Dklp-odd3}
\end{align}
\end{subequations}
where, as before, $[1,l]^*=[l^*=K-l+1,1^*=K]$.  
\end{enumerate} 
\end{theorem}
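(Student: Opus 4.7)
The plan is to mirror the proof of Theorem \ref{thm:Dklm-peakon}, while carefully tracking the new contribution coming from the shift $c>0$, which via Theorem \ref{thm:detCSV} activates the second integral in \eqref{eq:detCSV2}.

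For the closed formulas, I would specialize Theorem \ref{thm:detCSV} to the discrete measure $d\mu=\sum_{j=1}^K b_j(t)\delta_{\zeta_j}$, noting that $d\mu^p=\sum_j b_j\zeta_j^p\delta_{\zeta_j}$. The simplex integral
\[
\int_{0<x_1<\dots<x_l}\frac{\Delta_{[1,l]}(\mathbf{x})^2}{\Gamma_{[1,k],[1,l]}(\mathbf{e};\mathbf{x})}\,d\mu^p(x_1)\cdots d\mu^p(x_l)
\]
collapses to $\sum_{I\in\binom{[1,K]}{l}} \Delta_I(\mathbf{\zeta})^2 \mathbf{b}_I \mathbf{\zeta}_I^p/\Gamma_{[1,k],I}(\mathbf{e};\mathbf{\zeta})$ upon writing each integration variable as $\zeta_{i_j}$ with $i_1<\dots<i_l$. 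When $p+l-1<k-l$ only this sum is present, giving \eqref{eq:Dklpodd1}; when $p+l-1=k-l$ the analogous $(l-1)$-fold integral produces the additional sum scaled by $c$, yielding \eqref{eq:Dklpodd2}. The exceptional entry $(k,l,p)=(2K+1,K+1,0)$ is a degeneration: $\binom{[1,K]}{K+1}=\emptyset$ forces the first sum to vanish, while the second sum contains the unique term $I=[1,K]$, producing the exact formula \eqref{eq:Dklp+odd-c}.

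For the asymptotics as $t\to+\infty$, substitute $\mathbf{b}_I(t)=\prod_{j\in I} b_j(0)e^{2t/\zeta_j}$ and, within any sum over $l$-subsets of $[1,K]$, identify the dominant multi-index as the one maximizing $\sum_{j\in I}1/\zeta_j$; the ordering $\zeta_1<\dots<\zeta_K$ singles out $I=[1,l]$, producing \eqref{eq:Dklp+odd}. In the borderline case $p+l-1=k-l$ with $l\leq K$, the $c$-correction has at most $l-1$ exponentials, so the ratio of its leading term to that of the main sum behaves like $\mathcal{O}(e^{-2t/\zeta_l})$ and is absorbed into the error. The argument as $t\to-\infty$ proceeds analogously when $p+l-1<k-l$: the dominant index within $\binom{[1,K]}{l}$ now minimizes $\sum 1/\zeta_j$, namely $I=[1,l]^*$, giving \eqref{eq:Dklp-odd1}.

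The main subtle point, and the only real difference from the even case, is the behaviour as $t\to-\infty$ when $p+l-1=k-l$ and $l\leq K$. Here the ratio of the leading $c$-term (indexed by $[1,l-1]^*$) to the leading main-sum term (indexed by $[1,l]^*$) is, up to bounded rational factors, $\sim c/\bigl(b_{l^*}(0)\, e^{2t/\zeta_{l^*}}\bigr)$ with $l^*=K-l+1$; since $e^{2t/\zeta_{l^*}}\to 0$ as $t\to-\infty$, this ratio \emph{diverges}, so the $c$-correction now dominates. The asymptotic leading term therefore shifts from $[1,l]^*$ to $[1,l-1]^*$ and acquires an overall factor of $c$, which is exactly \eqref{eq:Dklp-odd2}. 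This sign-reversal — the $c$-term being subdominant as $t\to+\infty$ but dominant as $t\to-\infty$ because losing an exponential factor is advantageous when all factors decay — is the one structural novelty, and once isolated the required error estimates follow by the same geometric-series bookkeeping as in the even case.
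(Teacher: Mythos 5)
Your proposal is correct and takes essentially the same approach as the paper: the paper omits this proof outright, stating that it follows the same steps as Theorem \ref{thm:Dklm-peakon}, namely specializing Theorem \ref{thm:detCSV} to the discrete spectral measure and then extracting the dominant $l$-tuple of eigenvalues as $t\to\pm\infty$. The one genuinely new point in the odd case --- that the $c$-correction is absorbed into the error as $t\to+\infty$ but dominates as $t\to-\infty$, producing the overall factor $c$ and the index shift to $[1,l-1]^*$ in \eqref{eq:Dklp-odd2} --- is exactly the detail you isolate, and your treatment of it (and of the degenerate entry $k=2K+1$, $l=K+1$, $p=0$, where the formula is exact) is right.
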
 
 For the future use, namely in the forthcoming proof of Theorem \ref{thm:global(odd)}, we will formulate an elementary 
 corollary aimed at comparing formulae with $c>$ and $c=0$.  
 For the duration of this corollary we explicitly display the dependence on $c$.  
 \begin{corollary} \label{cor:Dklp-c-dependence}
 Let $n=2K+1,\,  1\leq k\leq 2K+1, \,\, 0\leq l\leq K+1, \, 0\leq p, \,  p+l-1\leq k-l$,  and let the peakon spectral measure be given by \eqref{eq:peakon sm(odd)} and a shift $c>0$.  
Then 
\begin{subequations}
\begin{align} 
&\D_k^{(l,p)}(c)=\D_{k}^{(l,p)}(0), \quad &\text{ if} \quad p+l-1<k-l, \quad k\leq 2K+1; \label{eq:Dklpodd1-c-dep}\\
&\D_k^{(l,p)}(c)=\D_{k}^{(l,p)}(0)+c\D_k^{(l-1,p)}(0), 
\quad &\text{ if} \quad p+l-1=k-l, \quad k\leq 2K+1; \label{eq:Dklpodd2-c-dep}
\end{align} 
\end{subequations}
with the convention that the first term in \eqref{eq:Dklpodd2-c-dep} is set to zero if $l=K+1, k=2K+1,p=0$.      
\end{corollary}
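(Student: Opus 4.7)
The plan is to derive both identities by direct side-by-side matching against the formulas in Theorem \ref{thm:Dklp-peakon(odd)}, displaying how the parameter $c$ enters each branch. First, in the non-degenerate range $p+l-1<k-l$ I would invoke \eqref{eq:Dklpodd1}: the explicit expression for $\D_k^{(l,p)}$ is manifestly independent of $c$ (the single sum is indexed by $\binom{[1,K]}{l}$ only), so $\D_k^{(l,p)}(c)=\D_k^{(l,p)}(0)$ is immediate.

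For the degenerate case $p+l-1=k-l$, I would start from \eqref{eq:Dklpodd2}, which writes $\D_k^{(l,p)}(c)$ as $\abs{\Delta_{[1,k]}(\mathbf{e})}$ times the sum of two pieces. The $c$-independent piece is by inspection $\D_k^{(l,p)}(0)$ (obtained by setting $c=0$ in \eqref{eq:Dklpodd2} itself), while the piece multiplying $c$ has exactly the shape of the right-hand side of \eqref{eq:Dklpodd1} with $l$ replaced by $l-1$.

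The one bookkeeping step I would spell out carefully is verifying that this second piece is legitimately $\D_k^{(l-1,p)}(0)$, which requires the parameters $(l-1,p)$ to fall under the first (strict-inequality) branch of Theorem \ref{thm:Dklp-peakon(odd)}, so that \eqref{eq:Dklpodd1} is the applicable formula. From the hypothesis $p+l-1=k-l$ one gets $p+(l-1)-1=k-l-1<k-l+1=k-(l-1)$, so the strict inequality holds for the shifted pair and the identification is legitimate; substituting gives exactly $\D_k^{(l,p)}(c)=\D_k^{(l,p)}(0)+c\,\D_k^{(l-1,p)}(0)$.

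The only remaining subtlety concerns the boundary case $(k,l,p)=(2K+1,K+1,0)$, where the proviso of \eqref{eq:Dklpodd2} forces the first term to vanish; this is automatically consistent with the analogous convention stated in the corollary, since the indexing set $\binom{[1,K]}{K+1}$ is empty and $\D_k^{(l,p)}(0)$ is defined to be zero on that range. I do not foresee any substantial obstacle: every step is an explicit matching of previously established formulas, which is why the authors bill this as an elementary corollary rather than a theorem. The main thing to get right is the indexing/degree bookkeeping, which is why I would write the degree chain $p+(l-1)-1<k-(l-1)$ out in full.
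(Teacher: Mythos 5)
Your proposal is correct and follows essentially the same route as the paper's own (very brief) proof: both reduce the claim to a term-by-term comparison of the explicit evaluation formulas \eqref{eq:Dklpodd1}--\eqref{eq:Dklpodd2} with the $c=0$ evaluation coming from Theorem \ref{thm:detCSV} (equivalently \eqref{eq:Dklp}). Your explicit check that $(l-1,p)$ satisfies the strict inequality $p+(l-1)-1<k-(l-1)$, and your handling of the empty index set $\binom{[1,K]}{K+1}$ in the boundary case, are exactly the bookkeeping details the paper leaves implicit.
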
 
 \begin{proof} 
 It suffices to compare formulas \eqref{eq:Dklpodd1}
 and \eqref{eq:Dklpodd2} with \eqref{eq:Dklp}, 
 for $k\leq 2K$, while the case $k=2K+1$ can be 
 directly obtained from Theorem \ref{thm:detCSV} and the definition 
 of $\D_k^{(l,p)}(c)$ (see equation \eqref{eq:calD}).  
 \end{proof}
Finally, by use of the formulas \eqref{eq:detinversexodd}, 
\eqref{eq:detinversexeven}, Theorems \ref{thm:peakon_odd} and 
\ref{thm:Dklp-peakon(odd)}, in particular the formulas \eqref{eq:Dklpodd1} and \eqref{eq:Dklpodd2}, we get exact formulae for (local) 1,3-peakon solutions with 
initial positions satisfying $x_1(0)<x_2(0)<\cdots <x_{2K+1}(0)$.
\begin{example}[1-peakon solution; $K=0$ (trivial, does not require inverse 
spectral machinery)]
  \[
  x_1=\ln\left(\frac{c}{m_1}\right).
  \]
\end{example}
We note that by shifting this example covers the $1$-peakon solution 
discussed in Theorem 6.1 in \cite{gui2013wave}.  More generally, by shifting 
we cover all peakon solutions discussed therein for which masses are 
taken to be identical (see Remark \ref{re:olver} in the Introduction).  

\begin{example}[3-peakon solution; $K=1$]
  \begin{align*}
     &x_1=\ln\left(\frac{b_1c}{\zeta_1m_1\left(b_1\zeta_1m_2^2m_3^2+c(1+\zeta_1m_2^2)(1+\zeta_1m_3^2)\right)}\right),&\\
     &x_2=\ln\left(\frac{b_1m_2}{b_1\zeta_1m_2^2m_3^2+c(1+\zeta_1m_2^2)(1+\zeta_1m_3^2)}\left(\frac{b_1m_3^2}{1+\zeta_1m_3^2}+c\right)\right),&\\
     &x_3=\ln\left(\frac{1}{m_3}\left(\frac{b_1m_3^2}{1+\zeta_1m_3^2}+c\right)\right).
  \end{align*}

\end{example}

\begin{example}[a general formula for the last position $x_{1'}=x_{2K+1}$]

Recalling that $\hat \mu_c$ denotes the shifted Stieltjes transform of the spectral 
measure $\mu$ (introduced in definition \ref{def:CSV}) and using \eqref{eq:detinversexodd} we obtain: 
  \begin{equation*}
x_{1'}=x_{2K+1}=\ln \frac{\hat \mu_c(e_1)}{m_{1'}}=\ln \frac{c+m_{2K+1}^2 \sum_{i=1}^K 
\frac{b_i}{1+m_{2K+1}^2 \zeta_i}}{m_{2K+1}}. 
\end{equation*}
\end{example}

\subsection{Global existence for $n=2K+1$}

 This section presents the main results regarding the global existence 
 of peakon solutions when $n=2K+1$.  
\begin{theorem}\label{thm:global(odd)}
  Given arbitrary spectral data $$\{b_j>0, \, 0<\zeta_1<\zeta_2<\cdots< \zeta_K,\,  
  c>0: 1\leq j \leq K \}, $$ suppose the masses $m_k$ satisfy
\begin{subequations}
\begin{align}
&\frac{1}{m_{(k+1)' }m_{k'}}<\frac{\zeta_1^{\frac{k+1}{2}}}{\zeta_K^{\frac{k-1}{2}}}\textrm{min}\{1, \hat \beta\}, \qquad &\text { for all odd }k, \qquad   1\leq k\leq 2K-1, \\
&\frac{1}{m_{(k+2)'}m_{(k+1)}}<\frac{\zeta_1^{\frac{k+1}{2}}}{\zeta_K^{\frac{k-1}{2}}}\textrm{min}\{1, \hat \beta_1\}, 
\qquad &\text{ for all odd } k, \qquad 1\leq k\leq 2K-1,  
\end{align}
\end{subequations}
where 
\begin{align*}
\hat \beta&=\begin{cases} \frac{2\zeta_K\, \textrm{min}_j(\zeta_{j+1}-\zeta_j)^{k-3}}{\zeta_1(k-1)(\zeta_K-\zeta_1)^{k-1}}\frac{(1+m^2_{(k)'}\zeta_1)(1+m^2_{(k+1)'}\zeta_1)}{m^2_{(k)'} m^2_{(k+1)'}},  \quad &\text{ for all odd } k, \qquad 3\leq k\leq 2K-1,  \\
+\infty, \qquad &\text{ for } k=1, \end{cases}\\\\
\hat \beta_1&=\frac{2\, \textrm{min}_j(\zeta_{j+1}-\zeta_j)^{k-1}}{(k+1)(\zeta_K-\zeta_1)^{k+1}}\frac{(1+m^2_{(k+1)'}\zeta_1)(1+m^2_{(k+2)'}\zeta_1)}{m^2_{(k+1)'} m^2_{(k+2)'}}.  
\end{align*}
Then the positions obtained from inverse formulas \eqref{eq:detinversexodd}, 
\eqref{eq:detinversexeven} are ordered $x_1<x_2<\cdots<x_{2K+1}$ and the multipeakon solutions \eqref{eq:umultipeakoneven} exist for arbitrary $t\in \R$.
\end{theorem}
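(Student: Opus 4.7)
The plan is to mirror the architecture of the proof of Theorem \ref{thm:global}, adapting it to accommodate the positive shift $c>0$ which is the only structural difference in the odd-$n$ case. First I would restate the full ordering $x_1<x_2<\cdots<x_{2K+1}$ as the two families
\[
x_{(k+1)'} < x_{k'}\ \ (k\text{ odd},\ 1\le k\le 2K-1), \qquad
x_{(k+2)'} < x_{(k+1)'}\ \ (k\text{ odd},\ 1\le k\le 2K-1),
\]
which together cover all $2K$ adjacent pairs, and convert each inequality via the inverse formulae \eqref{eq:detinversexodd}--\eqref{eq:detinversexeven} into a lower bound on a ratio of $\mathcal{D}_k^{(l,p)}(c)$ determinants of exactly the same shape as the ratios \eqref{eq:order1bis}--\eqref{eq:order2bis} of the even case.

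The key new input is Corollary \ref{cor:Dklp-c-dependence}: every factor satisfying $p+l-1=k-l$ splits as $\mathcal{D}_k^{(l,p)}(c)=\mathcal{D}_k^{(l,p)}(0)+c\,\mathcal{D}_k^{(l-1,p)}(0)$. A direct index check shows that in each of the two ratios exactly one numerator factor and one denominator factor are affected, so after cross-multiplying each ratio takes the form $(P_0+cP_1)/(Q_0+cQ_1)$ with $P_0,P_1,Q_0,Q_1>0$. Since for positive reals $(A+B)/(C+D)\ge\min(A/C,\,B/D)$, it suffices to lower-bound $P_0/Q_0$ and $P_1/Q_1$ separately and uniformly in $t$. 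The term $P_0/Q_0$ is exactly the ratio controlled in the even case, so \eqref{eq:order1step3} and \eqref{eq:order2step4} apply verbatim and yield $P_0/Q_0\ge \zeta_1^{(k+1)/2}/\zeta_K^{(k-1)/2}$ for the first family and $P_0/Q_0\ge (\zeta_1^{(k+1)/2}/\zeta_K^{(k-1)/2})\hat\beta_1$ for the second family. The correction ratio $P_1/Q_1$ is structurally an even-case ratio of the same type but with the indices $(k,l)$ shifted by one, so it can be bounded by exactly the same $\Phi$-map-with-counting argument (map \eqref{eq:Map} and estimate \eqref{inequl}), now applied to the shifted indices; this produces $P_1/Q_1\ge(\zeta_1^{(k+1)/2}/\zeta_K^{(k-1)/2})\hat\beta$ for the first family and $P_1/Q_1\ge \zeta_1^{(k+1)/2}/\zeta_K^{(k-1)/2}$ for the second. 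Taking the minimum in each family produces precisely the factors $\min\{1,\hat\beta\}$ and $\min\{1,\hat\beta_1\}$ stated in the theorem, the ``$1$'' always encoding the piece that reduces to the even case and $\hat\beta,\hat\beta_1$ encoding the $c$-correction piece.

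The main obstacle is bookkeeping: every numerator and denominator must be tested against $p+l-1=k-l$ to decide whether it splits, and the resulting cross terms must all be estimated uniformly in $t$ (the time dependence enters only through the positive exponentials $\mathbf{b}_I$, which the $\Phi$-argument handles as in the even case). Two degenerate cases need individual attention: at $k=1$ in the first family the would-be factor $\mathcal{D}_0^{(-1,1)}(0)$ is absent, which both kills $P_1/Q_1$ and is exactly why $\hat\beta$ is declared $+\infty$ in the theorem; at $k=2K-1$ in the second family the factor $\mathcal{D}_{2K+1}^{(K+1,0)}(c)=c\,\mathcal{D}_{2K+1}^{(K,0)}(0)$ has no $c=0$ part, so one simply absorbs this into the $P_1/Q_1$ estimate. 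Once these endpoint cases are checked, the stated sufficient conditions on the masses force each ratio inequality to hold for every $t\in\R$, preventing any collision and extending the local peakon solution of Theorem \ref{thm:peakon_odd} to a global one, as claimed.
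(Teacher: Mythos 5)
Your proposal is correct and takes essentially the same approach as the paper's own proof: the same reduction of the ordering to two families of determinant-ratio inequalities, the same use of Corollary \ref{cor:Dklp-c-dependence} to split each ratio into an even-case piece and a $c$-correction piece, and the same combination of the mediant-type bound $\frac{A+B}{C+D}\geq \min\bigl\{A/C,\,B/D\bigr\}$ with the even-case estimates \eqref{eq:order1step3} and \eqref{eq:order2step4}. Your bookkeeping of which factors split, which even-case bound controls which piece (yielding $\min\{1,\hat\beta\}$ and $\min\{1,\hat\beta_1\}$), and the endpoint cases $k=1$ and $k=2K-1$ all agree with the paper.
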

\begin{proof}
The solutions described in Theorem \ref{thm:inversex} are valid peakon solutions as long as $x_1<x_2<\cdots<x_{2K+1}$ holds.  We write these conditions as:

\begin{subequations}
\begin{align}
x_{(k+1)'}<x_{k'}, && \text { for all odd }k, \qquad  1\leq k\leq 2K-1, \label{eq:order1-odd}\\
x_{(k+2)'}<x_{(k+1)'}, && \text{ for all odd }k, \qquad  1\leq k\leq 2K-1, \label{eq:order2-odd}
\end{align}
\end{subequations}

and use equations \eqref{eq:detinversexodd}, \eqref{eq:detinversexeven} to 
obtain equivalent conditions
\begin{subequations}
\begin{align}
\frac{1}{m_{(k+1)'}m_{k'}}<\frac{\D_{k+1}^{(\frac{k+1}{2},1)}(c) \D_{k-1}^{(\frac{k-1}{2},0)}(c)}{\D_{k+1}^{(\frac{k+1}{2},0)} (c)\D_{k-1}^{(\frac{k-1}{2},1)}(c)},&& \text { for all odd }k, \qquad   1\leq k\leq 2K-1, \label{eq:order1bis-odd} \\
\frac{1}{m_{(k+2)'}m_{(k+1)'}}<\frac{\D_{k+2}^{(\frac{k+1}{2},1)}(c) \D_{k}^{(\frac{k+1}{2},0)}(c)}{\D_{k+2}^{(\frac{k+3}{2},0)}(c) \D_{k}^{(\frac{k-1}{2},1)}(c)},&& \text{ for all odd }k, \qquad 1\leq k\leq 2K-1,   \label{eq:order2bis-odd} 
\end{align} 
\end{subequations}
 displaying the dependence on $c$ in anticipation of the use of Corollary \ref{cor:Dklp-c-dependence}.  
Note that equations \eqref{eq:Dklpodd1-c-dep} and \eqref{eq:Dklpodd2-c-dep} 
suggest writing 
\begin{equation*}
\begin{split}
\frac{\D_{k+1}^{(\frac{k+1}{2},1)}(c) \D_{k-1}^{(\frac{k-1}{2},0)}(c)}{\D_{k+1}^{(\frac{k+1}{2},0)}(c) \D_{k-1}^{(\frac{k-1}{2},1)}(c)}=\frac{ \big(\D_{k+1}^{(\frac{k+1}{2}, 1)}(0)+c\D_{k+1}^{(\frac{k-1}{2}, 1)}(0)\big) \D_{k-1}^{(\frac{k-1}{2},0)}(0)}{
\D_{k+1}^{(\frac{k+1}{2},0)}(0)\big(\D_{k-1}^{(\frac{k-1}{2},1)}(0)+c\D_{k-1}^{(\frac{k-3}{2},1)}(0)\big)}\\=\frac{ \D_{k+1}^{(\frac{k+1}{2}, 1)}(0)\D_{k-1}^{(\frac{k-1}{2},0)}(0)+c\D_{k+1}^{(\frac{k-1}{2}, 1)}(0)\D_{k-1}^{(\frac{k-1}{2},0)}(0)}{
\D_{k+1}^{(\frac{k+1}{2},0)}(0)\D_{k-1}^{(\frac{k-1}{2},1)}(0)+c\D_{k+1}^{(\frac{k+1}{2},0)}(0)\D_{k-1}^{(\frac{k-3}{2},1)}(0)}\stackrel{def}{=} \frac {\mathcal{A}_1+\mathcal{B}_1}{\mathcal{A}_2+\mathcal{B}_2}, 
\end{split}
\end{equation*}
with the proviso that $\mathcal{B}_2=0$ for $k=1$.  Examining the ratios $\frac{\mathcal{A}_1}{\mathcal{A}_2}, \frac{\mathcal{B}_1}{\mathcal{B}_2}$ we observe that they satisfy (uniform in $t$) bounds
\begin{align*}
&\frac{\mathcal{A}_1}{\mathcal{A}_2}>\frac{\zeta_1^{\frac{k+1}{2}}}{\zeta_K^{\frac{k-1}{2}}}\stackrel{def}{=}\alpha, \\
&\frac{\mathcal{B}_1}{\mathcal{B}_2}>\frac{\zeta_1^{\frac{k-1}{2}}}{\zeta_K^{\frac{k-3}{2}}}\frac{2\, \textrm{min}_j(\zeta_{j+1}-\zeta_j)^{k-3}}{(k-1)(\zeta_K-\zeta_1)^{k-1}}\frac{(1+m^2_{k'}\zeta_1)(1+m^2_{(k+1)'}\zeta_1)}{m^2_{k'} m^2_{(k+1)'}}\stackrel{def}{=}\beta, 
\end{align*}
by equations \eqref{eq:order1step3} and \eqref{eq:order2step4}, respectively, 
with the convention that $\beta=\infty$ for the special case $k=1$.  
Thus 
\begin{equation*}
\textrm{min}\{\alpha, \beta\}<\frac{\D_{k+1}^{(\frac{k+1}{2},1)}(c) \D_{k-1}^{(\frac{k-1}{2},0)}(c)}{\D_{k+1}^{(\frac{k+1}{2},0)}(c) \D_{k-1}^{(\frac{k-1}{2},1)}(c)}
\end{equation*}
holds uniformly in $t$ and 
if we impose 
\begin{equation*}
\frac{1}{m_{(k+1)'}m_{k'}}<\textrm{min}\{\alpha, \beta\}\qquad \text { for all odd }k, \qquad   1\leq k\leq 2K-1,  
\end{equation*} 
then equations \eqref{eq:order1-odd} will hold automatically.  

Now we turn to the second inequality, namely \eqref{eq:order2bis-odd}. Again, using 
Corollary \ref{cor:Dklp-c-dependence} we obtain 
\begin{equation} \label{eq:order2step3odd}
\frac{\D_{k+2}^{(\frac{k+1}{2},1)}(c) \D_{k}^{(\frac{k+1}{2},0)}(c)}{\D_{k+2}^{(\frac{k+3}{2},0)}(c) \D_{k}^{(\frac{k-1}{2},1)}(c)}>\frac {\mathcal{A}_1+\mathcal{B}_1}{\mathcal{A}_2+\mathcal{B}_2}, 
\end{equation} 
where, this time, 
\begin{align*}
&\frac{\mathcal{A}_1}{\mathcal{A}_2}>\frac{\zeta_1^{\frac{k+1}{2}}}{\zeta_K^{\frac{k-1}{2}}}=\alpha, \\
&\frac{\mathcal{B}_1}{\mathcal{B}_2}>\frac{\zeta_1^{\frac{k+1}{2}}}{\zeta_K^{\frac{k-1}{2}}}\frac{2\, \textrm{min}_j(\zeta_{j+1}-\zeta_j)^{k-1}}{(k+1)(\zeta_K-\zeta_1)^{k+1}}\frac{(1+m^2_{(k+1)'}\zeta_1)(1+m^2_{(k+2)'}\zeta_1)}{m^2_{(k+1)'} m^2_{(k+2)'}}\stackrel{def}{=}\beta_1, 
\end{align*}
and, 
\begin{equation*}
\textrm{min}\{\alpha, \beta_1\}<\frac{\D_{k+2}^{(\frac{k+1}{2},1)}(c) \D_{k}^{(\frac{k+1}{2},0)}(c)}{\D_{k+2}^{(\frac{k+3}{2},0)}(c) \D_{k}^{(\frac{k-1}{2},1)}(c)}
\end{equation*}
is satisfied.  
Thus inequality 
\begin{equation*}
\frac{1}{m_{(k+2)'}m_{(k+1)'}}<\textrm{min}\{\alpha, \beta_1\}, \qquad  \text{ for all odd } k, \qquad 1\leq k\leq 2K-1,   
\end{equation*}
implies \eqref{eq:order2bis-odd} and, consequently, \eqref{eq:order2-odd}, 
thereby completing the proof.  
 
\end{proof}

\begin{example} Let $K=1$, and $ b_1(0)=1,\ c=3,\ \zeta_1=5,\ m_1=3,\ m_2=2,\ m_3=2.2.$ Then the sufficient conditions in Theorem \ref{thm:global(odd)} are satisfied. Hence the order of $\{x_k, k = 1, 2, 3\}$ will be preserved at all time and one can use the explicit formulae for the 3-peakon solution at all time, resulting in the following sequence of graphs (Figure \ref{fig_3peakon}).
\end{example}
\begin{figure}[h!]
  \centering
  \resizebox{1.1\textwidth}{!}{
  \includegraphics{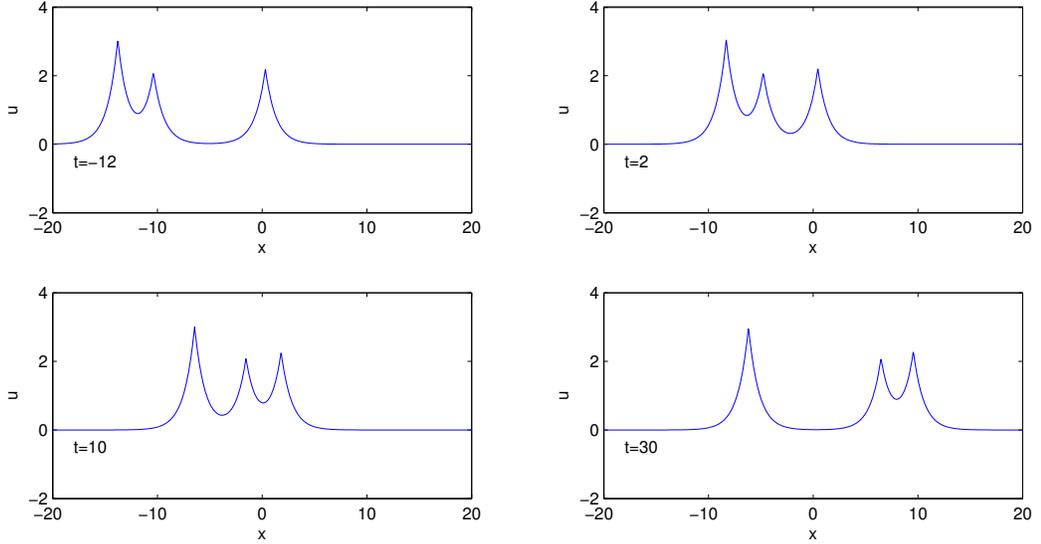}
  }
   \caption{Snapshots of $u(x,t)$ for $n=3$ at time $t=-12,\ 2,\ 10, \ 30$ in the case of $ b_1(0)=1,\ c=3,\ \zeta_1=5,\ m_1=3,\ m_2=2,\ m_3=2.2.$}
   \label{fig_3peakon}
\end{figure}

\subsection{Large time peakon asymptotics for $n=2K+1$}
We will investigate in this section the long time asymptotics 
of  global multipeakon solutions, guaranteed to exist by Theorem \ref{thm:global(odd)}.  
\begin{theorem}\label{thm:oddass} Suppose the masses $m_j$ satisfy the conditions of Theorem \ref{thm:global(odd)}.  Then the asymptotic position of a $k$-th (counting from the right) peakon as $t\rightarrow+\infty$ is given by 
\begin{subequations}
\begin{align}
&&x_{k'}&=\frac{2t}{\zeta_{\frac{k+1}{2}}}+
\ln\frac{b_{\frac{k+1}{2}}(0)\mathbf{e}_{[1,k-1]} \Delta^2_{[1,\frac{k-1}{2}],\{\frac{k+1}{2}\}}(\mathbf{\zeta})}{m_{k'} \Gamma_{[1,k], \{\frac{k+1}{2}\}}(\mathbf{e}; \mathbf{\zeta}) \mathbf{\zeta}^2_{[1,\frac{k-1}{2}]}}+\mathcal{O}(
e^{-\alpha_k t}), & \textrm{ for some positive } \alpha_k\, \textrm{ and odd } k\leq 2K-1; \\
&&x_{(2K+1)'}&=\ln \frac{c \mathbf{e}_{[1,2K]}}{m_{(2K+1)'}\mathbf{\zeta}_{[1,K]}^2} 
+\mathcal{O}(
e^{-\alpha t}), &\textrm{ for some positive } \alpha ; \\
&&x_{k'}&=\frac{2t}{\zeta_{\frac{k}{2}}}+
\ln\frac{b_{\frac{k}{2}}(0)\mathbf{e}_{[1,k-1]} \Delta^2_{[1,\frac{k}{2}-1],\{\frac{k}{2}\}}(\mathbf{\zeta})}{m_{k'} \Gamma_{[1,k-1], \{\frac{k}{2}\}}(\mathbf{e}; \mathbf{\zeta}) \mathbf{\zeta}^2_{[1,\frac{k}{2}-1]}\zeta_{\frac k2}}+\mathcal{O}(
e^{-\alpha_k t}), &\textrm{ for some positive } \alpha_k\, \textrm{ and even } k\leq 2K;\\
&&x_{k'}-x_{(k+1)'}&=\ln m_{(k+1)'}m_{k'} \zeta_{\frac{k+1}{2}}
+\mathcal{O}(e^{-\alpha_k t}),  &\textrm{ for some positive } \alpha_k\, \textrm{ and odd } k \leq 2K-1.   
\end{align}
\end{subequations}

  Likewise, as $t\rightarrow-\infty$, using the notation of Theorem \ref{thm:Dklm-peakon}, the asymptotic position of the $k$-th peakon is given by
  \begin{subequations}
  \begin{align}
  &&x_{k'}&=\frac{2t}{\zeta_{(\frac{k-1}{2})^*}}+
\ln\frac{b_{(\frac{k-1}{2})^*}(0)\mathbf{e}_{[1,k-1]} \Delta^2_{[1,\frac{k-1}{2}-1]^*,\{(\frac{k-1}{2})^*\}}(\mathbf{\zeta})}{m_{k'} \Gamma_{[1,k-1], \{(\frac{k-1}{2})^*\}}(\mathbf{e}; \mathbf{\zeta}) \mathbf{\zeta}^2_{[1,\frac{k-1}{2}-1]^*}\zeta_{(\frac {k-1}{2})^*}}+\mathcal{O}(
e^{\beta_k t}), &\textrm{ for positive } \beta_k \notag\\&&&&\textrm{ and odd  }1<k\leq 2K+1;\\
&&x_{1'}&=\ln\frac{c}{m_{1'}}+\mathcal{O}(
e^{\beta_k t}), & \textrm{ for positive } \beta_k;\\
&&x_{k'}&=\frac{2t}{\zeta_{(\frac{k}{2})^*}}+
\ln\frac{b_{(\frac{k}{2})^*}(0)\mathbf{e}_{[1,k-1]} \Delta^2_{([1,\frac{k}{2}-1])^*,\{(\frac{k}{2})^*\}}(\mathbf{\zeta})}{m_{k'} \Gamma_{[1,k], \{(\frac{k}{2})^*\}}(\mathbf{e}; \mathbf{\zeta}) \mathbf{\zeta}^2_{[1,\frac{k}{2}-1]^*}}+\mathcal{O}(
e^{\beta_k t}), &\textrm{ for positive } \beta_k\, \textrm{ and even }k; \\
&&x_{k'}-x_{(k+1)'}&=\ln m_{(k+1)'}m_{k'} \zeta_{(\frac{k}{2})^*}
+\mathcal{O}(e^{\beta_k t}), &\textrm{ for positive } \beta_k\, \textrm{ and even }k.   
\end{align}
\end{subequations}
\end{theorem}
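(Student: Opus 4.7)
The plan is to substitute the large-time asymptotic expansions of the CSV determinants $\D_k^{(l,p)}$ furnished by Theorem~\ref{thm:Dklp-peakon(odd)} into the explicit position formulas of Theorem~\ref{thm:inversex}, namely \eqref{eq:detinversexodd} for odd $k$ and \eqref{eq:detinversexeven} for even $k$.  Since $b_j(t) = b_j(0)\,e^{2t/\zeta_j}$, each product $\mathbf{b}_I$ grows (as $t\to +\infty$) or decays (as $t\to -\infty$) at the exponential rate $2\sum_{i\in I}1/\zeta_i$, and the ordering $\zeta_1<\cdots<\zeta_K$ isolates a single dominant index set in each sum over $\binom{[1,K]}{l}$ in \eqref{eq:Dklpodd1}--\eqref{eq:Dklpodd2}.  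The linear-in-$t$ part of each asymptotic formula for $x_{k'}$ is then read off from the ratio of the surviving $\mathbf{b}$-factors, while the constant part assembles itself from the residual Vandermonde and Cauchy prefactors after the common $|\Delta_{[1,k]}(\mathbf{e})|$, $\Delta^2$ and $\Gamma$ factors cancel between numerator and denominator.

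For $t\to +\infty$ the analysis is clean: whenever $l\le K$, the non-$c$ sum in \eqref{eq:Dklpodd2} dominates the $c$-sum (because $\sum_{j=1}^{l}1/\zeta_j > \sum_{j=1}^{l-1}1/\zeta_j$), so each such determinant is governed by \eqref{eq:Dklp+odd}, and the shift $c$ enters only through the single extremal case $\D_{2K+1}^{(K+1,0)}$ via \eqref{eq:Dklp+odd-c}.  For odd $k\le 2K-1$, three of the four determinants in \eqref{eq:detinversexodd} contribute $\mathbf{b}_{[1,\frac{k-1}{2}]}$ and the fourth (namely $\D_k^{(\frac{k+1}{2},0)}$) contributes $\mathbf{b}_{[1,\frac{k+1}{2}]}$, so the overall ratio reduces to $b_{(k+1)/2}(t) = b_{(k+1)/2}(0)\,e^{2t/\zeta_{(k+1)/2}}$, explaining both the $2t/\zeta_{(k+1)/2}$ exponent and the $b_{(k+1)/2}(0)$ prefactor.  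An analogous count for even $k\le 2K$ replaces $(k+1)/2$ by $k/2$.  The leftmost position $x_{(2K+1)'}$ is exceptional: the $c$-term of $\D_{2K+1}^{(K+1,0)}$ has no $\mathbf{b}$-dependence, while the three remaining determinants each supply $\mathbf{b}_{[1,K]}$ and $\mathbf{\zeta}_{[1,K]}$; after cancellation the ratio collapses to $c/\mathbf{\zeta}_{[1,K]}^2$, producing the claimed $\ln\frac{c\,\mathbf{e}_{[1,2K]}}{m_{(2K+1)'}\mathbf{\zeta}_{[1,K]}^2}$.  The pairing identities $x_{k'}-x_{(k+1)'}\to \ln m_{(k+1)'}m_{k'}\zeta_{(k+1)/2}$ then follow by a direct subtraction in which the $e^{2t/\zeta_{(k+1)/2}}$ factors cancel exactly.

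The case $t\to -\infty$ is where the main obstacle lies.  Here, because $1/\zeta_1>\cdots>1/\zeta_K$, the slowest decay of $\mathbf{b}_I$ is achieved on the reflected index set $[1,l]^* = [K-l+1,K]$, and, crucially, the relative ordering between the non-$c$ and $c$-contributions in \eqref{eq:Dklpodd2} reverses: when $p+l-1 = k-l$ the $c$-sum over $\binom{[1,K]}{l-1}$ decays strictly slower than the non-$c$ sum over $\binom{[1,K]}{l}$ and therefore dominates, as encoded in \eqref{eq:Dklp-odd2}.  For each of the four determinants appearing in \eqref{eq:detinversexodd}--\eqref{eq:detinversexeven} one must therefore decide whether $p+l-1<k-l$ (apply \eqref{eq:Dklp-odd1}) or $p+l-1=k-l$ (apply \eqref{eq:Dklp-odd2}); a parity-based case analysis then shows that after the substitution $[1,l]\mapsto [1,l]^*$ the resulting formulas are parallel to those at $+\infty$, with $\zeta_{(k+1)/2}$ replaced by $\zeta_{((k-1)/2)^*}$ (odd $k>1$) or $\zeta_{(k/2)^*}$ (even $k$).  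The exceptional formula $x_{1'}\to \ln(c/m_{1'})$ is in fact the simplest in the whole theorem: setting $k=1$ in \eqref{eq:detinversexodd} collapses the four determinants to $\D_1^{(1,0)} = \hat\mu_c(e_1)$ and $\D_1^{(0,1)} = \D_0^{(0,0)} = \D_0^{(0,1)} = 1$, giving $x_{1'} = \ln\bigl(\hat\mu_c(e_1)/m_{1'}\bigr)$, and $\hat\mu_c(e_1)\to c$ as $b_j(t)\to 0$.  The remaining technical task---bookkeeping the many Vandermonde and Cauchy factors that must cancel to reproduce the precise constants claimed---is mechanical once the correct asymptotic regime has been identified for each $\D_k^{(l,p)}$.
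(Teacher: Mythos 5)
Your overall strategy --- substituting the asymptotic evaluations of Theorem \ref{thm:Dklp-peakon(odd)} into the position formulas \eqref{eq:detinversexodd}, \eqref{eq:detinversexeven}, isolating the single dominant index set in each sum via the ordering of the eigenvalues, and treating the shift $c$ as the source of the exceptional formulas --- is exactly the paper's proof, which is stated there as a ``straightforward, but tedious, computation'' with precisely these ingredients. Your handling of the $t\to-\infty$ regime (the dominance reversal between the $c$-sum and the non-$c$ sum when $p+l-1=k-l$, per \eqref{eq:Dklp-odd2}) and of the exceptional position $x_{1'}$ is also correct.

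However, your bookkeeping for $x_{(2K+1)'}$ as $t\to+\infty$ is internally inconsistent. By \eqref{eq:Dklp+odd-c}, the exceptional determinant satisfies $\D_{2K+1}^{(K+1,0)}=c\abs{\Delta_{[1,2K+1]}(\mathbf{e})}\,\Delta^2_{[1,K]}(\mathbf{\zeta})\,\mathbf{b}_{[1,K]}\big/\Gamma_{[1,2K+1],[1,K]}(\mathbf{e};\mathbf{\zeta})$, so it \emph{does} carry the factor $\mathbf{b}_{[1,K]}$, contrary to your claim that ``the $c$-term has no $\mathbf{b}$-dependence.'' Moreover, of the other three determinants, $\D_{2K}^{(K,0)}$ supplies $\mathbf{b}_{[1,K]}$ but no $\mathbf{\zeta}_{[1,K]}$ (it has $p=0$); only the two $p=1$ determinants $\D_{2K+1}^{(K,1)}$ and $\D_{2K}^{(K,1)}$ supply $\mathbf{b}_{[1,K]}\mathbf{\zeta}_{[1,K]}$. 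Taken literally, your count gives the ratio $c/\bigl(\mathbf{b}_{[1,K]}\mathbf{\zeta}_{[1,K]}\bigr)$, a time-dependent, divergent expression --- contradicting both the theorem and your own stated conclusion. The correct count ($\mathbf{b}_{[1,K]}^2$ in both numerator and denominator of \eqref{eq:detinversexodd} with $k=2K+1$, and $\mathbf{\zeta}_{[1,K]}^2$ only in the denominator) cancels the time dependence and yields $\ln\bigl(c\,\mathbf{e}_{[1,2K]}/(m_{(2K+1)'}\mathbf{\zeta}_{[1,K]}^2)\bigr)$. This is a local slip rather than a failure of the method; once those two sentences are repaired, your proposal coincides with the paper's argument.
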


\begin{proof}
The proof is by a straightforward, but tedious, computation using the formulas 
for positions \eqref{eq:detinversexodd}, \eqref{eq:detinversexeven}, as 
well as asymptotic evaluations of determinants  presented in Theorem \ref{thm:Dklp-peakon(odd)}.  
\end{proof}
\begin{remark} 
The Toda-like sorting property can also be observed in this case by examining 
more closely the asymptotic formulae but the pairing mechanism is subtly different.  
We point out that the constant $c$ is a surrogate of an additional eigenvalue $\zeta_{K+1}=\infty$, which results in the formal asymptotic speed $0$.  
Thus for large positive times the first particle counting from the left comes to a halt, while 
the remaining $2K$ peakons form pairs, sharing the remaining $K$ eigenvalues.  
By contrast, for large, negative times, the first particle counting from the right comes to a halt, 
while the remaining peakons form pairs.  This, somewhat intricate, breaking 
of symmetry is responsible for noticeable asymmetry in the indexing 
of positions seen when one compares asymptotic formulas for $n=2K$ with 
$n=2K+1$.  
\end{remark}
We would like to finish this with one application of asymptotic formulas, valid 
for any $n$, namely we will compute 
the Sobolev $H^1$ norm of $u$ which, by a result of \cite{chang-szmigielski-short1mCH}, is time invariant.  
\begin{corollary} \label{cor:sH1norm} Suppose 
masses satisfy conditions guaranteeing the global existence of 
solutions.  Then 
\begin{equation}\label{eq:H1norm}
||u||^2_{H^1}=2\sum_{j=1}^n m_j^2 +4 \sum_{j=1}^K \frac{1}{\zeta_j}. 
\end{equation}
\end{corollary}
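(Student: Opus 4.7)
The plan is to exploit the time-invariance of $\|u\|^2_{H^1}$ (asserted via the citation to \cite{chang-szmigielski-short1mCH}) and evaluate the norm in the asymptotic regime $t\to+\infty$, where Theorems \ref{thm:evenass} and \ref{thm:oddass} make the peakon configuration essentially explicit.

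First I would derive a closed bilinear expression for the norm valid at every $t$. Since $u$ is continuous and $u-u_{xx}=2\sum_j m_j\delta_{x_j}$, testing $u-u_{xx}$ against $u$ (via integration by parts, with boundary terms vanishing because $u\to 0$ at $\pm\infty$) gives
\begin{equation*}
\|u\|^2_{H^1}=\int(u^2+u_x^2)\,dx=\int u(u-u_{xx})\,dx=2\sum_{j=1}^n m_j\,u(x_j)=2\sum_{i,j=1}^n m_im_j\,e^{-|x_i-x_j|}.
\end{equation*}
Separating diagonal from off-diagonal yields
\begin{equation*}
\|u\|^2_{H^1}=2\sum_{j=1}^n m_j^2+4\sum_{1\le i<j\le n}m_im_j\,e^{-|x_i-x_j|}.
\end{equation*}

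Next I would pass to the limit $t\to+\infty$ term by term. In both parities, the non-stationary peakons split into $K$ asymptotic pairs: for each odd $k$ in the appropriate range, peakons $k'$ and $(k+1)'$ share the asymptotic speed $2/\zeta_{(k+1)/2}$ and, by Theorems \ref{thm:evenass} and \ref{thm:oddass},
\begin{equation*}
x_{k'}-x_{(k+1)'}\;\longrightarrow\;\ln\bigl(m_{k'}m_{(k+1)'}\zeta_{(k+1)/2}\bigr),
\end{equation*}
so that the intra-pair cross term in the bilinear formula contributes
\begin{equation*}
4\,m_{k'}m_{(k+1)'}\,e^{-(x_{k'}-x_{(k+1)'})}\;\longrightarrow\;\frac{4}{\zeta_{(k+1)/2}}.
\end{equation*}
Because the asymptotic speeds $2/\zeta_1,\dots,2/\zeta_K$ are pairwise distinct, the distance between peakons belonging to different pairs grows linearly in $t$, so all inter-pair cross terms decay exponentially. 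In the odd case $n=2K+1$, the leftmost peakon at $x_1$ remains bounded (its asymptotic position is $\ln(c\,\mathbf{e}_{[1,2K]}/(m_{1}\mathbf{\zeta}_{[1,K]}^2))$) while every other $x_{k'}\to+\infty$ at a positive rate, so its cross terms with the moving peakons also vanish.

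Summing the $K$ surviving intra-pair contributions with the time-independent diagonal, and invoking the time-invariance of $\|u\|^2_{H^1}$, gives
\begin{equation*}
\|u\|^2_{H^1}=2\sum_{j=1}^n m_j^2+4\sum_{j=1}^K\frac{1}{\zeta_j}
\end{equation*}
at every $t$, proving \eqref{eq:H1norm}. The only mildly delicate point is bookkeeping in the odd case: one must match the indexing $k'=n-k+1$ to confirm that exactly $K$ pairs appear and that the single stationary peakon is disjoint from every pair, contributing $2m_1^2$ to the diagonal but nothing to $\sum 1/\zeta_j$ (consistent with the surrogate eigenvalue $\zeta_{K+1}=\infty$ carrying asymptotic speed zero).
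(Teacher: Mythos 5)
Your proposal is correct and follows essentially the same route as the paper: both start from the bilinear identity $\|u\|^2_{H^1}=2\sum_j m_j u(x_j)=2\sum_j m_j^2+4\sum_{i<j}m_im_je^{-|x_i-x_j|}$ (which the paper cites from \cite{chang-szmigielski-short1mCH} and you rederive by integration by parts), then use time-invariance together with the $t\to+\infty$ asymptotics of Theorems \ref{thm:evenass} and \ref{thm:oddass}, where only the intra-pair cross terms survive and each contributes $4/\zeta_j$. Your treatment is somewhat more careful than the paper's about why inter-pair terms vanish and about the stationary peakon in the odd case, but it is the same argument.
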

\begin{proof} 
First, as proven in \cite{chang-szmigielski-short1mCH}, 
$||u||^2_{H^1}=\sum_{j=1}^n 2m_ju(x_j)=2\sum_{j=1}^n m_j^2+
4\sum_{i<j} m_i m_j e^{x_i-x_j}$, where we 
used the ordering condition $x_i<x_{i+1}$.  Since 
$||u||^2_{H^1}$ is constant we can compute its value 
using asymptotic formulas.  Thus from the asymptotic formulas in 
Theorem \ref{thm:oddass} (or \ref{thm:evenass} in the even case)
we see that the only contribution to the last term above will come 
from pairs sharing the same asymptotic speeds.  In other words, 
\begin{equation*}
||u||^2_{H^1}=\sum_{j=1}^n 2m_ju(x_j)=2\sum_{j=1}^n m_j^2+
\lim_{t\rightarrow +\infty} 
4\sum_{i=1}^K m_{2i}m_{2i+1} e^{x_{2i}-x_{2i+1}}=2\sum_{j=1}^n m_j^2+
4\sum_{i=1}^K\frac{1}{\zeta_i}, 
\end{equation*}
again, by asymptotic formulas of Theorem \ref{thm:oddass}, \ref{thm:evenass}, 
respectively.  
\end{proof} 

\section{Acknowledgements} 
The authors thank H. Lundmark for an interesting discussion 
regarding the asymptotic behaviour of peakon solutions.    

\begin{appendix} 
\section{Lax pair for the mCH peakon ODEs}\label{lax_mch}
 
  Our technique of solving the peakon ODEs (\ref{mCH_ode}) hinges on the following steps: 
  \begin{enumerate} 
  \item associate a Lax pair to the differential equation in question; 
  \item formulate the boundary value problem compatible with the Lax pair; 
  \item define the spectral data and its time evolution; 
  \item solve the inverse problem of reconstructing 
  the $x$ component of the Lax pair; 
  \end{enumerate} 
  One of the essential challenges of this program is to construct a well defined distribution Lax pair, i.e. a distribution version of \eqref{eq:xtLax}, which is ordinarily 
  given in the smooth sector of the equation.  
   The transition from the smooth sector to the distribution sector 
   is not canonical and this appendix addresses the main steps of our construction of 
   the correct distribution Lax pair used in this paper.  
  \begin{remark}
      In fact, we started our search for a distribution Lax pair suitable for \eqref{mch_ode_GLQ}.  We were, however, led to 
      a different definition of distribution solutions to \eqref{eq:m1CH} than in \cite{gui2013wave} or \cite{qiao2012integrable}.  Even though we do not 
      have a result that would exclude \eqref{mch_ode_GLQ} as coming from a suitably defined distribution Lax pair using some other way of 
      defining the products of distributions appearing in the Lax pair we can state this: within the class 
      of possible distribution Lax pairs which we will sharply define below  
      no such a pair exists.   \end{remark}

{\bf Notations}:
\begin{itemize} 

\item $\Omega_k$: the region $x_k(t)<x<x_{k+1}(t)$, where $x_k$ are smooth functions such that $-\infty=x_0(t)<x_1(t)<\cdots<x_{n}(t)<x_{n+1}(t)=+\infty$.

 \item  $PC^\infty$: the function space consisting of all the piecewise smooth functions $f(x,t)$ such that the restriction of $f$ to each region $\Omega_k$ is a smooth function $f_k(x,t)$ defined on an open neighbourhood of $\Omega_k$. Actually, for each fixed $t$, $f(x,t)$ defines a regular distribution $T_f(t)$ in the class of $\mathcal{D}'(R)$ (for simplicity we will write $f$ instead of $T_f$ ). Note that the value of $f(x,t)$ on $x_k(t)$ does not need to be defined.

\item   $f_x(x_k-,t)$: the left limit of the function $f(x,t)$ at every point $x_k$, $f_x(x_k+,t)$: the right limit of the function $f(x,t)$ at every point $x_k$.

 \item $\jump{f}(x_k,t)$: the jump between $f_x(x_k-,t)$ and $f_x(x_k+,t)$, i.e.
\[
\jump{f}(x_k,t)=f(x_k+,t)-f(x_k-,t).
\]

 \item   $\avg{f}(x_k,t)$: the arithmetic average of $f_x(x_k-,t)$ and $f_x(x_k+,t)$, i.e.
\[\avg{f}(x_k,t)=\frac{f(x_k+,t)+f(x_k-,t)}{2}.\]

\item   $f_x, f_t$: the ordinary (classical) partial derivative with respect to $x, t$ respectively.


  \item $D_xf$: the distributional derivative with respect to $x$.

\item   $D_tf$: the distributional limit $D_tf(t)=\lim_{a\rightarrow0}\frac{f(t+a)-f(t)}{a},.$
\item we will suppress the $t$-dependence throughout the remainder of this 
Appendix; thus $[f](x_k)$ will denote $[f](x_k,t)$ etc. 
\end{itemize} 
  Then the following identities follow from elementary 
  distributional calculus
\[
D_xf=f_x+\sum_{k=1}^{n}\jump{f}(x_k)\delta_{x_k}.
\]
\[
D_tf=f_t-\sum_{k=1}^{n}\dot x_{k}\jump{f}(x_k)\delta_{x_k},
\]
where $\dot x_{k}=\frac{dx_k}{dt}$.

Moreover, we also have:
\begin{equation}\label{eq:identities}
\begin{aligned}
  &\jump{fg}=\avg{f}\jump{g}+\jump{f} \avg{g},\qquad \avg{fg}=\avg{f}\avg{g}+\frac{1}{4}\jump{f} \jump{g},\\
  &\frac{d}{dt}\jump{f} (x_k)=\jump{f_x}(x_k)\dot x_k +\jump{f_t}(x_k),\\
  &\frac{d}{dt}\avg{f}(x_k)=\avg{ f_x}(x_k)\dot x_k +\avg{f_t}(x_k), 
\end{aligned}
\end{equation}
for any $f,g\in PC^\infty$.

It is easy to see that the peakon solution $u(x,t)$ and the corresponding functions $\Psi_1,\ \Psi_2$ belong to the piecewise smooth class $PC^\infty$. Indeed $u, u_x,\Psi_1,\ \Psi_2$ are smooth functions in $x_k<x<x_{k+1}$. However, $u$ is continuous throughout $\R$; by contrast $u_x, \Psi_1, \Psi_2$ have a jump at each $x_{k}$.

Let us now set $\Psi=(\Psi_1,\Psi_2)^T$, and let us consider an overdetermined system
\begin{equation}\label{eq:DLax-pair}
D_x\Psi=\frac{1}{2}\hat L\Psi,\qquad D_t\Psi=\frac{1}{2}\hat A\Psi,
\end{equation}
where
\begin{eqnarray}
  &&\hat L=L+2\lambda\left(\sum_{k=1}^nm_{k}\delta_{x_{k}}\right)M,\label{lax_peakon1}\\
  &&\hat A=A-2\lambda\left(\sum_{k=1}^nm_{k}Q(x_{k})\delta_{x_{k}}\right)M\label{lax_peakon2}
\end{eqnarray}
with
\begin{eqnarray*}
  &&L=\left(
\begin{array}{cc}
     -1     & 0  \\
 0&    1 \\
\end{array}
\right),
\qquad M=\left(
\begin{array}{cc}
     0     & 1  \\
 -1&    0 \\
\end{array}
\right),
 \qquad A=\left( \begin{array}{cc}
4\lambda^{-2}+Q              &  -2\lambda^{-1}(u-u_x)\\
2\lambda^{-1}(u+u_x) & -Q
\end{array} \right) 
\end{eqnarray*}
and $Q=u^2-u_x^2$. Note that in view of (\ref{lax_peakon1}) the $x$-member of the Lax equation \eqref{eq:DLax-pair} involves multiplying $M\Psi=(\Psi_2,-\Psi_1)$ by $\delta_{x_{k}}$. Thus we have to assign some values to $\Psi_1,\Psi_2$ at $x_k$. Likewise, for the $t$-Lax equation (\ref{lax_peakon2}) to be defined as a distribution equation, $u_x^2M\Psi=(u_x^2\Psi_2,-u_x^2\Psi_1)$ needs to be a multiplier of $\delta_{x_{k}}$. Thus the values of $u_x^2(x_{k})$ need to be assigned as well. 
Henceforth, we will refer to these assignments as \textit{regularizations}.   
The compatibility condition $(D_xD_t-D_tD_x)\Psi=0$ is a geometric 
condition (the zero curvature condition), and can be written as 
\begin{equation*}
\big(D_x(\hat A)-D_t(\hat L)+\frac12[\hat A, \hat L]\big)\Psi=0, 
\end{equation*}
whose invariance includes the transformations $\Psi \rightarrow \Psi_R=R \Psi, 
R\in GL(2,\R)$.  These transformations leave the singular support of $m$ invariant, 
and we require that the assignment of values to $\Psi$ on the singular support 
respects that symmetry.  Thus 
we postulate that for every $x_k$ 
\begin{equation*}
\Psi_R(x_k)=R\Psi(x_k),  \qquad R\in GL(2,\R). 
\end{equation*}
Furthermore we consider local regularizations, depending only on 
the right and left hand limits at the points of singular support.  
In summary we consider regularizations of the form: 
\begin{equation}\label{eq:localreg} 
\Psi(x_k)=\alpha \jump{\Psi}(x_k)+\beta\avg{\Psi}(x_k), \qquad \alpha, \beta \in GL(2,\R), 
\end{equation}
which lead, under the invariance assumption, to the condition: 
\begin{equation*}
\Psi_R(x_k)=\alpha \jump{\Psi_R}(x_k)+\beta\avg{\Psi_R}(x_k)=R\big(
\alpha \jump{\Psi}(x_k)+\beta\avg{\Psi}(x_k)\big)
\end{equation*}
valid for every $R\in GL(2,\R)$ and resulting in the intertwining conditions
\begin{equation}
{\mathbf{\alpha}}R=R\alpha, \quad \beta R=R\beta, \qquad \forall R\in GL(2,\R).   
\end{equation}
Consequently, by Schur's Lemma $\alpha$ and $\beta$ are scalar matrices.  This 
motivates the next definition.  
\begin{definition} \label{def:invariantreg}
An invariant regularization of 
the Lax pair \eqref{eq:DLax-pair} is given by specifying 
the values of $\alpha, \beta \in \R$ and $Q(x_k)=(u^2 -u_x^2)(x_k)$ in the formulas
below
\begin{equation*}
\begin{aligned}
\Psi(x)\delta _{x_k}&=\Psi(x_k) \delta_{x_k}, \\
\Psi(x_k)&=\alpha \jump{\Psi}(x_k)+\beta\avg{\Psi}(x_k),  \\
Q(x)\delta_{x_k}&=Q(x_k) \delta_{x_k}. 
\end{aligned}
\end{equation*}
\end{definition}

\begin{theorem}\label{thm:invreg}
Let $m$ be the discrete measure associated to $u$ defined by \eqref{eq:peakonansatz}.
Given an invariant regularization in the sense of \ref{def:invariantreg} the 
distributional Lax pair \eqref{eq:DLax-pair} is compatible, i.e. $D_tD_x\Psi=D_xD_t\Psi$,  if and only if 
the following conditions hold: 
\begin{subequations} 
\begin{align} 
\beta^2&=4 \alpha^2, \label{eq:betaalpha}\\
\beta &=1,  \label{eq:beta} \\
Q(x_k)&=\avg{Q}(x_k), \label{eq:Qvalue}\\
\dot m_k&=0, \label{eq:dotmk}\\
\dot x_k&=Q(x_k). \label{eq:dotxk}
\end{align}
\end{subequations}

\end{theorem}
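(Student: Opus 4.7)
The strategy is to decompose each of the two distributional Lax equations into a smooth part supported on the open intervals $\Omega_k$ and a singular part supported at the nodes $x_k$, apply the invariant regularization to interpret every ill-defined product, and then impose the mixed compatibility $D_tD_x\Psi=D_xD_t\Psi$ to extract the five claimed conditions. First I would write $D_x\Psi=\Psi_x+\sum_k\jump{\Psi}(x_k)\delta_{x_k}$ and equate it with $\tfrac12\hat L\Psi=\tfrac12 L\Psi+\lambda\sum_k m_k M\Psi(x_k)\delta_{x_k}$. The smooth parts yield the trivial constant-coefficient equation $\Psi_x=\tfrac12 L\Psi$ on each $\Omega_k$, while matching the $\delta_{x_k}$-coefficients and inserting $\Psi(x_k)=\alpha\jump{\Psi}(x_k)+\beta\avg{\Psi}(x_k)$ gives the jump relation $(I-\lambda m_k\alpha M)\jump{\Psi}(x_k)=\lambda m_k\beta M\avg{\Psi}(x_k)$. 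The same decomposition of the $t$-equation, now with $D_t\Psi=\Psi_t-\sum_k\dot x_k\jump{\Psi}(x_k)\delta_{x_k}$, produces $\Psi_t=\tfrac12 A\Psi$ on each $\Omega_k$ together with the singular identity $\dot x_k\jump{\Psi}(x_k)=\lambda m_k Q(x_k)M\Psi(x_k)=Q(x_k)\jump{\Psi}(x_k)$, where in the last step I used the $x$-jump relation itself; this comparison immediately forces $\dot x_k=Q(x_k)$, which is \eqref{eq:dotxk}.

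Next I would turn to the mixed compatibility $D_tD_x\Psi=D_xD_t\Psi$. The smooth part reduces on each $\Omega_k$ to the classical zero-curvature identity, which holds trivially since $m=0$ in the interior of every $\Omega_k$ and the peakon ansatz satisfies the smooth mCH equation there. The nontrivial content resides in the $\delta_{x_k}$-coefficients: using the calculus identities of \eqref{eq:identities}, namely $\frac{d}{dt}\jump{\Psi}(x_k)=\jump{\Psi_x}(x_k)\dot x_k+\jump{\Psi_t}(x_k)$, its average analogue, and $\jump{fg}=\avg{f}\jump{g}+\jump{f}\avg{g}$, I would reduce the singular balance at each node to a Laurent-polynomial matrix identity in $\lambda$ coupling $\jump{\Psi}(x_k)$ and $\avg{\Psi}(x_k)$. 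Extracting coefficients in $\lambda$: the $\lambda^0$ part forces $\dot m_k=0$; the $\lambda^{\pm 1}$ part, after eliminating $\jump{\Psi}$ via the jump relation and invoking $M^2=-I$, forces the regularization of the otherwise ill-defined product $u_x^2 m$ to satisfy $Q(x_k)=\avg{Q}(x_k)$, giving \eqref{eq:Qvalue}; and the requirement that the coupled $2\times 2$ system relating $\jump{\Psi}$ and $\avg{\Psi}$ admit a nontrivial solution for every $\lambda$ yields $\beta^2=4\alpha^2$ and, after normalizing the regularization map (so that $\avg{\Psi}$ is recovered when there is no jump), pins down $\beta=1$. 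For the converse direction it suffices to substitute \eqref{eq:betaalpha}--\eqref{eq:dotxk} back into the two Lax equations and verify that each $\delta_{x_k}$-coefficient balances.

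The main obstacle will be the algebraic disentanglement in the mixed-derivative step: the singular content at each $x_k$ is a $2\times 2$ matrix identity whose coefficients are Laurent polynomials in $\lambda$ spanning several orders, and every one of these must hold simultaneously for all $\lambda\in\C$ if $\Psi$ is to be a nontrivial Lax eigenfunction. Careful exploitation of $M^{-1}=-M$ together with the product rules in \eqref{eq:identities} is essential to avoid introducing spurious constraints, and one must track throughout which factors in $\hat A$ represent regularized values of genuinely ill-defined products (notably $u_x^2$ evaluated at the nodes) and which are ordinary left or right limits of piecewise smooth functions; it is precisely this bookkeeping that isolates the canonical choice $Q(x_k)=\avg{Q}(x_k)$ as the unique invariant regularization of $u_x^2 m$ compatible with the Lax pair.
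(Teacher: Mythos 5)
Your overall architecture --- split each Lax equation into its smooth part on the $\Omega_k$ and its singular part at the nodes, insert the invariant regularization of Definition \ref{def:invariantreg}, and read off conditions power by power in $\lambda$ --- is the same as the paper's, and your route to \eqref{eq:dotxk} is correct: matching the $\delta_{x_k}$-coefficients of the $t$-equation gives $\dot x_k\jump{\Psi}(x_k)=\lambda m_kQ(x_k)M\Psi(x_k)$, which combined with the jump relation $\jump{\Psi}(x_k)=\lambda m_kM\Psi(x_k)$ from the $x$-equation forces $\dot x_k=Q(x_k)$; the paper extracts the same condition from the $\delta_{x_k}'$-coefficients of the mixed compatibility, with identical ingredients.

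There is, however, a genuine gap in how you obtain \eqref{eq:beta} and \eqref{eq:betaalpha}. You ``pin down $\beta=1$'' by \emph{normalizing} the regularization map so that $\avg{\Psi}$ is recovered when there is no jump. That is an external convention, not a consequence of $D_tD_x\Psi=D_xD_t\Psi$; since \eqref{eq:beta} is one of the five conditions asserted to be \emph{equivalent} to compatibility, the necessity direction for it is simply absent from your argument. In the paper's computation $\beta=1$ is forced: the $\lambda^{-1}$ entries of $A$, namely $-2\lambda^{-1}(u-u_x)$ and $2\lambda^{-1}(u+u_x)$, produce a $\lambda^{-1}$ term in the $\delta_{x_k}$-balance whose matching requires $\jump{u_x}(x_k)/\beta=-2m_k$, and since the peakon ansatz gives $\jump{u_x}(x_k)=-2m_k$ identically, $\beta=1$ follows. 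Similarly, your proposed mechanism for \eqref{eq:betaalpha} --- that the coupled system relating $\jump{\Psi}$ and $\avg{\Psi}$ must admit a nontrivial solution for every $\lambda$ --- cannot work: that system, $(I-\lambda m_k\alpha M)\jump{\Psi}(x_k)=\lambda m_k\beta M\avg{\Psi}(x_k)$, is uniquely solvable for $\jump{\Psi}(x_k)$ for \emph{any} $\alpha,\beta$ at generic $\lambda$ (since $M^2=-I$ makes $I-\lambda m_k\alpha M$ invertible), so no rank or determinant obstruction ever appears there. The condition $\beta^2=4\alpha^2$ actually arises when one differentiates the regularization in time: writing $\frac{d}{dt}\Psi(x_k)=\alpha\frac{d}{dt}\jump{\Psi}(x_k)+\beta\frac{d}{dt}\avg{\Psi}(x_k)$ and using the product rules of \eqref{eq:identities} together with the jump relation produces the coefficient $\bigl(\frac{\beta}{4}-\frac{\alpha^{2}}{\beta}\bigr)$ multiplying $\jump{A}(x_k)$-terms in the $\delta_{x_k}$-balance, and it is the vanishing of precisely this coefficient that yields \eqref{eq:betaalpha}. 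Finally, a symptom that the balance was not actually computed: $\dot m_k$ enters only through $\lambda\,\dot m_kM\Psi(x_k)\delta_{x_k}$ (the singular part of $\hat L$ carries an explicit factor $\lambda$), so \eqref{eq:dotmk} and \eqref{eq:Qvalue} emerge \emph{together} from the $\lambda^{1}$ coefficient, as the pair $\dot m_k=\pm m_k\bigl(Q(x_k)-\avg{Q}(x_k)\bigr)$ coming from different matrix entries --- not from the $\lambda^0$ and $\lambda^{\pm1}$ coefficients separately, as you claim.
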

\begin{proof}
The proof proceeds in a similar way to Theorem B.1 in \cite{hls} (also see \cite{chang-hu-szmigielski}). We highlight the critical steps 
of the proof.  First, we observe that since we are interested only 
in the behaviour of Lax pairs around the singular points $x_k$ we can 
\textit{localize} our computations to be carried out only locally 
on some open neighbourhoods of these points.  Moreover, these computations 
look identical, regardless of the index $k$.  In other words, 
without loss of generality we can assume $u(x)=m_1e^{-\abs{x-x_1}}$ for the sake 
of the computation, thus using $n=1$, and then in the final step of the proof 
pass to a general $n$.  
With this simplification in mind, assuming invariant 
regularization \ref{def:invariantreg}, we write equation \eqref{eq:DLax-pair} as
\begin{align*}
D_x\Psi&=\frac12 L \Psi + \lambda m_1 M \Psi(x_1) \delta_{x_1}, \\
D_t\Psi &=\frac12 A \Psi -\lambda m_1 Q(x_1)M\Psi(x_1) \delta_{x_1}. 
\end{align*}
In particular, the first equation implies
\begin{equation}\label{eq:jumpPsi}
\jump{\Psi}(x_1)=\lambda m_1 M \Psi(x_1). 
\end{equation} 
The computation of the distribution compatibility condition $D_x D_t \Psi=
D_t D_x \Psi$ produces a distribution condition which can be split 
into the regular and singular parts.  The regular part is just the compatibility 
condition one gets in the smooth sector of the equation and we omit that.  
The singular part takes the form: 
\begin{equation*}
\begin{split}
&[\frac12 A \Psi](x_1)\delta_{x_1}-\lambda m_1 Q(x_1)M\Psi(x_1) \delta_{x_1}'
=\\
&-\frac{\lambda}{2} m_1 Q(x_1)\, L M\Psi(x_1) \delta_{x_1}+\lambda\big(\dot m_1M\Psi(x_1)+
\ m_1 M \dot\Psi(x_1) \big)\delta_{x_1}-\lambda m_1 M \Psi(x_1)\dot x_1 \delta_{x_1}'. 
\end{split}
\end{equation*}
The coefficients at $\delta_{x_1}'$ imply equation \eqref{eq:dotxk}, while 
the coefficients at $\delta_{x_1}$ give the condition: 
\begin{equation}\label{eq:intermZCC}
[\frac12 A \Psi](x_1)=-\frac{\lambda}{2} m_1 Q(x_1)\, L M\Psi(x_1)+\lambda\big(\dot m_1M\Psi(x_1)+
\ m_1 M \dot\Psi(x_1) \big).  
\end{equation}
Since the value of $\Psi(x_1)$ is determined uniquely once 
the coefficients $\alpha$ and $\beta$ are chosen and the 
values of $Q(x_k)$ are assigned  (fixing a regularization) we can compute 
the term $\dot \Psi(x_1)$ appearing in \eqref{eq:intermZCC} with the help of equations \eqref{eq:identities}, \eqref{eq:jumpPsi}, the definition \ref{def:invariantreg},  and \eqref{eq:dotxk}.  After several intermediate elementary steps we obtain: 
\begin{equation}\label{eq:dotPsi}
\dot \Psi(x_1)=\big\{\avg{\frac A2}(x_1)+\frac\alpha\beta\jump{\frac A2}(x_1)+ 
\lambda m_1 (\frac \beta4-\frac{\alpha^2}{\beta})\jump{\frac A2}(x_1) M+\dot x_1 \frac L2\big\}\Psi(x_1). 
\end{equation}
Likewise, we can express the right hand side of \eqref{eq:intermZCC} by using 
\eqref{eq:identities}, \eqref{eq:jumpPsi} and \ref{def:invariantreg}. Again, 
after some straightforward computations we obtain: 
\begin{equation} \label{eq:LHS}
\jump{\frac A2 \Psi}(x_1)=\big\{\avg{\frac A2}(x_1) \lambda m_1 M+ 
\jump{\frac A2}(x_1) \frac{1-\alpha \lambda m_1 M}{\beta} \big \} \Psi(x_1),  
\end{equation}
which finally gives us the compatibility condition we have set out to obtain: 
\begin{equation} \label{eq:ZCC}
\begin{split}
&\lambda m_1\avg{\frac A2}(x_1)  M+ 
\jump{\frac A2}(x_1) \frac{1-\alpha \lambda m_1 M}{\beta}\\
&=-\lambda m_1 Q(x_1)\frac12 L M+\lambda\dot m_1M+
\lambda \ m_1 M \big\{\avg{\frac A2}(x_1)+\frac\alpha\beta\jump{\frac A2}(x_1)+ 
\lambda m_1 (\frac \beta4-\frac{\alpha^2}{\beta})\jump{\frac A2}(x_1) M+Q(x_1) \frac L2\big\}.  
\end{split}
\end{equation} 
We now summarize the content of \eqref{eq:ZCC}, broken down 
according to powers of $\lambda$, omitting conditions identically satisfied, 
\begin{enumerate}
\item $\lambda^{-1}: \quad \frac{\jump{u_x}(x_1)}{\beta}=-2m_1$
\item $\lambda^{1}: \quad \dot m_1=m_1\big(Q(x_1)-\avg{Q}(x_1)\big), \quad 
\dot m_1=-m_1\big(Q(x_1)-\avg{Q}(x_1)\big)$, 
\item $\lambda^2: \quad \frac \beta4-\frac{\alpha^2}{\beta}=0$,  
\end{enumerate} 
which imply claims \eqref{eq:beta}, \eqref{eq:Qvalue}, \eqref{eq:dotmk}, \eqref{eq:betaalpha} after restoring the number of singular points to $n$.  

\end{proof}

\begin{corollary}\label{cor:invreg}
There are only two invariant regularizations of the Lax pair \eqref{eq:xtLax} for the peakon problem of 
the mCH equation \eqref{eq:m1CH}: 
\begin{equation}
\Psi(x_k)=\Psi(x_k+), \qquad \textrm{or} \qquad \Psi(x_k)=\Psi(x_k-).  
\end{equation}
For either of the two regularizations $u_x^2(x_k)=\avg{u_x^2}(x_k)$ and in both cases the equations of motion read: 
\begin{equation}
\dot m_k=0, \qquad \dot x_k=u^2(x_k)-\avg{u_x^2}(x_k). 
\end{equation}
\end{corollary}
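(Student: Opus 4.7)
The plan is essentially to unpack the algebraic conclusions of Theorem \ref{thm:invreg} and translate them back into the language of Definition \ref{def:invariantreg}. Theorem \ref{thm:invreg} forces $\beta=1$ and $\beta^{2}=4\alpha^{2}$, so I would begin by solving this little system: $\alpha=\pm\tfrac{1}{2}$, giving exactly two admissible pairs $(\alpha,\beta)$.

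Next I would substitute each pair into the regularization formula $\Psi(x_k)=\alpha\jump{\Psi}(x_k)+\beta\avg{\Psi}(x_k)$ and use the elementary identities $\jump{\Psi}(x_k)=\Psi(x_k+)-\Psi(x_k-)$ and $\avg{\Psi}(x_k)=\tfrac{1}{2}\bigl(\Psi(x_k+)+\Psi(x_k-)\bigr)$. A direct calculation yields
\begin{equation*}
\tfrac{1}{2}\jump{\Psi}(x_k)+\avg{\Psi}(x_k)=\Psi(x_k+),\qquad -\tfrac{1}{2}\jump{\Psi}(x_k)+\avg{\Psi}(x_k)=\Psi(x_k-),
\end{equation*}
which identifies the two regularizations as the right-continuous and left-continuous choices respectively. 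This is the only nontrivial algebraic step, and it is routine.

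To produce the expression for $u_x^{2}(x_k)$, I would exploit the continuity of $u$ (guaranteed by the peakon ansatz \eqref{eq:peakonansatz}): since $\jump{u}(x_k)=0$, one has $\avg{u^{2}}(x_k)=u^{2}(x_k)$ by the second identity in \eqref{eq:identities}. Combining this with $Q(x_k)=\avg{Q}(x_k)$ from \eqref{eq:Qvalue} and $Q=u^{2}-u_x^{2}$ gives
\begin{equation*}
u^{2}(x_k)-u_x^{2}(x_k)=Q(x_k)=\avg{Q}(x_k)=\avg{u^{2}}(x_k)-\avg{u_x^{2}}(x_k)=u^{2}(x_k)-\avg{u_x^{2}}(x_k),
\end{equation*}
so $u_x^{2}(x_k)=\avg{u_x^{2}}(x_k)$, and hence $Q(x_k)=u^{2}(x_k)-\avg{u_x^{2}}(x_k)$. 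Inserting this into the ODE $\dot{x}_k=Q(x_k)$ from \eqref{eq:dotxk}, together with $\dot{m}_k=0$ from \eqref{eq:dotmk}, finishes the derivation of the equations of motion.

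There is no real obstacle here: once Theorem \ref{thm:invreg} is granted, the corollary is pure bookkeeping. The only subtle point worth flagging explicitly in the write-up is why the two conditions on $\alpha,\beta$ yield only \emph{two} regularizations rather than a one-parameter family; this is because $\beta$ is pinned down to $1$ (not merely nonzero), so $\alpha^{2}=1/4$ has only two real solutions.
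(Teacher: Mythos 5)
Your proposal is correct and matches the paper's intent exactly: the paper states Corollary \ref{cor:invreg} without a separate proof, treating it as immediate bookkeeping from Theorem \ref{thm:invreg}, which is precisely what you carry out (solving $\beta=1$, $\beta^2=4\alpha^2$ to get $\alpha=\pm\tfrac12$, identifying the right/left-continuous choices, and using continuity of $u$ together with $Q(x_k)=\avg{Q}(x_k)$ to obtain $u_x^2(x_k)=\avg{u_x^2}(x_k)$ and the ODEs). Nothing is missing.
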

\begin{remark} 
In the body of the paper we use both regularizations to define the right and the 
left boundary value problems. 
\end{remark} 
\begin{remark} Observe that one does not need to specify the values 
of $u_x(x_k)$.  
\end{remark}

\end{appendix}

\def\cydot{\leavevmode\raise.4ex\hbox{.}}
  \def\cydot{\leavevmode\raise.4ex\hbox{.}}

\end{document}